\documentclass[11pt]{article}
\usepackage{amssymb, amsmath, amsthm}
\usepackage{natbib}
\usepackage[colorlinks]{hyperref}
\usepackage{comment,url,algorithm,algorithmic,graphicx,subcaption,relsize}
\usepackage{amsfonts,mathtools,mathrsfs,mathtools,color}
\usepackage{xcolor}
\usepackage{array}

\makeatletter
\newtheorem*{rep@theorem}{\rep@title}
\newcommand{\newreptheorem}[2]{%
\newenvironment{rep#1}[1]{%
 \def\rep@title{#2 \ref{##1}}%
 \begin{rep@theorem}}%
 {\end{rep@theorem}}}
\makeatother

\newtheorem{definition}{Definition}
\newtheorem{theorem}{Theorem}
\newtheorem*{theorem*}{Theorem}
\newtheorem{lemma}{Lemma}

\newtheorem{remark}{Remark}
\newreptheorem{theorem}{Theorem}
\newreptheorem{lemma}{Lemma}

\newcommand{\B}{\mathcal{B}}
\newcommand{\E}{\mathbb{E}}
\renewcommand{\H}{\mathcal{H}}
\newcommand{\I}{\mathcal{I}}

\newcommand{\N}{\mathcal{N}}
\newcommand{\bN}{\mathbb{N}}
\renewcommand{\P}{\mathcal{P}}
\newcommand{\R}{\mathbb{R}} 
\newcommand{\sR}{\mathsf{R}}

\newcommand{\V}{\mathcal{V}}
\newcommand{\X}{\mathcal{X}}
\newcommand{\Y}{\mathcal{Y}}
\newcommand{\Z}{\mathcal{Z}}

\newcommand{\Cov}{\mathsf{Cov}}
\newcommand{\Var}{\mathsf{Var}}
\newcommand{\KL}{\mathsf{KL}}
\newcommand{\TV}{\mathsf{TV}}
\newcommand{\op}{\mathsf{op}}
\newcommand{\one}{\mathbf{1}}
\newcommand{\Vol}{\mathsf{Vol}}
\newcommand{\dist}{\mathsf{dist}}
\newcommand{\area}{\mathsf{area}}
\newcommand{\CPI}{C_{\textsf{PI}}}
\newcommand{\Succ}{\mathsf{Succ}}

\newcommand{\CCh}{C_{\textsf{Ch}}}

\renewcommand{\d}{\mathrm{d}}
\newcommand{\du}{\mathrm{d}u}
\newcommand{\dr}{\mathrm{d}r}
\newcommand{\ds}{\mathrm{d}s}
\newcommand{\dt}{\mathrm{d}t}
\newcommand{\dx}{\mathrm{d}x}
\newcommand{\dy}{\mathrm{d}y}

\title{The Geometry of Efficient Nonconvex Sampling}

\author{Santosh S. Vempala\thanks{Georgia Institute of Technology, College of Computing. Email: \texttt{vempala@gatech.edu}. This work was supported in part by NSF award CCF-2504994 and a Simons Investigator award.} 
\and
Andre Wibisono\thanks{Yale University, Department of Computer Science. Email: \texttt{andre.wibisono@yale.edu}. This work was supported by NSF awards CCF-2403391 and CAREER CCF-2443097.}}

\begin{document}

\maketitle

\vspace{1in}

\begin{abstract}%
    We present an efficient algorithm for uniformly sampling from an arbitrary compact body $\X \subset \R^n$ from a warm start under isoperimetry and a natural volume growth condition.
    Our result provides a substantial common generalization of known results for convex bodies and star-shaped bodies. 
    The complexity of the algorithm is polynomial in the dimension, the Poincar\'e constant of the uniform distribution on $\X$ and the volume growth constant of the set $\X$. %
\end{abstract}

\newpage
\setcounter{tocdepth}{2}
\tableofcontents

\section{Introduction}
\label{Sec:Intro}

Sampling from a bounded set $\X$ in a high-dimensional space $\R^n$ is a classical problem with connections to many topics in mathematics and theoretical computer science. 
The literature for the problem is largely based on the theory of sampling from a \textit{convex} body $\X$ (see Figure~\ref{Fig:Example}(a) for an example).
The celebrated work of Dyer, Frieze and Kannan~\citep{DFK91random} showed that convex bodies $\X$ can be sampled efficiently, i.e., in time polynomial in the dimension $n$, with only membership oracle access to the body $\X$ and a starting point inside the body. 
\citet{AK91sampling} extended  the frontier of polynomial-time sampling to the class of logconcave probability distributions, which can be viewed as the functional generalization of convex sets.
Subsequent works~\citep{LS90mixing,DyerF90,LS93,KLS97,LV07geometry,LV06hit,kookV2024} improved the complexity by introducing new ideas and refining the analyses.
The state-of-the-art result by Kook, Vempala and Zhang~\citep{kookV2024,kook2026and} provides an algorithm (\textbf{In-and-Out}) with an $\tilde O(n^2)$ iteration complexity for sampling from a (near-)isotropic \textit{convex} body $\X \subset \R^n$ with a warm start, where each iteration uses one call to the membership query to $\X$ and $O(n)$ arithmetic operations, with a guarantee in R\'enyi divergence. 
See also related work of~\citep{jia2026} for isotropic transformation, ~\citep{kookV2025} for sampling from general logconcave distributions, and~\citep{kook2025faster} for improved complexity of generating a warm start.

However, few results are known about the algorithmic complexity of sampling from \textit{nonconvex} bodies.
The work of Chandrasekaran, Dadush and Vempala~\citep{CDV10} showed that \textit{star-shaped} bodies $\X$ (see Figure~\ref{Fig:Example}(b) for an example) can be sampled with iteration complexity polynomial in the dimension $n$, in the inverse fraction of volume taken up by the convex core (the non-empty subset of the star-shaped body that can ``see'' all points in the body), and in $\varepsilon^{-1}$ where $\varepsilon > 0$ is the final error parameter in total variation distance specified in the input.
Beyond this result, we are not aware of any rigorous results on the algorithmic complexity of uniformly sampling nonconvex bodies. 
Indeed, sampling from nonconvex sets is intractable in the worst case~\citep{koutis2003hardness}.
Nevertheless, there are many ``reasonable'' nonconvex sets---such as those depicted in Figures~\ref{Fig:Example}(c) and~\ref{Fig:Example}(d)---that one should be able to sample in polynomial time, but such a statement does not follow from the existing theory, since the sets are not star-shaped.

\begin{figure}[t!]
    \centering
    \subfloat[Convex]{\includegraphics[width=.22\textwidth]{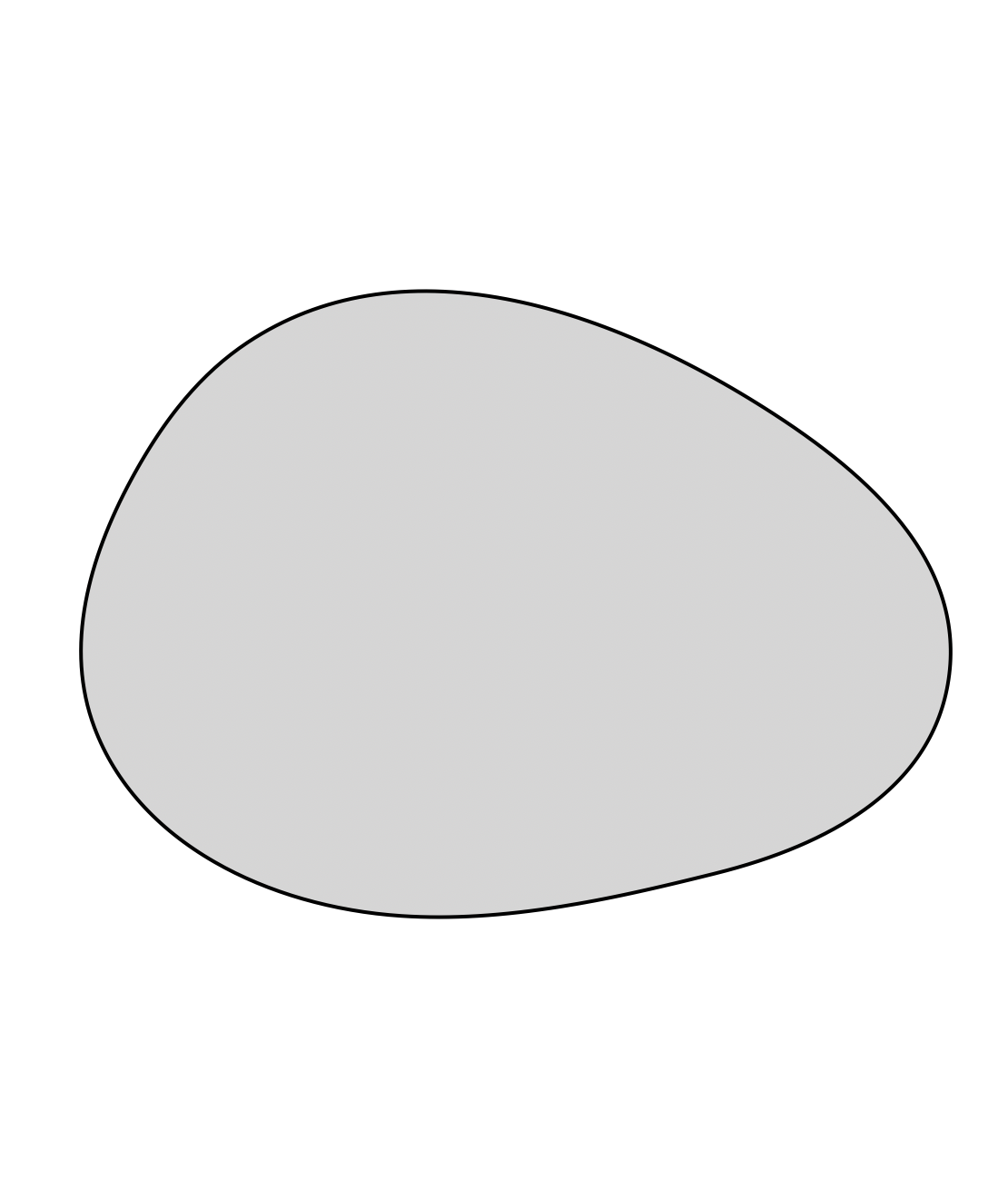}}
    \hfill
    \subfloat[Star-shaped]{\includegraphics[width=.22\textwidth]{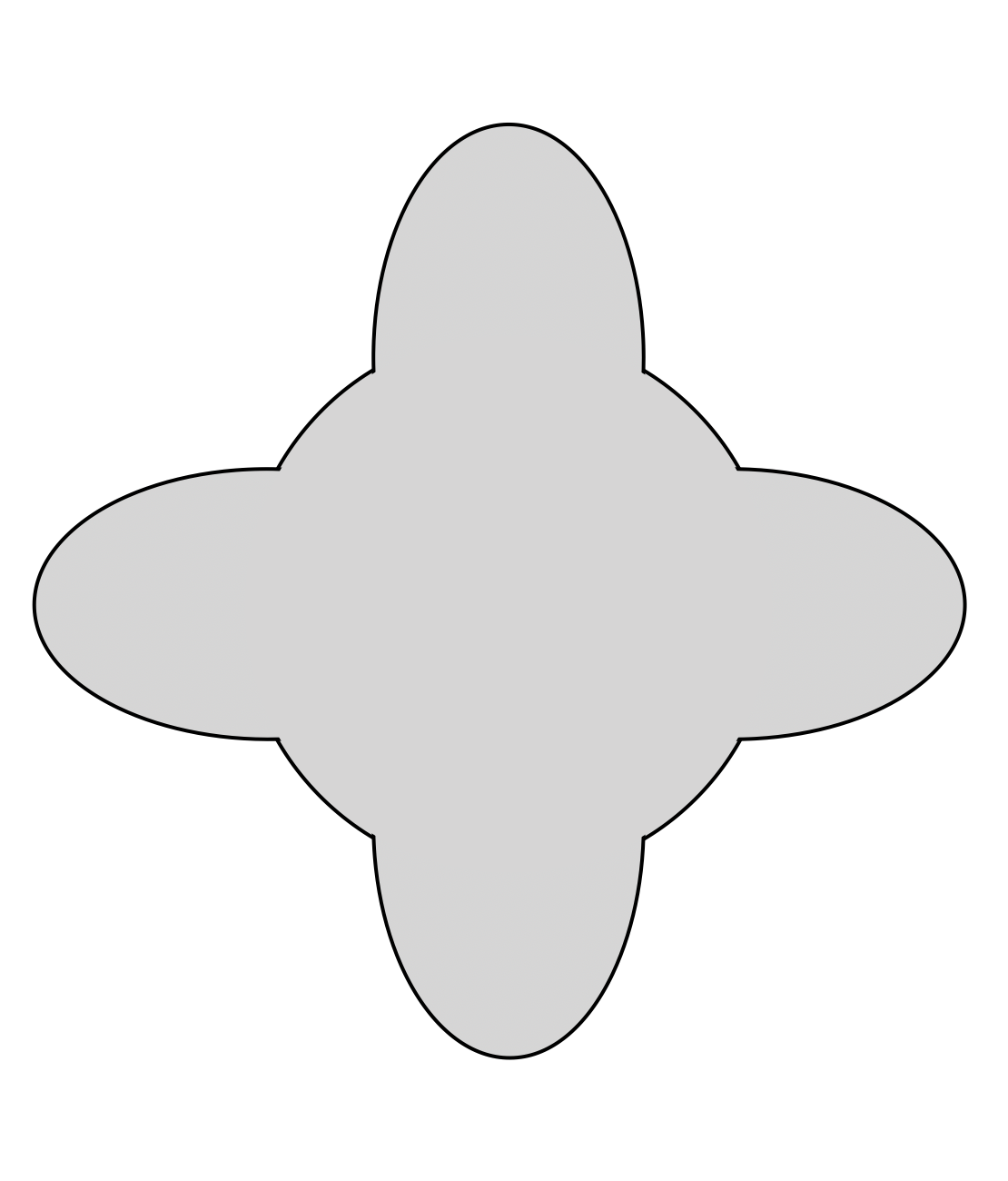}}
    \hfill
    \subfloat[Nonconvex with a hole]{\includegraphics[width=.22\textwidth]{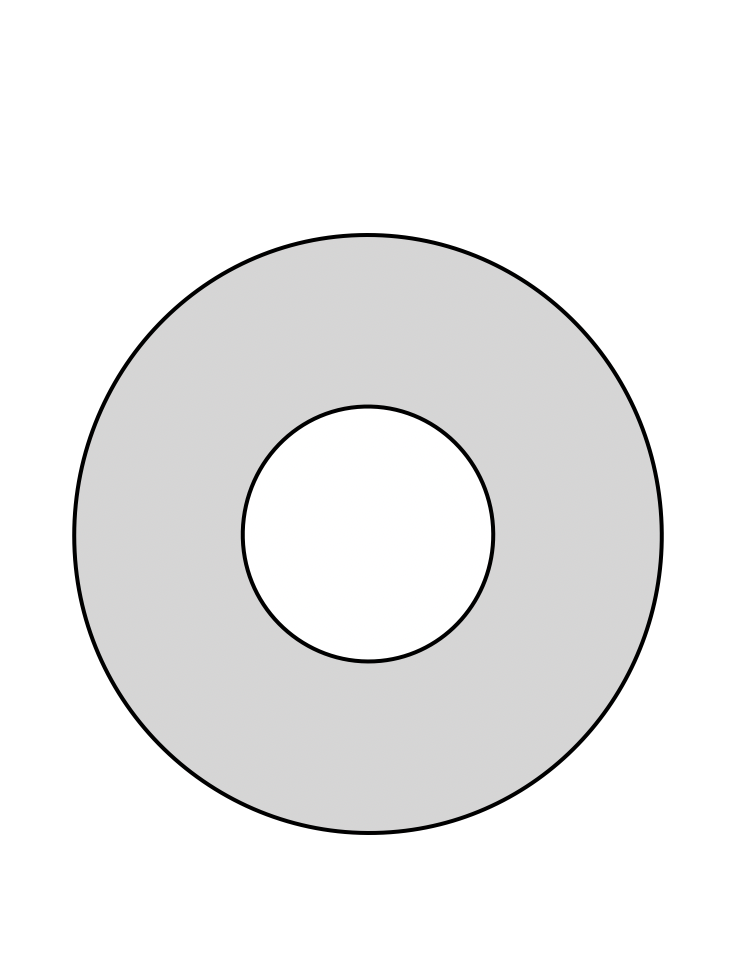}}
    \hfill    
    \subfloat[Non-star-shaped]{\includegraphics[width=.22\textwidth]{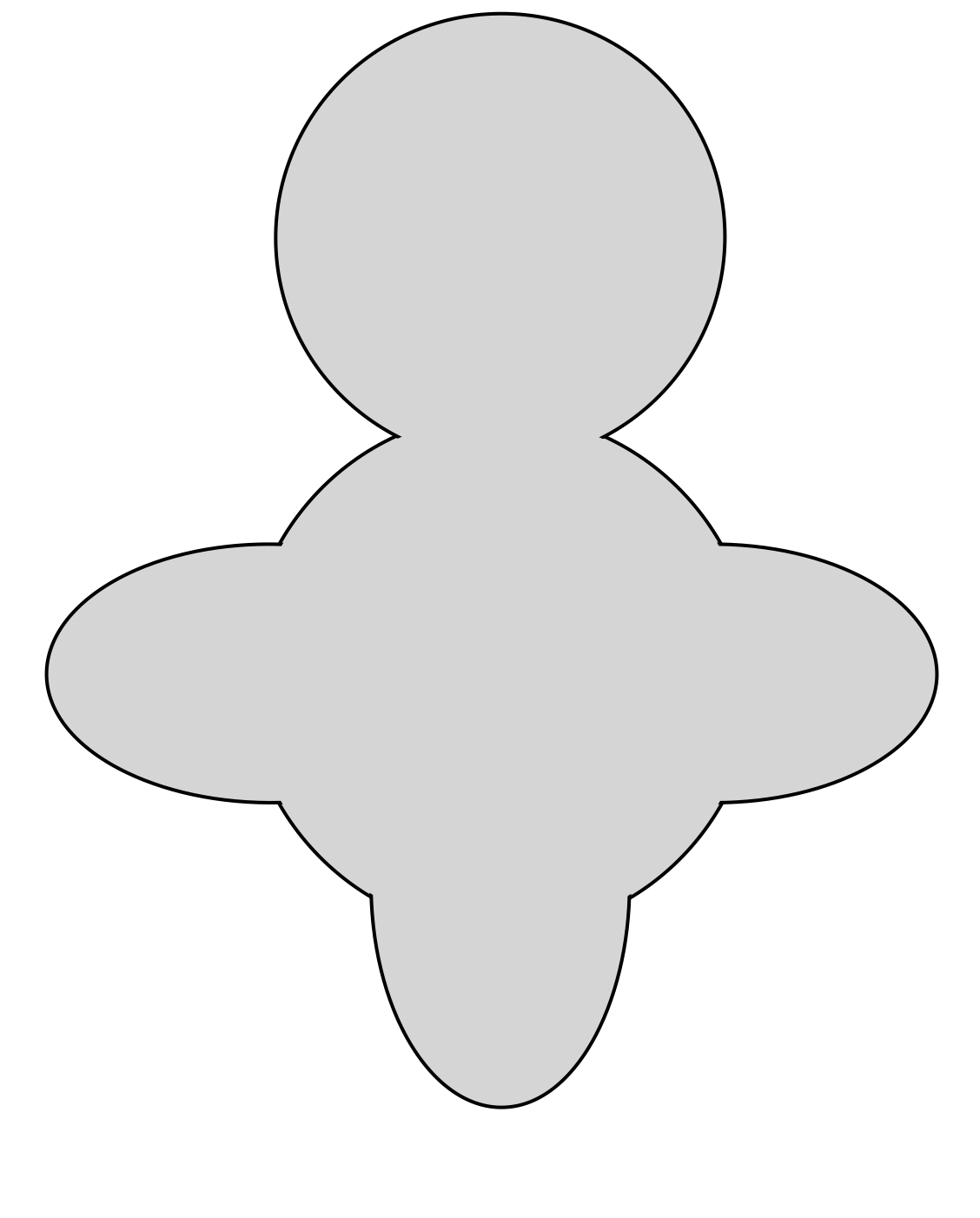}}
    \hfill
    \caption{Examples of bodies $\X \subset \R^n$ that current theory covers ((a) convex and (b) star-shaped); and examples of $\X$ that current theory does not cover ((c) with a hole and (d) not star-shaped).}
    \label{Fig:Example}
\end{figure}

In this paper, we provide an efficient algorithm for uniformly sampling from an arbitrary compact set $\X \subset \R^n$ under two minimal assumptions: (1) The uniform distribution on $\X$ satisfies isoperimetry, namely a Poincar\'e inequality; and 
(2) the set $\X$ satisfies a natural \textit{volume growth condition}, which we define below. 
These conditions substantially generalize convexity and star-shapedness, the two broad families previously known to be efficiently sampleable (and cover the examples in Figures~\ref{Fig:Example}(c) and~\ref{Fig:Example}(d), and hence our results show that we can indeed sample them in polynomial time). 
Here by \textit{efficient/polynomial-time}, we mean polynomial in the dimension $n$, the Poincar\'e constant of the target distribution, the volume growth constants of $\X$, and in $\log \varepsilon^{-1}$ where $\varepsilon > 0$ is the final error parameter in R\'enyi divergence specified in the input.
Therefore, our result recovers the classical polynomial-time sampling from convex bodies, and it
improves the result of~\cite{CDV10} to a polynomial-time sampling from star-shaped bodies (improving the dependence from $\mathrm{poly}(\varepsilon^{-1})$ to $\mathrm{poly}(\log \varepsilon^{-1})$) with stronger error guarantees in R\'enyi divergence.
We discuss these assumptions and the algorithm in more detail.

\paragraph{Isoperimetry.}
The analysis of sampling for convex bodies and logconcave densities reveals that the \textit{isoperimetry} of the target distribution is a crucial ingredient for efficient sampling.
For example, if a distribution has poor Cheeger isoperimetry (i.e., large Poincar\'e constant), then the domain can be partitioned into two subsets with a small separating boundary (see Figure~\ref{Fig:Example-2}(a) for an example), and any ``local'' process such as one based on diffusion, without global knowledge, would need many steps to even cross from such a subset to its complement. 

Our main question is thus the following:
\begin{center} 
{\em Does isoperimetry (Poincar\'e inequality) suffice for polynomial-time sampling?}
\end{center}

We recall that from the perspective of sampling via continuous-time diffusion processes such as the Langevin dynamics, isoperimetry (Poincar\'e inequality) is a natural sufficient condition for efficient sampling; see for example~\cite[Theorem~3]{VW23rapid}
for the convergence rate of the continuous-time Langevin dynamics under Poincar\'e inequality.
In discrete time, the \textit{Proximal Sampler} algorithm~\citep{LST21structured} 
has a convergence guarantee under Poincar\'e inequality analogous to the continuous-time Langevin dynamics; see~\cite[Theorem~4]{CCSW22improved} for the case of the unconstrained target distribution, and~\cite[Theorem~2]{kook2026and} for the case of the uniform distribution on $\X$ (see also Lemma~\ref{Lem:PSOuterConvergence}).
However, the \textit{Proximal Sampler} is an idealized algorithm, since it requires sampling from a regularized distribution in each iteration (see Section~\ref{Sec:ProxSampler} for a review).
The work of~\cite{CCSW22improved} shows how to implement the \textit{Proximal Sampler} as a concrete algorithm using rejection sampling when the target distribution $\pi \propto \exp(-f)$ has full support on $\R^n$ and $f$ is smooth (has a bounded second derivative).
Subsequent works~\citep{liang2022proximal,liang2023proximal,FYC23improved} show how to obtain an improved complexity using approximate rejection sampling under weaker smoothness assumption such as H\"older continuity of $\nabla f$, or Lipschitzness of $f$.
However, for our setting of the uniform target distribution $\pi \propto \one_\X$, the potential function $f$ is not even continuous: $f(x) = 0$ if $x \in \X$, and $f(x) = +\infty$ if $x \notin \X$; therefore, none of the results above apply.
When $\pi \propto \one_\X$ and $\X$ is \textit{convex}, the work of~\cite{kook2026and} shows how to implement the \textit{Proximal Sampler} via rejection sampling with a threshold on the number of trials, resulting in the \textbf{In-and-Out} algorithm that they analyze and prove has  $\tilde O(n^2)$ complexity from a warm start in an isotropic convex body.
The work of~\cite{dang2025oracle} studies a variant of \textbf{In-and-Out} using either the projection oracle or the separation oracle to the convex body.
However, when $\X$ is not convex (or star-shaped with a large convex core), there is currently no such algorithmic guarantee for sampling from $\X$; we address this gap in this paper.

\begin{figure}[t!]
    \centering
    \subfloat[A dumbbell-shaped body.]{\includegraphics[width=.35\textwidth]{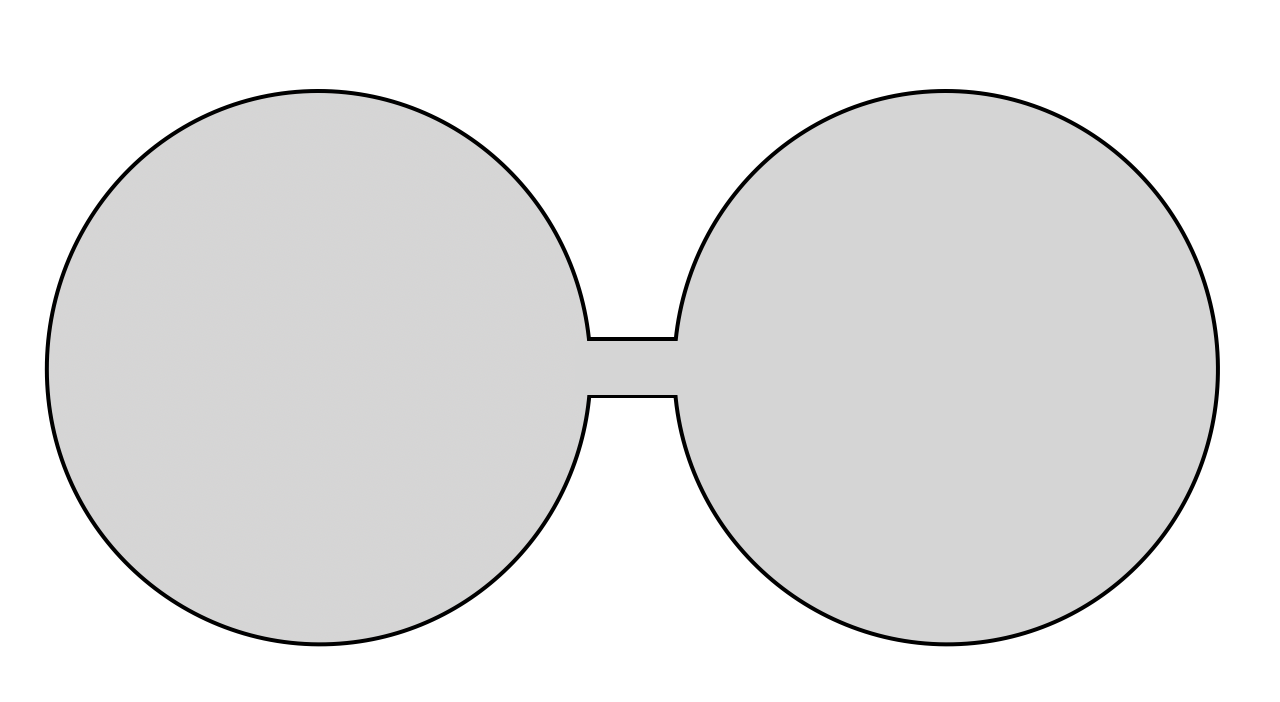}}
    \qquad\qquad
    \subfloat[A cylinder.]{\includegraphics[width=.4\textwidth]{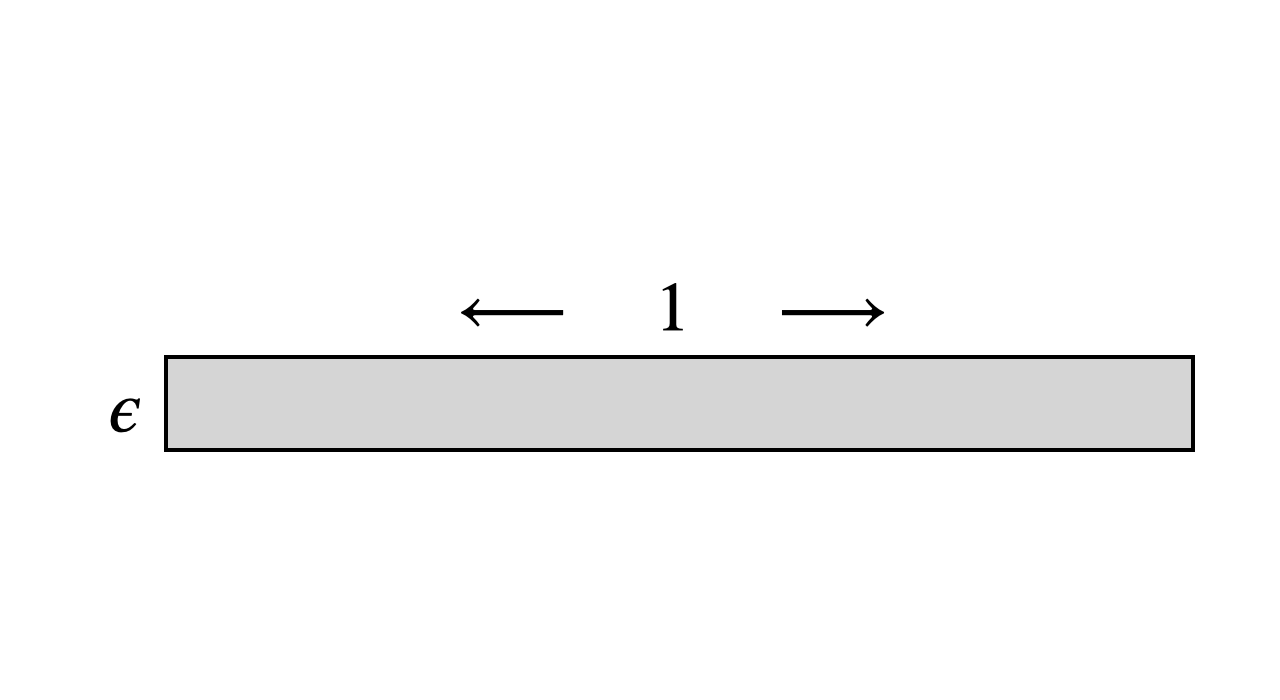}}
    \caption{A body with (a) poor isoperimetry (large Poincar\'e constant), and (b) good isoperimetry (small Poincar\'e constant) but large volume growth rate.}
    \label{Fig:Example-2}
\end{figure}

\paragraph{Volume growth condition.}
For sampling from $\X \subset \R^n$, it turns out that isoperimetry (Poincar\'e inequality) alone is not sufficient, and we need another condition. 
To illustrate, consider a cylinder in $\R^n$ of fixed axis length (say $1$), and base radius $\epsilon > 0$ (see Figure~\ref{Fig:Example-2}(b)). 
The uniform distribution on this cylinder has good isoperimetry, i.e., the Poincar\'e constant is $O(1)$. 
However, as the radius shrinks ($\epsilon \to 0$), any known Markov chain would need more membership queries to go from one end of the cylinder to another. 
Intuitively, any process that only has a local picture of the body to be sampled either makes very small steps or has a large probability of stepping out. 
We can capture this consideration via the \textit{volume growth rate} of $\X$ (see Section~\ref{Sec:VolGrowth} for a precise definition), which is how fast the volume of $\X \oplus B_t$ grows as a function of $t$, where $B_t \equiv B(0,t)$ is the $\ell_2$-ball of radius $t > 0$, and $\oplus$ is the Minkowski sum between sets.
When $\X$ is convex, the volume growth rate can be controlled by the \textit{external isoperimetry} of $\X$, which is the ratio of the surface area to the volume of $\X$ (see Lemma~\ref{Lem:VolGrowthConvex}); the external isoperimetry can in turn be controlled by the reciprocal of the radius of the largest $\ell_2$-ball contained in $\X$ (the ``inner radius'' of $\X$).
In the cylinder example above (Figure~\ref{Fig:Example-2}(b)), which is convex, the volume growth rate of $\X$ scales as $1/\epsilon$, which tends to $+\infty$ as $\epsilon \to 0$, suggesting that sampling from $\X$ in this case is difficult. 

This leads us to a more precise question:

\begin{center}
{\em Can the complexity of sampling from $\X \subset \R^n$ be bounded by a polynomial in the dimension $n$, the Poinc\'are constant of $\pi \propto \one_\X$ and the volume growth rate of $\X$?}
\end{center}

In this paper, we answer this question affirmatively.
We analyze the same \textbf{In-and-Out} algorithm studied by~\cite{kook2026and}, which only requires a membership oracle to $\X$.
Whereas~\cite{kook2026and} proved their result assuming $\X$ is convex, we prove our result \textit{without} assuming convexity, only isoperimetry and volume growth as stated above.
We know that isoperimetry holds for a large class of nonconvex sets.
Similarly, we show that the volume growth condition is preserved under operations including union and set subtraction, and so it captures a large class of nonconvex sets.
Therefore, our main result shows that a large class of nonconvex sets can be sampled in polynomial time, substantially generalizing previous results for convex and star-shaped sets.

\subsection{Algorithm: In-and-Out}

The \textbf{In-and-Out} algorithm~\citep{kook2026and} is the following iteration.

\begin{algorithm}[H]
\hspace*{\algorithmicindent} 
\begin{algorithmic}[1] 
\caption{: \textbf{In-and-Out} for sampling from $\pi \propto \one_\X$. \\
\textbf{Input:} Initial point $x_0 \in \X$, 
step size $h > 0$, number of steps $T \in \mathbb{N}$, threshold $N \in \mathbb{N}$. \\
\textbf{Output:} $x_T \sim \rho_{T}$.
} 
\label{Alg:In-and-Out}

\FOR{$i=0,\dotsc,T-1$}

\STATE Sample $y_{i}\sim
\N(x_{i},hI_{n})$.
\label{Eq:AlgFwd}

\STATE \textbf{Repeat:} Sample $x_{i+1} \sim \N(y_{i}, h I_{n})$
until $x_{i+1}\in \X$. 
If $\#$attempts$_{i}\, \ge N$, declare
\textbf{Failure}.\label{Eq:AlgBwd}

\ENDFOR
\end{algorithmic}
\end{algorithm}

The algorithm has three parameters: 
(1) the step size $h > 0$,
(2) the number of steps (or outer iterations) $T \in \mathbb{N}$, and
(3) the threshold on the number of trials $N \in \mathbb{N}$ in the rejection sampling (Step 3) in each iteration.
An \textit{outer iteration} of the \textbf{In-and-Out} algorithm is one iteration (corresponding to one value of $i \in \{0,1,\dots,T-1\}$) of the Steps 2--3 in Algorithm~\ref{Alg:In-and-Out}.
We call Step 2 the \textit{forward step}, and Step 3 the \textit{backward step}.

We note the \textbf{In-and-Out} algorithm is an implementation of the idealized \textit{Proximal Sampler} scheme (described in Section~\ref{Sec:ProxSampler}) via rejection sampling with a threshold $N$ on the number of trials in Step 3.
Without the threshold $N$ (equivalently, when $N = \infty$), the \textbf{In-and-Out} algorithm becomes exactly the \textit{Proximal Sampler} scheme.
However, as explained in~\cite[Section~3.3.1]{kook2026and}, without the threshold $N$, the expected number of trials in the rejection sampling in Step 3 (when $x_i \sim \pi$) is equal to infinity.
Therefore, following~\cite{kook2026and}, we introduce the threshold $N < \infty$ to ensure the expected number of trials in Step 3 is finite.
With the threshold $N < \infty$, the \textbf{In-and-Out} algorithm can fail when the rejection sampling in Step 3 exceeds $N$ trials in any iteration.
When $\X$ satisfies a volume growth condition, we can control the failure probability of the \textbf{In-and-Out} algorithm; see Lemma~\ref{Lem:SuccProbInAndOut} in Section~\ref{Sec:SuccProbInAndOut}.

When it succeeds for $T$ iterations, \textbf{In-and-Out} returns a point $x_T \sim \rho_T$ which is a random variable with a probability distribution $\rho_T$ supported on $\X$, i.e., $x_T \in \X$ almost surely.
In this case, the \textbf{In-and-Out} algorithm has a convergence guarantee (inherited from the \textit{Proximal Sampler} with a small bias introduced by the threshold $N$) in R\'enyi divergence between the output distribution $\rho_T$ and the target uniform distribution $\pi \propto \one_\X$, assuming that $\pi$ satisfies a Poincar\'e inequality; see Lemma~\ref{Lem:OuterConvergence} in Section~\ref{Sec:OuterConvergence}.

\subsection{Main result: Iteration complexity of In-and-Out with a warm start}

Our main result is the following guarantee for \textbf{In-and-Out} for sampling from $\pi \propto \one_\X$ under isoperimetry and volume growth.
Below, we assume $\beta \ge \frac{1}{n}$ without loss of generality; if $\beta$ is smaller than $\frac{1}{n}$, we can always replace it by $\frac{1}{n}$.
We provide the proof of Theorem~\ref{Thm:Nonconvex} in Appendix~\ref{App:NonconvexProof}.

\begin{theorem}\label{Thm:Nonconvex}
Let $\pi \propto \one_\X$ where $\X \subset \R^n$ is a compact body, $n \ge 2$.
Assume:
\begin{enumerate}
    \item $\X$ satisfies an $(\alpha,\beta)$-volume growth condition for some $\alpha \in [1,\infty)$ and $\beta \in [\frac{1}{n},\infty)$;
    \item $\pi$ satisfies a $\CPI$-Poincar\'e inequality for some $\CPI \in [1,\infty)$.
\end{enumerate}
Let $q \in [2,\infty)$, $\varepsilon \in (0, \frac{1}{2})$, and $M \in [1,\infty)$ be arbitrary. 
Suppose $x_0 \sim \rho_0$ is $M$-warm with respect to $\pi$.
Then with a suitable choice of parameters (see~\eqref{Eq:TPIDef} for $T$,~\eqref{Eq:hDef} for $h$,~\eqref{Eq:NDef} for $N$),  with probability at least $1-\varepsilon$, \textbf{In-and-Out} succeeds for $T$ iterations, 
and outputs $x_T \sim \rho_T$ satisfying $\sR_q(\rho_T \,\|\, \pi) \le \varepsilon$,
with the total number of trials over all $T$ iterations bounded in expectation by:
\begin{align*}
    \E\big[\# \text{ of trials in \textbf{In-and-Out}} \big] 
    &= \tilde O\left(q \, \CPI \, \alpha \, \beta^2 \, M \, n^3 \, \log^4 \frac{1}{\varepsilon} \right),
\end{align*}
where each trial requires one query to the membership oracle $\one_\X$ and $O(n)$ arithmetic operations.
\end{theorem}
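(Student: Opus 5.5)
We combine the three ingredients the paper has already set up: the outer convergence of the Proximal Sampler under a Poincaré inequality (Lemma~\ref{Lem:PSOuterConvergence}, and its thresholded version Lemma~\ref{Lem:OuterConvergence}), the control of the failure probability of a single outer iteration via volume growth (Lemma~\ref{Lem:SuccProbInAndOut}), and a bound on the expected number of trials per iteration. Then we tune $(T,h,N)$.

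\emph{Step 1: number of outer iterations.} For the ideal Proximal Sampler with step size $h$, the Rényi divergence $\sR_q(\rho_k\,\|\,\pi)$ contracts under a $\CPI$-Poincaré inequality. Roughly, it decays by a factor $(1+h/\CPI)^{-1/q}$ per step once it is below $1$, and decreases linearly in the initial regime where $\sR_q \le \log M$. We therefore set $T \asymp q\,\frac{\CPI}{h}\,\big(\log M + \log\frac1\varepsilon\big)$, which gives $\sR_q(\rho_T\|\pi)\le \varepsilon/2$ for the ideal chain. Lemma~\ref{Lem:OuterConvergence} transfers this to \textbf{In-and-Out} conditioned on success, at the cost of an additive bias controlled by the per-step failure probability.

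\emph{Step 2: failure probability and choice of $h$, $N$.} Warmness is preserved along the chain: each $\rho_i$ is $M$-warm, since the Proximal Sampler kernel is $\pi$-stationary and warmness propagates through Markov kernels. Hence it suffices to bound the failure probability of one backward step when $x_i\sim\pi$, and multiply by $M$. Under $x\sim\pi$, $y=x+\sqrt h Z$ has density proportional to $\pi * \N(0,hI)$. The acceptance probability at $y$ is $\ell(y)=\P(y+\sqrt h Z'\in\X)$, and the number of trials is geometric with mean $1/\ell(y)$. The failure probability is therefore $\E_y[(1-\ell(y))^N] \lesssim \P(\ell(y)\le \log N / N)$. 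This is where volume growth enters. Points with small $\ell(y)$ lie at distance of order $\sqrt{h}\cdot$(something) from $\X$, or in thin regions. The $(\alpha,\beta)$-condition bounds $\Vol(\X\oplus B_t)/\Vol(\X)$ by roughly $(1+\beta t)^{\alpha n}$-type growth, and Lemma~\ref{Lem:SuccProbInAndOut} turns this into a bound of the form $\P(\text{fail}) \lesssim \beta\sqrt{hn}\cdot \mathrm{polylog}(N)$. Requiring $T\cdot M\cdot(\text{per-step failure})\le \varepsilon$ forces $h \asymp \frac{1}{\alpha\beta^2 n^2\,\log^{c}(\cdot)}$ up to the exact form in~\eqref{Eq:hDef}, with $N \asymp \frac{TM}{\varepsilon}$ polylog as in~\eqref{Eq:NDef}. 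I expect this step to be the main obstacle. It requires getting the failure bound to scale with $\sqrt h$ rather than something worse, and handling the circularity that $T$ depends on $h$ while $h$ must beat $TM/\varepsilon$. Only logarithmic dependence of $h$ on $\varepsilon$ survives, because $N$ enters only through $\log N$.

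\emph{Step 3: expected trials.} The expected number of trials per iteration is $\E_{y}[\min(1/\ell(y),N)] \le M\,\E_{y\sim\pi*\N}[\min(1/\ell,N)]$. Under the stationary $y$-marginal we have $\E[1/\ell(y)] = \int \frac{(\one_\X*\varphi_h)(y)}{\Vol(\X)\,\ell(y)}dy = \Vol(\X\oplus \text{supp})/\Vol(\X)$, truncated by $N$. Volume growth with $h$ chosen as above makes this $O(\log N)$, i.e.\ polylog. Multiplying gives total expected trials $\lesssim T\cdot M\cdot\mathrm{polylog} = \tilde O\big(q\,\CPI\,h^{-1} M\log\frac1\varepsilon\big) = \tilde O\big(q\,\CPI\,\alpha\beta^2 M n^3\log^4\frac1\varepsilon\big)$. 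Here the extra factor of $n$ and the log powers come from the precise $h$ in~\eqref{Eq:hDef}, with $\log M$ absorbed into $\tilde O$. Each trial is one Gaussian sample, which costs $O(n)$ arithmetic operations, plus one membership query. A union bound over the $T$ iterations yields success with probability $\ge 1-\varepsilon$, and Lemma~\ref{Lem:OuterConvergence} gives $\sR_q(\rho_T\|\pi)\le\varepsilon$.
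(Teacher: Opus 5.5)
Your overall architecture matches the paper's: warmness propagates, each step is reduced to stationarity at a cost of $M$, you take a union bound over the $T$ steps, apply Lemma~\ref{Lem:OuterConvergence}, and multiply the per-step trials by $T$. The genuine gap is in Step 2, which misidentifies what controls the failure probability; as written it cannot work. You posit a per-step failure bound of order $\beta\sqrt{hn}\cdot\mathrm{polylog}(N)$ and then pick $h$ so that $T\cdot M\cdot(\text{failure})\le\varepsilon$. But $T\asymp q\CPI h^{-1}\log(M/\varepsilon)$, so this product grows like $h^{-1/2}$. Shrinking $h$ makes it worse, and no admissible $h$ satisfies your requirement.

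Lemma~\ref{Lem:SuccProbInAndOut} says something different. Once $h$ is below a threshold, the per-step failure probability is at most $3M/S$ with $N=8\alpha S\log S$, and this does not depend on $h$: the smallness comes from the trial cap $N$. The paper takes $S=3TM/\eta$ with $\eta=\varepsilon/8$. Then each step fails with probability at most $\eta/T$, the total failure is at most $\eta$, and the bias $4\eta\le\varepsilon/2$ in Lemma~\ref{Lem:OuterConvergence}.

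The idea you are missing is the three-way split in Lemma~\ref{Lem:SuccProbStatGeneral}. Take $r=\sqrt{8h\max\{n,\log((n+1)\alpha S)\}}$.
\begin{itemize}
\item On $\X_r^\complement$, the $\pi_h$-mass is at most $1/S$ by Lemma~\ref{Lem:EscProb}.
\item Where $\ell_h\ge\log S/N$, the failure probability is at most $e^{-N\ell_h}\le 1/S$.
\item Points $y\in\X_r$ with $\ell_h(y)<\log S/N$ carry $\pi_h$-mass at most $\frac{\log S}{N}\cdot\frac{\Vol(\X_r)}{\Vol(\X)}\le\frac{8\alpha\log S}{N}=\frac1S$, because of the identity $\pi_h=\ell_h/\Vol(\X)$.
\end{itemize}
So $h$ only has to make these hypotheses hold: $h\le 1/(2n^3\beta^2)$ for Lemma~\ref{Lem:EscProb} (via Lemma~\ref{Lem:GrowthConstant}), and $r\le 2/(n\beta)$ so that $(1+r\beta)^n\le e^2<8$. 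Together these produce the threshold in \eqref{Eq:hDef}, including its $n^3$.

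As a consequence, your step size $h\asymp 1/(\alpha\beta^2n^2\log^c)$ is not the right one. The paper's \eqref{Eq:hDef} carries $n^3$ and contains $\alpha$ only inside a logarithm. The linear factor $\alpha$ in the final count comes instead from the per-step expected trials $16M\alpha\log S$ (Lemma~\ref{Lem:NumTrialsStatCor}), which you describe as merely polylogarithmic. Your final bound therefore comes out right only through two compensating misattributions, and the $n^3$ is borrowed from \eqref{Eq:hDef} rather than derived.

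Some smaller points also need fixing:
\begin{itemize}
\item The volume growth condition is $\alpha(1+t\beta)^n$, not $(1+\beta t)^{\alpha n}$.
\item $\E_{\pi_h}[1/\ell_h]=\int_{\R^n}\Vol(\X)^{-1}\,dy=\infty$, because the Gaussian has full support. Truncating at $N$ only helps when paired with the escape bound of Lemma~\ref{Lem:EscProb} on $\X_r^\complement$.
\item You flag the circularity that $h$ depends on $\log S=\log(3TM/\eta)$ while $T$ depends on $h$, but you leave it unresolved. The paper resolves it by taking $T=2z\log z$ with $z$ as in \eqref{Eq:TdefXHelper}, and verifies $T\ge\frac{2q\CPI}{h}\log\frac{M}{\varepsilon'}$ in Lemma~\ref{Lem:TBound}.
\end{itemize}
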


The $\tilde O$ notation above hides logarithmic dependencies in the leading parameters $q$, $n$, $\CPI$, $\alpha$, $\beta$, $M$, and $\log \frac{1}{\varepsilon}$.
See the proof of Theorem~\ref{Thm:Nonconvex} in Appendix~\ref{App:NonconvexProof} for a precise bound (in equation~\eqref{Eq:NumTrialsBd1}) for the expected total number of trials of \textbf{In-and-Out}.
We also note the output guarantee in R\'enyi divergence $\sR_q$ of index $q\ge 2$ also implies guarantees in KL divergence and total variation distance, since $2\,\TV^2 \leq \KL\leq \sR_{2} \le \sR_q$ (see Section~\ref{Sec:Renyi}). 

We recall that convex bodies and star-shaped bodies have bounded Poincar\'e constants (see Section~\ref{Sec:Poincare}), and they also have bounded volume growth constants (see Section~\ref{Sec:VolGrowth}).
Thus, our result recovers polynomial-time sampling for convex bodies, albeit at a higher iteration complexity $\tilde O(n^3)$ compared to $\tilde O(n^2)$ from~\citep{kook2026and}.
Our result provides polynomial-time sampling (with polynomial dependence on $\log \varepsilon^{-1}$) for star-shaped bodies with guarantee in R\'enyi divergence; this strengthens the prior result of~\citep{CDV10} which has a polynomial dependence in $\varepsilon^{-1}$ and only provides guarantees in total variation distance. 
Perhaps most importantly, our result vastly generalizes the scope of polynomial-time uniform sampling to a large class of nonconvex bodies --- those satisfying the volume growth condition and whose uniform distributions satisfy a Poincar\'e inequality.
We note that our result still requires a warm start.
How to generate a warm start for a nonconvex body remains an open question, as all previous techniques for generating warm starts seem to heavily require convexity.

We remark on the differences compared to the convex case from~\cite{kook2026and}.
A key part in the analysis of \textbf{In-and-Out} is in showing that the forward step does not step too far away from $\X$, and that the backward step has a good chance of landing back in $\X$.
In the previous work of~\cite{kook2026and}, both these steps were analyzed by crucially assuming convexity; see for example~\cite[Lemma~13]{kook2026and}.
In our work, we analyze these steps without assuming convexity, only assuming that $\X$ satisfies a volume growth condition (Definition~\ref{Def:VolGrowth}).
We provide further discussion on the differences with the convex setting in a remark following Lemma~\ref{Lem:EscProb}.

\section{Preliminaries}
\label{Sec:Prelim}

\paragraph{Notation and definitions.}
Let $\R^n$ be the Euclidean space of dimension $n \ge 2$.
Let $\|x\| = \sqrt{\sum_{i=1}^n x_i^2}$ be the $\ell^2$-norm of a vector $x = (x_1,\dots,x_n) \in \R^n$.
Let $I_n$ denote the $n \times n$ identity matrix.
Let $\N(\mu, \Sigma)$ denote the Gaussian distribution with mean vector $\mu \in \R^n$ and covariance matrix $\Sigma \in \R^{n \times n}$.
In particular, $\N(0, I_n)$ is the standard Gaussian distribution in $\R^n$.

Given a set $\X \subseteq \R^n$, its \textit{interior} $\X^\circ$ is the set of points $x \in \X$ such that a sufficiently small ball around $x$ is still contained in $\X$.
Recall $\X \subseteq \R^n$ is \textit{open} if $\X = \X^\circ$.
Recall $\X \subseteq \R^n$ is \textit{closed} if $\X^\complement$ is open, where $\X^\complement = \R^n \setminus \X$ is its complement.
Recall a \textit{body} $\X$ is a set $\X \subseteq \R^n$ that is closed and has a non-empty interior: $\X^\circ \neq \emptyset$.
Recall a set $\X \subset \R^n$ is \textit{compact} if and only if it is closed and bounded.
Given a closed set $\X \subseteq \R^n$, recall the \textit{boundary} $\partial \X$ of $\X$ is the set of points in $\X$ that are not in the interior: $\partial \X = \X \setminus \X^\circ$.
Recall the \textit{volume} of a body $\X \subseteq \R^n$ is the integral over the body: $\Vol(\X) = \int_\X \, \dx$, where $\dx$ is the Lebesgue measure on $\R^n$.
The \textit{surface area} of a body $\X \subset \R^n$ is the integral over the boundary $\partial \X$: $\area(\partial \X) = \int_{\partial \X} \d \H^{n-1}(x)$, where $\d \H^{n-1}(x)$ is the $(n-1)$-dimensional Hausdorff measure.

Given distributions $\rho, \pi$ on $\R^n$, recall we say $\rho$ is \textit{absolutely continuous} with respect to $\pi$, denoted by $\rho \ll \pi$, if for any measurable set $A \subseteq \R^n$, $\pi(A) = 0$ implies $\rho(A) = 0$.
If a probability distribution $\pi$ is absolutely continuous with respect to the Lebesgue measure $\dx$, then we can write $\pi$ in terms of its probability density function (Radon-Nikodym derivative), which we also denote by $\pi \colon \R^n \to [0,\infty)$, with $\int_{\R^n} \pi(x) \, \dx = 1$.
If $\rho$ and $\pi$ are both absolutely continuous with respect to the Lebesgue measure $\dx$ and represented by their density functions, then $\rho \ll \pi$ means $\pi(x) = 0$ for some $x \in \R^n$ implies $\rho(x) = 0$.

\subsection{Geometry of the support set}
\label{Sec:Set}

Throughout, let $\X \subset \R^n$ be a compact body;
this means $\X$ is closed, bounded, and has a non-empty interior, so it has a finite volume $\Vol(\X) \in (0,+\infty)$.
Note we do not assume $\X$ is convex.
We assume $\X$ has a sufficiently regular surface, so $\X$ has a finite surface area $\area(\partial \X) \in (0,+\infty)$.

We assume we have access to a \textit{membership oracle} to $\X$, which is a function $\one_\X \colon \R^n \to \{0,1\}$ given by $\one_\X(x) = 1$ if $x \in \X$, and $\one_\X(x) = 0$ if $x \notin \X$.
We measure the complexity of our algorithm by the number of queries to the membership oracle $\one_\X$.

Our goal is to sample from the uniform probability distribution $\pi \propto \one_\X$ supported on $\X$.
Explicitly, $\pi$ has probability density function $\pi \colon \R^n \to [0,\infty)$ given by:
$$\pi(x) = \frac{1}{\Vol(\X)} \cdot \one_\X(x) \qquad \text{ for all } ~ x \in \R^n.$$

We also define the following notions for convenience.
Given a closed set $\Y \subseteq \R^n$, we define the \textit{distance function} $\dist(\cdot, \Y) \colon \R^n \to [0,\infty)$ of $x \in \R^n$ to $\Y$ by:
$$\dist(x,\Y) = \min_{y \in \Y} \|x-y\|.$$
Note that by definition, $\dist(x,\Y) = 0$ if and only if $x \in \Y$.

For $t \ge 0$, let $B_t \equiv B(0,t)$ be the $\ell_2$-ball of radius $t$ centered at $0 \in \R^n$:
$$B_t = \{x \in \R^n \colon \|x\| \le t\} = \{x \in \R^n \colon \dist(x,\{0\}) \le t\}.$$ 

Given $\X \subset \R^n$, for $t \ge 0$, we define the \textit{enlarged set} $\X_t \subset \R^n$ by:
\begin{align*}
    \X_t &= \X \oplus B_t = \{x \in \R^n \colon \dist(x,\X) \le t\}
\end{align*}
where $\oplus$ is the Minkowski sum between sets.
We note the following relation, which we can also take as the definition of surface area: 
$\area(\partial \X) = \lim_{t \to 0} \frac{1}{t} (\Vol(\X_t) - \Vol(\X)).$

Given a compact body $\X \subset \R^n$, we define the \textit{outer isoperimetry} of $\X$ to be the ratio of the surface area to volume:
$$\xi(\X) = \frac{\area(\partial \X)}{\Vol(\X)} \in (0,\infty).$$

\subsection{Isoperimetry: Poincar\'e inequality}
\label{Sec:Poincare}

Recall we say a probability distribution $\pi$ on $\R^n$ satisfies a \textbf{Poincar\'e inequality (PI)} with constant $\CPI(\pi) \in (0, \infty)$ if for all smooth functions $\phi \colon \R^n \to \R$, the following holds:
$$\Var_\pi(\phi) \le \CPI(\pi) \cdot \E_\pi[\|\nabla \phi\|^2]$$
where $\Var_\pi(\phi) = \E_\pi\left[(\phi - \E_\pi[\phi])^2 \right]$ is the variance of $\phi$ under $\pi$.

We recall that if $\pi$ is logconcave (i.e., $-\log \pi$ is a convex function on $\R^n$), then it satisfies a Poincar\'e inequality.
The Kannan-Lovasz-Simonovitz (KLS) conjecture~\citep{KLS95isop} states that if $\pi$ is logconcave, then $\CPI(\pi) = O(\|\Sigma\|_\op)$, where $\|\Sigma\|_\op$ is the largest eigenvalue of the covariance matrix  $\Sigma = \Cov_\pi(X)$ of $X \sim \pi$.
The current best result~\citep{klartag2023logarithmic} is $\CPI(\pi) = O(\|\Sigma\|_\op \cdot \log n)$. 

We recall that for an arbitrary distribution $\pi$, we have $\CPI(\pi) \lesssim \CCh(\pi)^{-2}$~\citep{Chee70lower}, where $\CCh(\pi)$ is the \textit{Cheeger constant} of $\pi$, defined by
$$\CCh(\pi) = \inf_{A \subset \R^n} \frac{\pi(\partial A)}{\min \left\{ \pi(A), \, \pi(A^\complement) \right\}}$$
where $\pi(\partial A)$ is the surface area of $A$ measured by $\pi$, defined by 
$\pi(\partial A) = \liminf_{t \to 0} \frac{\pi(A \oplus B_t) - \pi(A)}{t}.$

We also recall that the Poincar\'e inequality is stable under some types of perturbations of the distributions (including bounded perturbations of the density, and pushforward by a Lipschitz mapping), while these perturbations easily destroy logconcavity.
Therefore, the class of distributions satisfying Poincar\'e inequality is larger than the class of logconcave distributions.

In this work, we assume that the target uniform distribution $\pi \propto \one_\X$ satisfies a Poincar\'e inequality with some constant $\CPI \equiv \CPI(\pi) \in (0,+\infty)$.
Note while $\X$ may be nonconvex, this assumption implies $\X$ cannot be too bad; e.g., $\X$ must be connected and does not have a ``bottleneck''.
If $\X$ is a convex body with diameter $D > 0$, then $\CPI(\pi) = O(D^2)$.
We also recall from~\citep{CDV10} that if $\X$ is a star-shaped body with diameter $D > 0$ and the volume of its convex core is a fraction $\gamma \in (0,1]$ of the total volume, then $\CPI(\pi) = O(D^2/\gamma^2)$.

\subsection{Volume growth condition}
\label{Sec:VolGrowth}

We introduce the following key definition on the volume growth of the enlarged set $\X_t = \X \oplus B_t$, where recall $B_t \equiv B(0,t)$ is the $\ell_2$-ball of radius $t \ge 0$.

\begin{definition}\label{Def:VolGrowth}
    We say a compact body $\X \subset \R^n$ satisfies an \textbf{$(\alpha,\beta)$-volume growth} condition for some constants $\alpha \in [1, \infty)$ and $\beta \in (0, \infty)$ if for all $t > 0$:
    \begin{align}\label{Eq:VolGrowth}
        \frac{\Vol(\X_t)}{\Vol(\X)} \le \alpha \cdot (1 + t \beta)^n.
    \end{align}
\end{definition}

We recall that convex sets satisfy the volume growth condition where the constant is determined by the outer isoperimetry; see Lemma~\ref{Lem:VolGrowthConvex}.
We can show that a star-shaped body satisfies the volume growth condition with a constant inherited from the convex body; see Lemma~\ref{Lem:VolGrowthStarShaped}.
Moreover, we show that volume growth condition is preserved under some operations, including set union and set exclusion; see Lemma~\ref{Lem:VolGrowthUnion} and Lemma~\ref{Lem:VolGrowthExclusion}.
Therefore, the volume growth condition captures convex bodies and a wide class of nonconvex sets.
Later, we will see that the volume growth condition allows us to control the failure probability of the \textbf{In-and-Out} algorithm (Lemma~\ref{Lem:EscProb}).

\begin{remark}
    Any compact body $\X$ trivially satisfies the volume growth condition, but with a naive estimate of $\alpha$ that may be exponentially large such that it is not useful for our algorithmic purpose.
    Concretely, since $\X$ is a compact set with nonempty interior, it contains a ball of radius $r$ and is contained in a larger ball of radius $R$, so $B(x,r) \subseteq \X \subseteq B(x,R)$ for some $x \in \X$ and $R \ge r > 0$. 
    Then $\X_t \subseteq B(x,R+t)$ for all $t \ge 0$, and we can estimate:
    $$\frac{\Vol(\X_t)}{\Vol(\X)} 
    \le \frac{\Vol(B(x,R+t))}{\Vol(B(x,r))} 
    = \left(\frac{R}{r}\right)^n \cdot \left(1 + \frac{t}{R} \right)^n.$$
    This shows $\X$ satisfies the volume growth condition with $\alpha = (R/r)^n$ and $\beta = 1/R$.
    If we have additional structures on $\X$, such as being convex or star-shaped, or being a union or difference of sets satisfying volume growth condition, then we may obtain better estimates on $\alpha$ and $\beta$; see Lemmas~\ref{Lem:VolGrowthConvex},~\ref{Lem:VolGrowthStarShaped},~\ref{Lem:VolGrowthUnion}, and~\ref{Lem:VolGrowthExclusion} below.
\end{remark}

\begin{remark}
    The volume growth property has been studied in prior work.
    Notably, the Brunn-Minkowski theorem implies that convex bodies satisfy the volume growth condition with $\alpha = 1$; see Lemma~\ref{Lem:VolGrowthConvex}.
    Moreover,~\cite[Theorem~3.7]{fradelizi2014analogue} show that if $\X \subset \R^n$ is compact with a regularity condition,\footnote{Namely, $\epsilon \mapsto \Vol(\epsilon \X \oplus B_1)$ is twice-differentiable in a neighborhood of $0$, with a continuous second derivative at $0$.} then $t \mapsto \Vol(\X_t)$ is eventually $\frac{1}{n}$-concave, i.e., there exists $T_0 \in [0, \infty)$ such that $t \mapsto \Vol(\X_t)^{1/n}$ is a concave function for all $t \ge T_0$.
    This implies $\Vol(\X_t) \le \Vol(\X_{T_0}) \, (1 + (t-T_0) \beta)^n$ for all $t \ge T_0$, where $\beta = \frac{1}{n} \xi(\X_{T_0})$; this shows the enlarged body $\X_{T_0}$ satisfies the volume growth condition with $\alpha = 1$.
\end{remark}

\subsubsection{Volume growth for convex bodies}
\label{Sec:VolGrowthConvex}

When $\X$ is convex, it satisfies the volume growth condition with a constant bounded by the outer isoperimetry ratio $\xi(\X) = \frac{\area(\partial \X)}{\Vol(\X)}$;
this is a consequence of the Brunn-Minkowski theorem.
We provide the proof of Lemma~\ref{Lem:VolGrowthConvex} in  Appendix~\ref{Sec:ReviewConvex}.

\begin{lemma}\label{Lem:VolGrowthConvex}
If $\X \subset \R^n$ is a convex body, then it satisfies the $(1,\frac{1}{n} \xi(\X))$-volume growth condition.
\end{lemma}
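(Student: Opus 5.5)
The plan is to show that the function $g(t) = \Vol(\X_t)^{1/n}$ is concave on $[0,\infty)$ via Brunn--Minkowski, and then bound it by its tangent line at $t=0$, whose slope is determined by the surface area.

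\textbf{Step 1: concavity.} For $s,t \ge 0$ and $\lambda \in [0,1]$, convexity of $\X$ and of the ball give
$$\lambda \X_s \oplus (1-\lambda)\X_t = \lambda\X \oplus (1-\lambda)\X \oplus \lambda B_s \oplus (1-\lambda)B_t = \X \oplus B_{\lambda s + (1-\lambda)t} = \X_{\lambda s+(1-\lambda)t}.$$
Here we use $\lambda \X \oplus (1-\lambda)\X = \X$, which holds because $\X$ is convex. Applying the Brunn--Minkowski inequality $\Vol(A\oplus B)^{1/n} \ge \Vol(A)^{1/n}+\Vol(B)^{1/n}$ to $A=\lambda\X_s$ and $B=(1-\lambda)\X_t$ yields
$$g(\lambda s+(1-\lambda)t) \ge \lambda g(s)+(1-\lambda)g(t).$$
So $g$ is concave on $[0,\infty)$.

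\textbf{Step 2: derivative at $0$.} Using the limit relation $\area(\partial\X) = \lim_{t\to0}\frac1t(\Vol(\X_t)-\Vol(\X))$ recalled in Section~\ref{Sec:Set}, together with the chain rule for $u\mapsto u^{1/n}$, the right derivative is
$$g'(0^+) = \tfrac1n \Vol(\X)^{1/n-1}\,\area(\partial\X) = \tfrac1n\,\xi(\X)\,\Vol(\X)^{1/n}.$$

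\textbf{Step 3: tangent-line bound.} A concave function lies below its (right) tangent line at $0$, so for all $t \ge 0$,
$$g(t) \le g(0)+t\,g'(0^+) = \Vol(\X)^{1/n}\bigl(1+t\,\tfrac1n\xi(\X)\bigr).$$
Raising both sides to the $n$-th power gives $\Vol(\X_t)/\Vol(\X) \le (1+t\beta)^n$ with $\beta=\frac1n\xi(\X)$. This is exactly the $(1,\frac1n\xi(\X))$-volume growth condition~\eqref{Eq:VolGrowth}.

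\textbf{Main obstacle.} The only delicate point is justifying the tangent-line inequality when the right derivative is defined through the limit. For a concave $g$, the difference quotient $(g(t)-g(0))/t$ is nonincreasing in $t$. Hence for every $t>0$ it is at most its limit as $t\to0^+$, which is the right derivative. That limit exists and equals the expression in Step 2 by the surface-area limit relation, which the paper takes as the definition of area. With this monotonicity argument in place, the rest of the proof is routine.
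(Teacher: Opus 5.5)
Your proof is correct. It shares the paper's key ingredient, concavity of $t \mapsto \Vol(\X_t)^{1/n}$ from Brunn--Minkowski, but finishes by a different and shorter route.

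The paper sets $\phi(t) = \log \Vol(\X_t)$ and rewrites concavity of $e^{\phi/n}$ as the differential inequality $n\phi'' + (\phi')^2 \le 0$, i.e.\ $\frac{\d}{\dt}\frac{1}{\phi'} \ge \frac{1}{n}$. It then integrates twice: first to get $\xi(\X_t) \le \frac{n}{t + n/\xi(\X)}$, then to get the volume bound. You instead apply the tangent-line inequality $g(t) \le g(0) + t\,g'(0^+)$ directly.

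The content is the same, since the paper's ODE comparison is the tangent-line bound in disguise. Your version needs less, though. It uses only the one-sided derivative at $t = 0$, which is exactly the Minkowski-content definition of $\area(\partial\X)$ the paper adopts, plus monotonicity of difference quotients of a concave function. The paper's computation tacitly uses that $\phi$ is twice differentiable with $\phi'(t) = \xi(\X_t) > 0$ for all $t$. That is true for convex bodies (e.g.\ via Steiner's polynomial formula) but is not justified in the text.

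You also verify the identity $\lambda\X_s \oplus (1-\lambda)\X_t = \X_{\lambda s + (1-\lambda)t}$ that makes Brunn--Minkowski applicable; the paper only asserts the resulting concavity. In return, the paper's route produces the intermediate bound on the outer isoperimetry $\xi(\X_t)$ of the enlarged bodies, which the lemma itself does not need.
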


We also recall that if $\X$ is convex and contains a ball of radius $r > 0$, then we can bound $\frac{1}{n} \xi(\X) \le \frac{1}{r}$; see e.g.~\cite[Lemma~2.1]{BNN13}.

\subsubsection{Volume growth for star-shaped bodies}

We recall from~\citep{CDV10} that a body $\X \subset \R^n$ is \textit{star-shaped} if it is a union of convex sets, all of which have a common (necessarily convex) intersection called the \textit{core} of $\X$.
We can bound the volume growth constant of star-shaped bodies similar to convex bodies.
We provide the proof of Lemma~\ref{Lem:VolGrowthStarShaped} in Appendix~\ref{Sec:VolGrowthStarShaped}.

\begin{lemma}\label{Lem:VolGrowthStarShaped}
    Let $\X \subset \R^n$ be a star-shaped body, so $\X = \bigcup_{i \in \I} \X^i$ where $\X^i$ is a convex body for each $i \in \I$ in a finite index set $\I$, and they share a common intersection $\Y = \X^i \cap \X^j \neq \emptyset$ for all $i \neq j$.
    Assume $\Y$ contains a ball of radius $r > 0$ centered at $0$, i.e., $B_r \subseteq \Y$.
    Then $\X$ satisfies the $(1, \frac{1}{r})$-volume growth condition.
\end{lemma}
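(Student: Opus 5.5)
Since each $\X^i$ contains the core $\Y \supseteq B_r$, the ball $B_r$ lies in every piece, and the proof reduces to a scaling inclusion. The key claim is that for any convex body $K$ with $B_r \subseteq K$ and any $t \ge 0$,
\begin{align*}
    K \oplus B_t \subseteq \left(1 + \tfrac{t}{r}\right) K .
\end{align*}
To see this, write $x = k + b$ with $k \in K$ and $\|b\| \le t$, and set $\lambda = 1 + t/r$. Then
\begin{align*}
    x = \lambda \left( \tfrac{1}{\lambda} k + \tfrac{t/r}{\lambda} \cdot \tfrac{r}{t} b \right).
\end{align*}
Here $\tfrac{r}{t} b \in B_r \subseteq K$, and the coefficients $\tfrac{1}{\lambda}$ and $\tfrac{t/r}{\lambda}$ sum to $1$. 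By convexity the point in parentheses lies in $K$, so $x \in \lambda K$.

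Next apply this to the union. The Minkowski sum distributes over unions, so
\begin{align*}
    \X_t = \bigcup_{i \in \I} \left(\X^i \oplus B_t\right) \subseteq \bigcup_{i \in \I} \lambda \X^i = \lambda \X .
\end{align*}
Taking volumes gives $\Vol(\X_t) \le \lambda^n \Vol(\X) = (1 + t/r)^n \Vol(\X)$. This is exactly the $(1, \frac{1}{r})$-volume growth condition of Definition~\ref{Def:VolGrowth}.

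There is no real obstacle here. The one point to watch is that the scaling must be centered at the origin, the common center of the ball in the core, so that the same dilation factor $\lambda$ works simultaneously for every piece $\X^i$. Finiteness of $\I$ is not actually needed beyond guaranteeing that $\X$ is a body. Note also that this argument does not route through Lemma~\ref{Lem:VolGrowthConvex}; it uses the inner radius directly.
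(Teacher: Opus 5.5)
Your proof is correct and follows essentially the same route as the paper. Both arguments bound $\X_t$ by the union of the enlarged pieces $(\X^i)_t$, use $B_r \subseteq \X^i$ and convexity to get $(\X^i)_t \subseteq (1+\frac{t}{r})\X^i \subseteq (1+\frac{t}{r})\X$, and then take volumes. The only cosmetic difference is that you verify the dilation inclusion directly with an explicit convex combination, whereas the paper writes $B_t \subseteq \frac{t}{r}\X^i$ and invokes $\X^i \oplus \frac{t}{r}\X^i = (1+\frac{t}{r})\X^i$ for convex $\X^i$ containing $0$.
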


\subsubsection{Volume growth under union}

We show that the volume growth condition is preserved under set union.
We provide the proof of Lemma~\ref{Lem:VolGrowthUnion} in Appendix~\ref{Sec:GrowthUnion}.

\begin{lemma}\label{Lem:VolGrowthUnion}
    Suppose $\X^i \subset \R^n$ is a compact body that satisfies the $(\alpha_i, \beta_i)$-volume growth condition for some $\alpha_i \in [1,\infty)$ and $\beta_i \in (0,\infty)$, for each $i \in \I$ in some finite index set $\I$.
    Let $q_\I$ be the probability distribution supported on $\I$ with density $q_\I(i) = \frac{\Vol(\X^i)}{\sum_{j \in \I} \Vol(\X^j)}$, for $i \in \I$.
    Then the union $\X = \bigcup_{i \in \I} \X^i$ satisfies the $(A,B)$-volume growth condition, where:
    \begin{align*}
        A &= \left(\max_{i \in \I} \alpha_i \right) \cdot \frac{\sum_{i \in \I} \Vol(\X^i)}{\Vol(\X)}, \\
        B &= \E_{I \sim q_\I}\left[ \, \beta_I^n \, \right]^{1/n} = \left( \frac{\sum_{i \in \I} \Vol(\X^i) \cdot \beta_i^n}{\sum_{j \in \I} \Vol(\X^j)} \right)^{\frac{1}{n}} \le \max_{i\in \I} \beta_i.
    \end{align*}
\end{lemma}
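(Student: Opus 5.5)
The plan is a direct computation: Minkowski enlargement distributes over unions, so we can apply each $\X^i$'s condition separately and then recombine the resulting sum as a power mean. First, observe that
$$\X_t = \Big(\bigcup_{i\in\I}\X^i\Big)\oplus B_t = \bigcup_{i\in\I}(\X^i\oplus B_t) = \bigcup_{i\in\I}\X^i_t,$$
since $\dist(x,\X)=\min_i \dist(x,\X^i)$. By subadditivity of volume and the $(\alpha_i,\beta_i)$-volume growth of each $\X^i$,
$$\Vol(\X_t)\le \sum_{i\in\I}\Vol(\X^i_t)\le \sum_{i\in\I}\alpha_i\Vol(\X^i)(1+t\beta_i)^n \le \Big(\max_i\alpha_i\Big)\sum_{i\in\I}\Vol(\X^i)(1+t\beta_i)^n.$$
Dividing by $\Vol(\X)$ and writing $S=\sum_j\Vol(\X^j)$, this becomes
$$\frac{\Vol(\X_t)}{\Vol(\X)}\le A\cdot \E_{I\sim q_\I}\big[(1+t\beta_I)^n\big].$$
It therefore remains to show $\E_{I\sim q_\I}[(1+t\beta_I)^n]\le (1+tB)^n$ with $B=\E[\beta_I^n]^{1/n}$.

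This last inequality is the main step, and it is Minkowski's inequality in $L^n(q_\I)$. Viewing $1$ and $t\beta_I$ as nonnegative functions of $I$, the triangle inequality for the $L^n$ norm gives
$$\E[(1+t\beta_I)^n]^{1/n}=\|1+t\beta_I\|_{L^n}\le \|1\|_{L^n}+t\|\beta_I\|_{L^n}=1+t\,\E[\beta_I^n]^{1/n}=1+tB.$$
Raising both sides to the $n$-th power yields the volume growth bound with constants $(A,B)$.

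Finally, we check the side conditions. Since $\Vol(\X)\le\sum_i\Vol(\X^i)$ and each $\alpha_i\ge1$, we get $A\ge 1$. Also $B\le\max_i\beta_i$, because it is a power mean of the $\beta_i$. The union of finitely many compact bodies is again a compact body. The only point needing care is recognizing the Minkowski (power-mean) step; a cruder bound such as replacing every $\beta_i$ by $\max_i\beta_i$ would also be valid but would not give the stated $B$.
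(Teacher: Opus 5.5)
Your proposal is correct and follows essentially the same route as the paper's proof: cover $\X_t$ by the enlarged pieces $\X^i_t$, apply each piece's growth bound, pull out $\max_i \alpha_i$, and finish with Minkowski's (triangle) inequality in $L^n(q_\I)$. Your extra side checks, that $A \ge 1$ and that $\X_t = \bigcup_i \X^i_t$ holds with equality rather than mere containment, are correct and harmless additions.
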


\subsubsection{Volume growth under set exclusion}

We also show that the volume growth condition is preserved under set exclusion.
We provide the proof of Lemma~\ref{Lem:VolGrowthExclusion} in Appendix~\ref{Sec:VolGrowthExclusion}.

\begin{lemma}\label{Lem:VolGrowthExclusion}
Let $\Y \subset \R^n$ be a compact body that satisfies the $(\alpha,\beta)$-volume growth condition for some $\alpha \in [1,\infty)$ and $\beta \in (0,\infty)$. 
Let $\X = \Y \setminus \Z$, where $\Z \subset \Y$ is an open set with $\Vol(\Z) < \Vol(\Y)$, and assume $\X$ is compact.
Then $\X$ satisfies the $(A, \beta)$-volume growth condition where $A = \alpha \cdot \frac{\Vol(\Y)}{\Vol(\X)}$.
\end{lemma}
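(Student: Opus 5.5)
The argument is a direct monotonicity comparison. Since $\X = \Y \setminus \Z \subseteq \Y$, the enlarged sets are nested: for every $t > 0$,
$$\X_t = \X \oplus B_t \subseteq \Y \oplus B_t = \Y_t,$$
because any point within distance $t$ of some point of $\X$ is within distance $t$ of the same point, which lies in $\Y$. Hence $\Vol(\X_t) \le \Vol(\Y_t)$.

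Next we apply the $(\alpha,\beta)$-volume growth condition for $\Y$ (Definition~\ref{Def:VolGrowth}), which gives $\Vol(\Y_t) \le \alpha (1+t\beta)^n \Vol(\Y)$. Dividing by $\Vol(\X)$ yields
$$\frac{\Vol(\X_t)}{\Vol(\X)} \le \frac{\Vol(\Y_t)}{\Vol(\Y)} \cdot \frac{\Vol(\Y)}{\Vol(\X)} \le \alpha \, \frac{\Vol(\Y)}{\Vol(\X)} \, (1+t\beta)^n = A\,(1+t\beta)^n,$$
which is exactly the claimed $(A,\beta)$-volume growth condition.

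It remains to check that this is a legitimate instance of the definition. First, $\X$ is a compact body with positive volume. Compactness is assumed. Since $\Z \subseteq \Y$ is measurable, $\Vol(\X) = \Vol(\Y) - \Vol(\Z) > 0$, so the division above is valid. A compact set of positive volume need not have nonempty interior in general, so we take the hypothesis ``$\X$ is compact'' together with the paper's standing convention that $\X$ is a body. Second, $A \ge 1$ holds because $\alpha \ge 1$ and $\Vol(\Y) \ge \Vol(\X)$. There is no substantive obstacle; the only point needing care is this bookkeeping on the hypotheses, in particular $\Vol(\X) > 0$.
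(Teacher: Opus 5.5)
Your proof is correct and follows the paper's argument: nest $\X_t \subseteq \Y_t$, apply the $(\alpha,\beta)$-volume growth of $\Y$, and divide by $\Vol(\X)$. Your additional checks, that $\Vol(\X) = \Vol(\Y) - \Vol(\Z) > 0$ and $A \ge 1$, are valid bookkeeping that the paper leaves implicit.
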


\subsection{R\'enyi divergence and other statistical divergences}
\label{Sec:Renyi}

Let $\rho$ and $\pi$ be probability distributions on $\R^n$ absolutely continuous with respect to the Lebesgue measure, so we can represent them by their density functions.
Assume $\rho \ll \pi$.

The \textit{R\'enyi divergence} of order $q \in (1,\infty)$ between $\rho$ and $\pi$ is:
$$\sR_q(\rho \,\|\, \pi) = \frac{1}{q-1} \log \E_\pi\left[\left(\frac{\rho}{\pi}\right)^q\right].$$
Recall $\sR_q(\rho \,\|\, \pi) \ge 0$ for all $\rho$ and $\pi$, and $\sR_q(\rho \,\|\, \pi) = 0$ if and only if $\rho = \pi$.
We also recall the map $q \mapsto \sR_q(\rho \,\|\, \pi)$ is increasing.
The limit $q \to 1$ is the \textit{Kullback-Leibler (KL) divergence} or relative entropy:
$$\lim_{q \to 1} \sR_q(\rho \,\|\, \pi) = \KL(\rho \,\|\, \pi) = \E_\rho\left[\log \frac{\rho}{\pi} \right].$$
Therefore, $\KL(\rho \,\|\, \pi) \le \sR_q(\rho \,\|\, \pi)$ for all $q > 1$.
The \textit{total variation distance} between $\rho$ and $\pi$ is:
$$\TV(\rho, \pi) = \sup_{A \subseteq \R^n} |\rho(A) - \pi(A)|.$$ 
Recall by \textit{Pinsker's inequality}, we have
$2 \, \TV(\rho, \pi)^2 \le \KL(\rho \,\|\, \pi)$ for any $\rho \ll \pi$.
For a review of the properties above, see e.g.,~\cite{van2014renyi}.

Suppose $\rho$ and $\pi$ both have support $\X \subset \R^n$, and $\rho \ll \pi$.
We say that $\rho$ is \textbf{$M$-warm} with respect to $\pi$ for some $M \in [1, \infty)$ if 
$$\sup_{x \in \X} \frac{\rho(x)}{\pi(x)} \le M.$$
Note that if $\rho$ is $M$-warm with respect to $\pi$, then
$\KL(\rho\,\|\,\pi) \le \log M$ and $\sR_q(\rho\,\|\,\pi) \le \log M$ for all $q \in (1,\infty)$;
in particular, 
$$\sR_\infty(\rho\,\|\,\pi) := \lim_{q \to \infty} \sR_q(\rho\,\|\,\pi) \le \log M.$$

\subsection{Review: Outer convergence of In-and-Out under isoperimetry}
\label{Sec:OuterConvergence}

We recall that when it succeeds, \textbf{In-and-Out} has a rapid mixing guarantee to $\pi$ in R\'enyi divergence under Poincar\'e inequality; see Lemma~\ref{Lem:OuterConvergence} below.
This follows from combining the convergence guarantee of the ideal \textit{Proximal Sampler} algorithm under Poincar\'e inequality (see Lemma~\ref{Lem:PSOuterConvergence} in Section~\ref{Sec:ProxSampler} for a review), together with a control on the bias of the \textbf{In-and-Out} algorithm conditioned on the success event which was shown in~\cite[Lemma~5]{kook2026and}.
For completeness, we provide the proof of Lemma~\ref{Lem:OuterConvergence} in Appendix~\ref{Sec:OuterConvergenceProof}.

\begin{lemma}\label{Lem:OuterConvergence}
    Assume $\pi \propto \one_\X$ satisfies a Poincar\'e inequality with constant $\CPI \in [1,\infty)$.
    Let $q \in [2,\infty)$ and $h > 0$.
    Let $\rho_0$ be a probability distribution on $\X$ with $\rho_0 \ll \pi$ and $\sR_q(\rho_0 \,\|\, \pi) < \infty$.
    Let $T_0 := \max\left\{0, \left\lceil \frac{q \left(\sR_q(\rho_0 \,\|\, \pi)-1\right)}{\log\left(1 + \frac{h}{\CPI}\right)} \right\rceil \right\}$, and let $T \ge T_0$ be the desired number of iterations.
    Let $\Succ$ be the success event that \textbf{In-and-Out} (Algorithm~\ref{Alg:In-and-Out}) runs without failure for $T$ iterations.
    Assume $\Pr(\Succ) \ge 1-\eta$ for some $\eta \in [0,\frac{1}{2}]$.
    Then, conditioned on $\Succ$, the output $x_T \sim \rho_T$ of \textbf{In-and-Out} satisfies:
    $$\sR_q(\rho_T \,\|\, \pi) \le 
    \left(1 + \frac{h}{\CPI}\right)^{-\frac{1}{q} (T-T_0)} + 4\eta.$$
\end{lemma}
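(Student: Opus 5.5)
The plan is to compare \textbf{In-and-Out} with the ideal \textit{Proximal Sampler}, which is the same algorithm with $N=\infty$.
\begin{itemize}
\item \textbf{Ideal chain.} Let $\mu_T$ denote the law of the ideal chain after $T$ steps, started from $\rho_0$.
\item \textbf{Outer convergence.} Lemma~\ref{Lem:PSOuterConvergence} (the Proximal Sampler under a Poincar\'e inequality, in R\'enyi divergence) gives the two-phase contraction. While $\sR_q \ge 1$, each step decreases $\sR_q(\mu_i\,\|\,\pi)$ by at least $\frac{1}{q}\log(1+\frac{h}{\CPI})$. Once $\sR_q < 1$, it contracts geometrically at rate $(1+\frac{h}{\CPI})^{-1/q}$.
\item \textbf{Burn-in.} By the choice of $T_0$, after $T_0$ steps we have $\sR_q(\mu_{T_0}\,\|\,\pi)\le 1$. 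Over the remaining $T-T_0$ steps this yields $\sR_q(\mu_T\,\|\,\pi)\le (1+\frac{h}{\CPI})^{-(T-T_0)/q}$.
\end{itemize}

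Next I would relate $\rho_T$ to $\mu_T$ by a coupling.
\begin{itemize}
\item \textbf{Coupling.} Run both chains with the same randomness. The truncated chain is then exactly the ideal chain on the event $\Succ$ that no backward step needs $N$ or more trials.
\item \textbf{Density bound.} Hence $\rho_T(A) = \Pr(x_T\in A\mid\Succ) \le \mu_T(A)/\Pr(\Succ)$, i.e.\ $\rho_T \le \mu_T/(1-\eta)$ pointwise.
\item \textbf{R\'enyi transfer.} This gives
$$\E_\pi[(\rho_T/\pi)^q] \le (1-\eta)^{-q}\,\E_\pi[(\mu_T/\pi)^q],$$
so $\sR_q(\rho_T\,\|\,\pi)\le \sR_q(\mu_T\,\|\,\pi) + \frac{q}{q-1}\log\frac{1}{1-\eta}$.
\item \textbf{Bias term.} For $q\ge2$ and $\eta\le\frac12$ we have $\frac{q}{q-1}\le 2$ and $\log\frac{1}{1-\eta}\le 2\eta$ (using $\log\frac1{1-\eta}\le \eta/(1-\eta)\le 2\eta$). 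The additive error is therefore at most $4\eta$, which matches the statement.
\end{itemize}

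The main obstacle is the coupling step. I need the ideal chain's law $\mu_T$ to dominate the joint event without circularity. The cleanest route is to define the ideal chain as the same chain using an unbounded sequence of proposals per step, with the truncated chain reading only the first $N$ proposals.

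One subtlety I expect in the convergence step is that Lemma~\ref{Lem:PSOuterConvergence} must be applied with the exact two-phase form in order to reproduce the $T_0$ threshold involving $\sR_q - 1$. If that lemma is instead stated as a single geometric contraction, I would instead use the crude bound, which still suffices whenever $T\ge T_0$, since the burn-in drives $\sR_q$ below $1$ first.
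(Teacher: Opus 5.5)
Your proposal is correct and follows essentially the same route as the paper: apply Lemma~\ref{Lem:PSOuterConvergence} to the ideal Proximal Sampler started from $\rho_0$, transfer to the conditioned \textbf{In-and-Out} output via $\sR_q(\rho_T \,\|\, \pi) \le \sR_q(\mu_T \,\|\, \pi) + \frac{q}{q-1}\log\frac{1}{1-\eta}$, and bound the bias term by $4\eta$. The only difference is that the paper cites \citep[Lemma~5]{kook2026and} for the bias bound, while your shared-randomness coupling giving $\rho_T \le \mu_T/(1-\eta)$ proves it directly; your closing hedge is unnecessary, since Lemma~\ref{Lem:PSOuterConvergence} is already stated in the two-phase form you need.
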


Lemma~\ref{Lem:OuterConvergence} gives the number of \textit{outer} iterations of the \textbf{In-and-Out} algorithm to achieve a desired error threshold.
Then it remains to control the complexity of implementing each outer iteration via rejection sampling; we can do this via the notion of volume growth condition.

\section{Analysis of the In-and-Out algorithm}
\label{Sec:AnalysisInAndOut}

We provide an analysis of the \textbf{In-and-Out} algorithm, focusing on the key steps to prove Theorem~\ref{Thm:Nonconvex}, and provide additional details in Appendix~\ref{App:ProofDetails}.
In Lemma~\ref{Lem:EscProb}, we bound the stationary escape probability under the volume growth condition.
We use this to bound the stationary failure probability in Lemma~\ref{Lem:SuccProbStat}, and bound the stationary expected number of trials in Lemma~\ref{Lem:NumTrialsStatCor}.
In Lemma~\ref{Lem:SuccProbInAndOut}, we prove inductively that warm start is maintained in each iteration, and use it to translate the bounds on failure probability and expected number of trials from stationarity to each iteration of the \textbf{In-and-Out} algorithm.
We combine these steps to prove Theorem~\ref{Thm:Nonconvex} in Appendix~\ref{App:NonconvexProof}.

\subsection{Bound on stationary escape probability under volume growth condition}
\label{Sec:BoundEscapeProb1}

Let $X \sim \pi \propto \one_\X$.
For $h > 0$, define the random variable
$Y_h := X + \sqrt{h} Z$
where $Z \sim \N(0,I_n)$ is independent, so 
$Y_h \sim \pi_h$ where
$\pi_h := \pi \ast \N(0, h I_n).$

In this section, we study the following ``stationary escape probability'' at scale $r > 0$ and $h > 0$:
$$\Pr(Y_h \notin \X_r) = \pi_h\left(\X_r^\complement \right).$$
(Here ``stationary'' means that $X \sim \pi$, so $Y_h \sim \pi_h$.)

For $m \in \mathbb{N}$, we define the \textit{Gaussian tail probability} $Q_m \colon [0,\infty) \to [0,1]$ by:
$$Q_m(r) := \Pr(\|Z\|\ge r)$$
where $Z \sim \N(0,I_m)$ is a standard Gaussian random variable in $\R^m$.

We show the following bound on the stationary escape probability under the volume growth condition. This will allow us to show that a typical step stays ``close'' to $\X$. We provide the proof of Lemma~\ref{Lem:EscProb} in Appendix~\ref{Sec:LemEscProbProof}.

\begin{lemma}\label{Lem:EscProb}
Assume $\X \subset \R^n$ satisfies the $(\alpha,\beta)$-volume growth condition for some $\alpha \in [1,\infty)$ and $\beta \in (0,\infty)$.
If $$0 < h \le \frac{1}{2n^3 \beta^2} \enspace ,$$
then for all $r \ge 0$:
\begin{align}\label{Eq:EscProb}
    \pi_h\left(\X_r^\complement \right) \le 
\alpha(n+1) \cdot Q_{2n}\left(\frac{r}{\sqrt{h}}\right).
\end{align}
\end{lemma}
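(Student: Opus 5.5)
The plan is to bound the density $\pi_h$ pointwise by a Gaussian tail in the distance to $\X$. We then integrate over the level sets of $\dist(\cdot,\X)$ using the volume growth condition, and finally expand $(1+s\beta)^n$ binomially so that each term becomes a chi-distribution tail of dimension between $n$ and $2n$.

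\emph{Step 1 (pointwise bound).} For $y$ with $\dist(y,\X)=s$, every $x \in \X$ satisfies $\|y-x\| \ge s$. Hence
$$\pi_h(y) = \frac{1}{\Vol(\X)}\int_\X \varphi_h(y-x)\,\dx \le \frac{1}{\Vol(\X)}\Pr\big(\|\sqrt h Z\|\ge s\big) = \frac{Q_n(s/\sqrt h)}{\Vol(\X)},$$
where $\varphi_h$ is the density of $\N(0,hI_n)$. The point of this step is that the bound depends on $y$ only through $s=\dist(y,\X)$.

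\emph{Step 2 (coarea / integration by parts).} Let $V(s)=\Vol(\X_s)$, which is nondecreasing. Then
$$\pi_h(\X_r^\complement) \le \frac{1}{\Vol(\X)}\int_r^\infty Q_n(s/\sqrt h)\,\d V(s).$$
Integrating by parts, using that $Q_n(s/\sqrt h)V(s)\to 0$ as $s\to\infty$ (polynomial growth of $V$ against Gaussian decay), and dropping the nonpositive boundary term $-Q_n(r/\sqrt h)V(r)$, gives
$$\pi_h(\X_r^\complement) \le \int_r^\infty \frac{V(s)}{\Vol(\X)}\, p_n(s)\,\ds,$$
where $p_n(s)=-\frac{\d}{\ds}Q_n(s/\sqrt h) = c_n h^{-n/2}s^{n-1}e^{-s^2/(2h)}$ is the density of $\|\sqrt h Z\|$. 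Applying the $(\alpha,\beta)$-volume growth condition~\eqref{Eq:VolGrowth} then yields
$$\pi_h(\X_r^\complement) \le \alpha\int_r^\infty (1+s\beta)^n p_n(s)\,\ds.$$

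\emph{Step 3 (binomial expansion and dimension shift).} Write $(1+s\beta)^n=\sum_{k=0}^n\binom nk\beta^k s^k$. The factor $s^k p_n(s)$ is proportional to the chi density of dimension $n+k$, and computing the normalizing constants gives
$$\int_r^\infty s^k p_n(s)\,\ds = h^{k/2}\,2^{k/2}\frac{\Gamma(\frac{n+k}{2})}{\Gamma(\frac n2)}\, Q_{n+k}\!\left(\frac{r}{\sqrt h}\right).$$
Since $m\mapsto Q_m(t)$ is nondecreasing (adding coordinates only increases the norm), $Q_{n+k}\le Q_{2n}$ for every $k\le n$.

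It remains to show each coefficient is at most $1$. We use $\binom nk\le n^k/k!$ and $2^{k/2}\Gamma(\frac{n+k}{2})/\Gamma(\frac n2)\le (n+k)^{k/2}\le (2n)^{k/2}$. Then the $k$-th coefficient is at most
$$\frac{\big(\sqrt2\, n^{3/2}\beta\sqrt h\big)^k}{k!}\le 1 \qquad\text{when } h\le \frac{1}{2n^3\beta^2}.$$
Summing the $n+1$ terms gives~\eqref{Eq:EscProb}.

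The main obstacle I anticipate is Step 2: making the Stieltjes integration rigorous for a merely compact body, namely handling the monotonicity of $V$ and justifying the decay at infinity. If this becomes delicate, I would instead use the layer-cake form $Q_n(s/\sqrt h)=\int_s^\infty p_n(u)\,\du$ and Fubini. This gives $\int_{\X_r^\complement} Q_n(\dist(y,\X)/\sqrt h)\,\dy \le \int_r^\infty p_n(u)\,\Vol(\X_u)\,\du$ directly, using only that $\{y\notin\X_r:\dist(y,\X)\le u\}\subseteq\X_u$. This route avoids integration by parts entirely.
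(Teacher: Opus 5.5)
Your proposal is correct and follows the paper's proof essentially step for step: the ball-containment bound $\pi_h(y)\le Q_n(\dist(y,\X)/\sqrt h)/\Vol(\X)$, reduction to $\int_r^\infty p_n(s)\,\Vol(\X_s)/\Vol(\X)\,\ds$ by integration by parts, the volume growth condition, binomial expansion into chi tails $Q_{n+k}$ with the Gamma-ratio bound (which the paper obtains from Wendel's inequality), and finally $Q_{n+k}\le Q_{2n}$. Two small differences: your Fubini/layer-cake variant of Step 2 is a cleaner justification than the paper's co-area formula with $\|\nabla u\|=1$ and $\V'(r)=A(r)$, and because you keep the $1/k!$ from $\binom nk\le n^k/k!$ (the paper discards it), your coefficient bound sums to the slightly sharper constant $e\alpha$ instead of $(n+1)\alpha$.
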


\paragraph{Comparison with convex case:}
Let us compare this result with the convex case from~\cite{kook2026and}.
When $\X$ is convex and contains a ball of radius $1$,~\cite[Lemma~13]{kook2026and} shows that:
$\pi_h(\X_r^\complement) \le \exp(\frac{hn^2}{2}) \cdot Q_1\left(\frac{r-hn}{\sqrt{h}}\right) \le \exp(-\frac{r^2}{2h} + rn)$;
this involves the one-dimensional Gaussian tail $Q_1$, and thus has an exponential decay for all $r > 0$.
When $\X$ is nonconvex, our bound~\eqref{Eq:EscProb} in Lemma~\ref{Lem:EscProb} involves the $2n$-dimensional Gaussian tail $Q_{2n}\left(\frac{r}{\sqrt{h}}\right)$, which only provides an exponential decay after $r > \sqrt{hn}$.
This difference results in a smaller step size $h = \tilde O(n^{-3})$ in the nonconvex case, compared to $h = \tilde O(n^{-2})$ in the convex case~\citep{kook2026and};
this manifests as the $\tilde O(n^3)$ iteration complexity of \textbf{In-and-Out} in the nonconvex case (Theorem~\ref{Thm:Nonconvex}), compared to $\tilde O(n^2)$ in the convex case~\citep{kook2026and}.

Let us illustrate where the difference above comes from.
A key step of the proof of Lemma~\ref{Lem:EscProb} is to consider $y \in \X_r^\complement$, which lies at distance $u(y) > 0$ from $\X$, and we want to bound the probability $\Pr(y + \sqrt{h} Z \in \X)$ where $Z \sim \N(0,I_n)$.
\begin{itemize}
    \item When $\X$ is convex, there is a halfspace $H_y$ containing $y$ that is contained in $\X^\complement$ (see Figure~\ref{Fig:Sep}(a)), 
    so we can bound: $\Pr(y + \sqrt{h} Z \in \X) \le \Pr(y + \sqrt{h} Z \in H_y^\complement)$, which is equal to the one-dimensional Gaussian tail probability $Q_1\left(\frac{u(y)}{\sqrt{h}}\right)$.
    
    \item When $\X$ is nonconvex, such a halfspace containment may not hold; however, a weaker ball containment still holds.
    Note the ball of radius $u(y) = \dist(y,\X)$ centered at $y$ does not intersect $\X$ (see Figure~\ref{Fig:Sep}(b)):
    $B(y,u(y)) \subseteq \X^\complement$,
    so
    $\X \subseteq B(y,u(y))^\complement$.
    Then we can bound: $\Pr(y + \sqrt{h} Z \in \X) \le \Pr(y + \sqrt{h} Z \in B(y,u(y))^\complement)$, which is equal to the $n$-dimensional Gaussian tail probability $Q_{n}\left(\frac{u(y)}{\sqrt{h}}\right)$.
    Subsequent steps of the proof result in a bound involving the $2n$-dimensional Gaussian tail probability $Q_{2n}$.
    See Appendix~\ref{Sec:LemEscProbProof} for details.
\end{itemize}

\begin{figure}[t!]
    \centering
    \subfloat[$\X$ convex: There is a halfspace containing $y$ that does not intersect $\X$.]{\includegraphics[width=.4\textwidth]{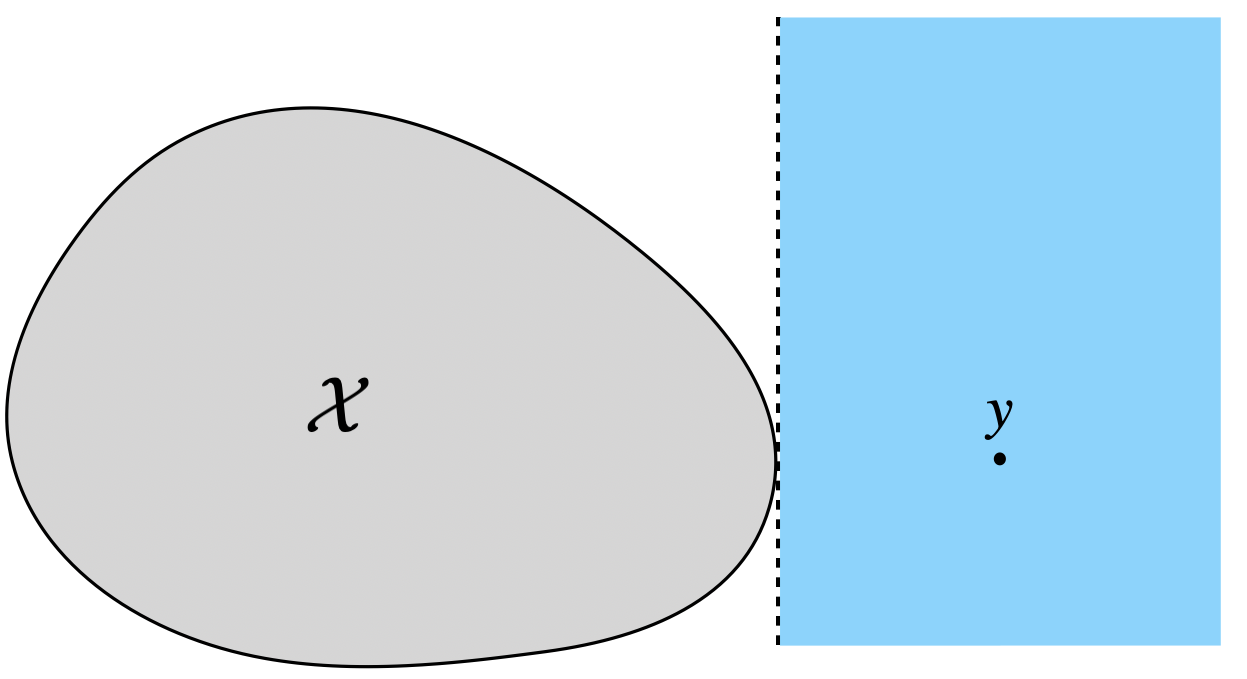}}
    \qquad
    \subfloat[$\X$ nonconvex: There is a ball containing $y$ that does not intersect $\X$.]{\includegraphics[width=.4\textwidth]{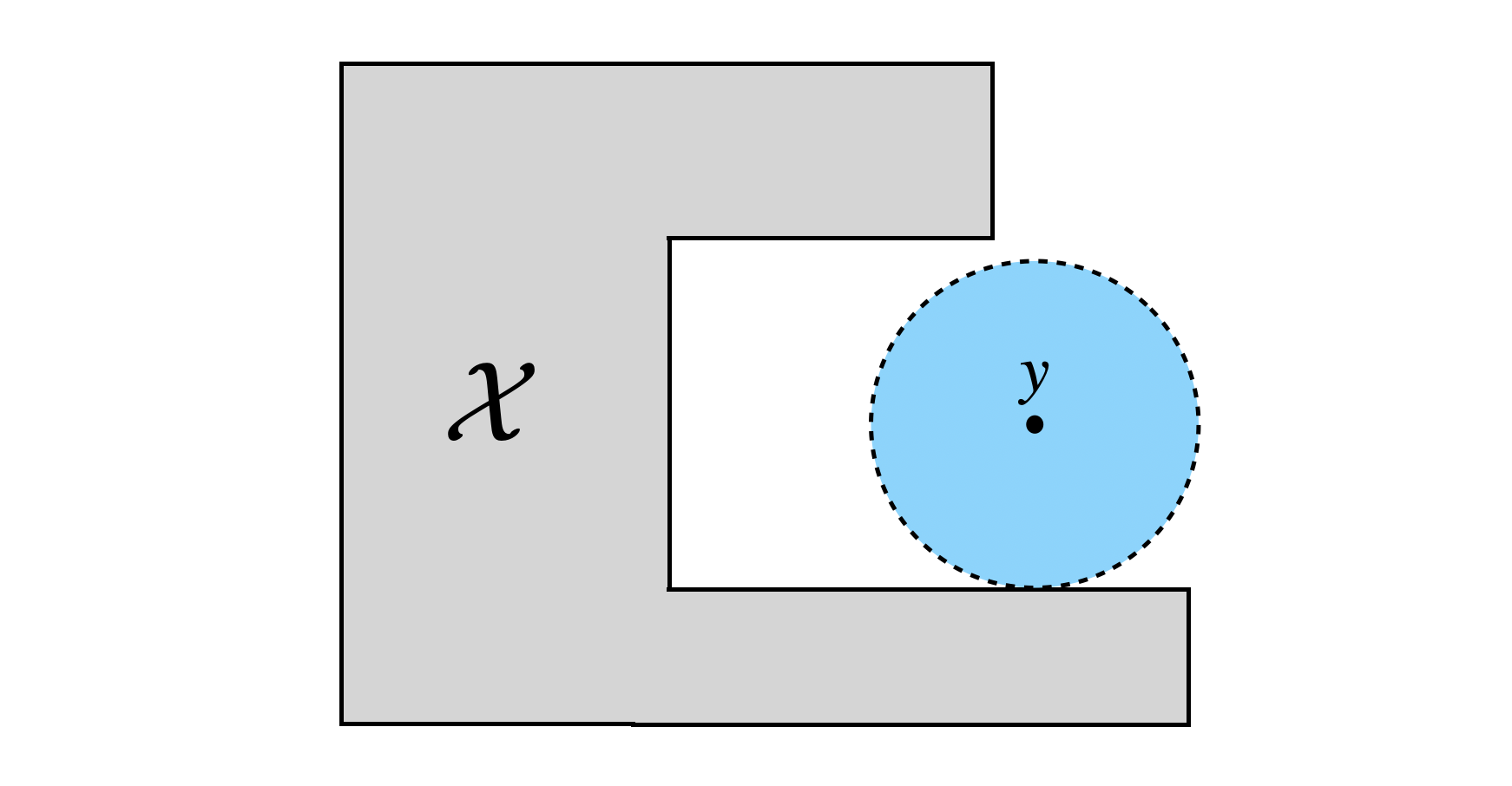}}
    \caption{Difference of containment guarantee in the (a) convex case and (b) nonconvex case.}
    \label{Fig:Sep}
\end{figure}

\subsection{Bound on stationary failure probability}
\label{Sec:SuccessStationary}

For $h > 0$, let $\ell_h \colon \R^n \to [0,1]$ denote the \textit{local conductance}~\citep{kook2026and}:
\begin{align}\label{Eq:LocalConductance}
    \ell_h(y) = \int_\X \N_h(y-x) \, \dx = \Pr_{Z \sim \N(0, I_n)}\left(y+\sqrt{h}Z \in \X \right).
\end{align}

For $h > 0$ and $N \in \bN$, we define the failure probability under stationarity (where ``stationarity'' means the expectation is under $\pi_h$) to be:
\begin{align}\label{Eq:FailProbStat}
    F(h,N) := \E_{Y \sim \pi_h}[(1-\ell_h(Y))^N].
\end{align}
We have the following bound.
We provide the proof of Lemma~\ref{Lem:SuccProbStat} in Appendix~\ref{App:SuccessStationary}.

\begin{lemma}\label{Lem:SuccProbStat}
    Let $\X \subset \R^n$ be a compact body which satisfies the $(\alpha,\beta)$-volume growth condition.
    Assume:
    \begin{align*}
        h &\le \frac{1}{2n^2 \beta^2 \cdot \max\{n, \, \log((n+1) \alpha S) \}} \\
        N &= 8\alpha S \log S
    \end{align*}
    where $S \ge 3$ is arbitrary.
    Then we have: 
    \begin{align*}
        F(h,N) \le \frac{3}{S}.
    \end{align*}
\end{lemma}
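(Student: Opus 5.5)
We bound $F(h,N)=\E_{Y\sim\pi_h}[(1-\ell_h(Y))^N]$ by splitting $\R^n$ into a ``near'' region $\X_r$ and its complement $\X_r^\complement$, for a radius $r$ chosen below. On $\X_r^\complement$ we use the trivial bound $(1-\ell_h)^N\le 1$. Lemma~\ref{Lem:EscProb} then applies, since $h\le \frac{1}{2n^3\beta^2}$ is implied by the assumption, and gives
$$\pi_h(\X_r^\complement)\le \alpha(n+1)\,Q_{2n}(r/\sqrt h).$$
We choose $r=c\sqrt{h}\,\sqrt{n+\log((n+1)\alpha S)}$ with $c$ a moderate constant. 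A standard Gaussian norm concentration bound, $Q_{2n}(s)\le \exp(-(s-\sqrt{2n})^2/2)$, then makes this term at most $1/S$.

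On $\X_r$ we split once more by level sets of the local conductance. Set $\delta := \frac{1}{\alpha S}$, and separate the points $y\in\X_r$ with $\ell_h(y)\ge \delta/4$ from those with $\ell_h(y)<\delta/4$.
\begin{itemize}
\item On the first set, $(1-\ell_h)^N\le \exp(-N\delta/4)=\exp(-2\log S)=S^{-2}\le 1/S$.
\item For the second set we use that $\pi_h$ has density $\pi_h(y)=\ell_h(y)/\Vol(\X)$, which follows from symmetry of the Gaussian kernel. Hence
$$\pi_h(\{y\in\X_r:\ell_h(y)<\delta/4\})\le \frac{\delta}{4}\cdot\frac{\Vol(\X_r)}{\Vol(\X)}\le \frac{\delta}{4}\,\alpha(1+r\beta)^n$$
by the volume growth condition.
\end{itemize}

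The main obstacle is to check that $(1+r\beta)^n=O(1)$ under the step-size hypothesis. We have $r\beta\le c\beta\sqrt{h\max\{n,\log((n+1)\alpha S)\}}$ up to a factor $\sqrt2$. The hypothesis $h\le \frac{1}{2n^2\beta^2\max\{n,\log((n+1)\alpha S)\}}$ gives $r\beta\lesssim c/n$, so $(1+r\beta)^n\le e^{c'}$. The tail term forces $r/\sqrt h\ge\sqrt{2n}+\sqrt{2\log((n+1)\alpha S^{\cdot})}$, which is $\le 2\sqrt{2\max\{n,\log((n+1)\alpha S)\}}$. So $c\approx 2\sqrt2$, $r\beta\le 2/n$, and $(1+r\beta)^n\le e^2$.

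With these constants the low-conductance piece is at most $e^2\delta\alpha/4\cdot$(constant). Here the threshold $\delta/4$ in place of $\delta$ absorbs the factor $e^2/4<2$, since the exponent $N\delta/4$ still gives $S^{-2}$. The piece is therefore $\lesssim 2/S$. Summing the three contributions gives $F(h,N)\le 3/S$, using $S\ge3$ for the slack. If the constants come out slightly off, I would adjust the threshold level, for example to $\delta/e^2$, at the cost of a factor in $N$'s exponent. I would also check that $N=8\alpha S\log S$ leaves enough room for this.
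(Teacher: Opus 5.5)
Your argument follows the paper's proof. Both use the same three-way split: $\X_r^\complement$ via Lemma~\ref{Lem:EscProb} and the Gaussian tail bound, a high-conductance part of $\X_r$, and a low-conductance part of $\X_r$ controlled through $\pi_h=\ell_h/\Vol(\X)$ and volume growth. Both take $r=\sqrt{8h\max\{n,\log((n+1)\alpha S)\}}$, so that $r\beta\le 2/n$ and $(1+r\beta)^n\le e^2$.

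The gap is in the constant bookkeeping with your primary threshold $\delta/4=\frac{1}{4\alpha S}$. Your estimates give $\frac1S$, $S^{-2}$ and $\frac{e^2}{4S}$ for the three pieces, so $F(h,N)\le\frac1S\bigl(1+\frac{e^2}{4}+\frac1S\bigr)$. This exceeds $\frac3S$ whenever $3<S<1/(2-e^2/4)\approx 6.5$; for example, at $S=4$ you get about $0.774$ against the target $0.75$. So ``$S\ge3$ for the slack'' does not close the argument.

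The paper instead uses the threshold $\frac{\log S}{N}=\frac{1}{8\alpha S}$. With it, each of the three pieces is at most exactly $\frac1S$, using $e^2<8$. Your fallback threshold $\delta/e^2$ also works: $N\delta/e^2=\frac{8}{e^2}\log S\ge\log S$, and the low-conductance piece becomes at most $\frac1S$.

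You should also fix how $r$ is defined. Define it with the $\max$, or as $\sqrt{2hn}+\sqrt{2h\log((n+1)\alpha S)}$. Do not write it literally as $2\sqrt2\,\sqrt{h(n+\log((n+1)\alpha S))}$: that version only gives $r\beta\le 2\sqrt2/n$. Then $(1+r\beta)^n$ can reach $e^{2\sqrt2}>8$, which $N=8\alpha S\log S$ cannot absorb.
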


\subsection{Bound on the expected number of trials under stationarity}
\label{Sec:NumTrialsStationary}

For $h > 0$, $N \in \bN$, and $y \in \R^n$, let $M_{h,N}(y)$ be a random variable that denotes the number of trials in the rejection sampling for implementing one backward step of the \textbf{In-and-Out} algorithm with step size $h$ starting from $y$ and limited to $N$ trials, where each trial is independent with success probability given by the local conductance $\ell_h(y)$.
By definition, we can write
\begin{align}\label{Eq:EqDistTrials}
    M_{h,N}(y) \stackrel{d}{=} \min\{G_h(y), N\}
\end{align}
where $\stackrel{d}{=}$ denotes equality in distribution, and $G_h(y) \in \bN$ is a geometric random variable with success probability $\ell_h(y)$.

Define the \textbf{expected number of trials} in the rejection sampling (for implementing one backward step) starting from $y$:
\begin{align}\label{Eq:NumTrialsStat}
    \mu_{h,N}(y) = \E[M_{h,N}(y)]
\end{align} 
where the expectation is over the geometric random variable $G_h(y)$.
Note for all $h > 0$, $N \in \bN$, and $y \in \R^n$:
\begin{align}\label{Eq:PtwiseBound}
    \mu_{h,N}(y) \le \min \left\{\frac{1}{\ell_h(y)}, N\right\}.
\end{align}
(Indeed, since $M_{h,N}(y) \le N$, we have $\mu_{h,N}(y) = \E[M_{h,N}(y)] \le N$.
And since $M_{h,N}(y) \le G_h(y)$, we also have $\mu_{h,N}(y) = \E[M_{h,N}(y)] \le \E[G_h(y)] = 1/\ell_h(y)$.)

We have the following bound.
We provide the proof of Lemma~\ref{Lem:NumTrialsStatCor} in Appendix~\ref{App:NumTrialsStationary}.

\begin{lemma}\label{Lem:NumTrialsStatCor}
    Let $\X \subset \R^n$ be a compact body that satisfies the $(\alpha,\beta)$-volume growth condition for some $\alpha \in [1,\infty)$ and $\beta \in (0,\infty)$.
    Assume:
    \begin{align*}
        h &\le \frac{1}{2n^2 \beta^2 \cdot \max\{n, \, \log((n+1)\alpha S)\}} \\
        N &= 8\alpha S \log S
    \end{align*}
    where $S \ge 3$ is arbitrary.
    Then:
    $$\E_{\pi_h}[\mu_{h,N}] \le 16 \alpha \log S.$$
\end{lemma}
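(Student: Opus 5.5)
We bound $\E_{\pi_h}[\mu_{h,N}]$ by splitting $\R^n$ according to the local conductance. The pointwise bound~\eqref{Eq:PtwiseBound} gives $\mu_{h,N}(y) \le \min\{1/\ell_h(y), N\}$. The key identity is that $\pi_h$ has density
\[
\pi_h(y) = \frac{1}{\Vol(\X)} \int_\X \N_h(y-x)\,\dx = \frac{\ell_h(y)}{\Vol(\X)}.
\]
So for any measurable set $A$,
\[
\E_{\pi_h}\big[\tfrac{1}{\ell_h}\one_A\big] = \frac{\Vol(A \cap \{\ell_h>0\})}{\Vol(\X)}.
\]
This converts the expected trial count into a volume ratio, which the $(\alpha,\beta)$-volume growth condition controls.

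Fix a radius $r>0$ to be chosen, and write
\[
\E_{\pi_h}[\mu_{h,N}] \le \E_{\pi_h}\big[\tfrac{1}{\ell_h}\one_{\X_r}\big] + N \cdot \pi_h(\X_r^\complement) \le \frac{\Vol(\X_r)}{\Vol(\X)} + N \pi_h(\X_r^\complement).
\]
For the first term, Definition~\ref{Def:VolGrowth} gives at most $\alpha(1+r\beta)^n \le \alpha e^{nr\beta}$. We therefore want $nr\beta \lesssim 1$ up to a $\log S$ factor, or better $nr\beta\le \log 2$ so this term is at most $2\alpha$. For the second term, Lemma~\ref{Lem:EscProb} applies, since the assumed $h$ is at most $\frac{1}{2n^3\beta^2}$. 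It gives $N\pi_h(\X_r^\complement)\le 8\alpha S\log S\cdot \alpha(n+1)Q_{2n}(r/\sqrt h)$, and we need this to be $O(\alpha\log S)$. It therefore suffices that $Q_{2n}(r/\sqrt h) \le \frac{1}{(n+1)\alpha S}$.

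The main step is choosing $r$ to satisfy both constraints; this is exactly where the hypothesis on $h$ enters. Use a standard chi tail bound, e.g.\ $Q_{2n}(\sqrt{2n}+s)\le e^{-s^2/2}$. It suffices to take
\[
r = \sqrt h\big(\sqrt{2n} + \sqrt{2\log((n+1)\alpha S)}\big) \le 2\sqrt{2h\max\{n,\log((n+1)\alpha S)\}}.
\]
Write $L=\max\{n,\log((n+1)\alpha S)\}$. Then $h\le \frac{1}{2n^2\beta^2 L}$ gives $r\le 2\sqrt{2}\cdot\frac{1}{\sqrt2\, n\beta}=\frac{2}{n\beta}$, so $nr\beta\le 2$. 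The first term is then at most $e^2\alpha\le 8\alpha$, and the second is at most $8\alpha\log S$. The total is at most $8\alpha+8\alpha\log S\le 16\alpha\log S$, using $\log S\ge1$ for $S\ge3$.

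If the constants do not close exactly with this tail bound, I would instead redo the split using the threshold sets $\{\ell_h\ge 1/N\}$ versus the rest, or reuse the tail estimate from the proof of Lemma~\ref{Lem:SuccProbStat}. The only delicate point is the constant bookkeeping in the Gaussian tail, which is absorbed by the factor $2$ slack in the assumption on $h$.
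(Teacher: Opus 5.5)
Your proof is correct and follows the paper's argument. You split $\R^n$ into $\X_r$ and $\X_r^\complement$. On $\X_r$ you use $\mu_{h,N}\le 1/\ell_h$, the density formula $\pi_h=\ell_h/\Vol(\X)$ and volume growth. On $\X_r^\complement$ you use $\mu_{h,N}\le N$, Lemma~\ref{Lem:EscProb} and the Gaussian tail bound, with $nr\beta\le 2$ coming from the hypothesis on $h$.

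The differences are only cosmetic. Your radius $r=\sqrt h(\sqrt{2n}+\sqrt{2\log((n+1)\alpha S)})$ is slightly smaller than the paper's $\sqrt{8h\max\{n,\log((n+1)\alpha S)\}}$. You also bound the first term by $8\alpha$ directly rather than by $N/S$. The constants close as written, so your fallback paragraph is unnecessary.
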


\subsection{Per-iteration success probability and expected number of trials}
\label{Sec:SuccProbInAndOut}

Recall an iteration of \textbf{In-and-Out} is one forward step and one backward step which involves up to $N$ rejection sampling trials.
We have the following guarantee on the success probability and the expected number of trials of the rejection sampling step in the \textbf{In-and-Out} algorithm.
We provide the proof of Lemma~\ref{Lem:SuccProbInAndOut} in Appendix~\ref{Sec:SuccProbInAndOutProof}.

\begin{lemma}\label{Lem:SuccProbInAndOut}
    Let $\pi \propto \one_\X$ where $\X \subset \R^n$ satisfies the $(\alpha,\beta)$-volume growth condition for some $\alpha \in [1,\infty)$ and $\beta \in (0,\infty)$. 
    Let $x_0 \sim \rho_0$ where $\rho_0$ is $M$-warm with respect to $\pi$ for some $M \in [1,\infty)$.
    Assume:
    \begin{align*}
        h &\le \frac{1}{2\beta^2 n^3 \cdot \max\{1, \, \frac{1}{n} \log((n+1)\alpha S)\}} \\
        N &= 8\alpha S \log S
    \end{align*}
    where $S \ge 3$ is arbitrary.
    Assume \textbf{In-and-Out} succeeds for $i \ge 0$ iterations (i.e., conditioned on the event ``reach iteration $i$''), and is currently at $x_i \sim \rho_i$ which is $M$-warm with respect to $\pi$.
    Then the following hold:
    \begin{enumerate}
        \item The next iteration (to compute $x_{i+1}$) succeeds with probability:
        $$\Pr_{x_i \sim \rho_i} \left(\text{next iteration succeeds} \mid \text{reach iteration $i$}\right) \ge 1-\frac{3M}{S}.$$
        \item The expected number of trials in the next iteration is
        $$\E_{x_i \sim \rho_i} \left[\text{$\#$ trials until success in next iteration} \mid \text{reach iteration $i$} \right] \le 16 M\alpha  \log S.$$
        \item Upon accepting $x_{i+1} \sim \rho_{i+1}$, the next distribution $\rho_{i+1}$ is $M$-warm with respect to $\pi$.
    \end{enumerate}
\end{lemma}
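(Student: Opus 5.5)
The plan is to reduce all three claims to the stationary bounds (Lemma~\ref{Lem:SuccProbStat} and Lemma~\ref{Lem:NumTrialsStatCor}) through a change of measure that warmness allows. First we check that the step size hypothesis here is the same as in those lemmas: $2\beta^2 n^3 \max\{1, \frac{1}{n}\log((n+1)\alpha S)\} = 2n^2\beta^2 \max\{n, \log((n+1)\alpha S)\}$. Next, write $\tilde\rho_i := \rho_i \ast \N(0,hI_n)$ for the law of $y_i$ after the forward step. Warmness passes through convolution: $\tilde\rho_i(y) = \int \rho_i(x)\N_h(y-x)\,\dx \le M\int \pi(x)\N_h(y-x)\,\dx = M\pi_h(y)$ for every $y$. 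So $\tilde\rho_i$ is $M$-warm with respect to $\pi_h$ on all of $\R^n$.

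\emph{Claims 1 and 2.} Conditioned on $y_i = y$, the backward step fails exactly when all $N$ independent trials land outside $\X$, which has probability $(1-\ell_h(y))^N$. Hence
\[
\Pr(\text{fail}) = \E_{\tilde\rho_i}[(1-\ell_h)^N] \le M\,\E_{\pi_h}[(1-\ell_h)^N] = M F(h,N) \le \frac{3M}{S},
\]
where the last step uses Lemma~\ref{Lem:SuccProbStat}. The integrand is nonnegative, so the density comparison applies directly. In the same way, the expected number of trials is $\E_{\tilde\rho_i}[\mu_{h,N}] \le M\,\E_{\pi_h}[\mu_{h,N}] \le 16M\alpha\log S$ by Lemma~\ref{Lem:NumTrialsStatCor}. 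If the paper's notion of ``trials until success'' is conditioned on success, we would instead bound it by the unconditional count divided by the success probability; we take the unconditional reading, which matches the stated constant.

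\emph{Claim 3.} This is the main subtle point, because rejection sampling with a threshold introduces bias. Conditioned on $y_i=y$ and on the backward step succeeding, $x_{i+1}$ is distributed exactly as $\pi^{X|Y=y}(x) = \one_\X(x)\N_h(y-x)/\ell_h(y)$: truncating a sequence of i.i.d.\ trials at the first success within $N$ attempts does not change the law of the accepted point. However, conditioning on success reweights $y$ by the success probability $1-(1-\ell_h(y))^N$. The joint density of $(y_i, x_{i+1})$ on the success event, before normalizing, is therefore $\tilde\rho_i(y)\,[1-(1-\ell_h(y))^N]\,\one_\X(x)\N_h(y-x)/\ell_h(y)$. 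Integrating over $y$ and using $\tilde\rho_i \le M\pi_h$ together with $\pi_h(y) = \ell_h(y)/\Vol(\X)$ gives an unnormalized density of at most $\frac{M}{\Vol(\X)}\int \N_h(y-x)\,\dy \cdot \one_\X(x) = M\pi(x)$, since the factor $1-(1-\ell_h)^N$ is at most $1$. After normalizing by the success probability $p\ge 1-3M/S$, we get $\rho_{i+1} \le (M/p)\,\pi$.

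This leaves a loss of a factor $1/p$. There are two ways to handle it. The first is to note that the conditioning in the lemma is on ``reach iteration $i$'', so warmness should be tracked for the unnormalized sub-probability law of $x_{i+1}$ on the event that all iterations so far have succeeded. That law is exactly bounded by $M\pi$ via the computation above, and it is this sub-probability version that is propagated by induction. The second is to follow \citep{kook2026and}, where warmness is measured without renormalization and the bias is handled separately (as in Lemma~\ref{Lem:OuterConvergence}). I would use the first formulation, keeping the per-iteration density bound $\le M\pi$ unnormalized, and feed it into both claims 1 and 2, since they only use $\tilde\rho_i \le M\pi_h$ pointwise.
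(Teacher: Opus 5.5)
Your treatment of claims 1 and 2 is the paper's argument. You push warmness through the forward convolution, $\rho_i^Y \le M\pi_h$, and change measure into Lemmas~\ref{Lem:SuccProbStat} and~\ref{Lem:NumTrialsStatCor}. Your check that the two forms of the step-size condition agree is something the paper leaves implicit.

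On claim 3 you depart from the paper, and your version is the more careful one. The paper writes $\rho_{i+1}(x)=\int \rho_i^Y(y)\,\pi^{X\mid Y}(x\mid y)\,\dy \le M\pi(x)$. This treats the accepted point as an exact draw from the ideal proximal kernel, and ignores that conditioning on success tilts the law of $y_i$ by $1-(1-\ell_h(y))^N$.

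The $1/p$ loss you found is real, not an artifact of the estimate. Take $M=1$ and $\rho_i=\pi$, and let $s(x)=\E_{y\sim\N(x,hI_n)}[1-(1-\ell_h(y))^N]$ be the success probability started from $x$. Then the law of $x_{i+1}$ given success is $\pi(x)s(x)/\E_\pi[s]$, which is not $1$-warm unless $s$ is constant on $\X$. So claim 3 cannot hold for the renormalized law.

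Your sub-probability induction is the right repair: the law $\nu_i$ of $x_i$ restricted to the event $E_i$ of reaching iteration $i$ satisfies $\nu_i\le M\pi$. It amounts to coupling the algorithm with the ideal Proximal Sampler, whose iterates stay exactly $M$-warm, as in \citep{kook2026and}.

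You should say explicitly that this repair also changes the form of claims 1 and 2. Let $N_i$ be the trial count of iteration $i$. From $\nu_i \le M\pi$ alone you obtain the joint bounds $\Pr(E_i\cap E_{i+1}^\complement)\le M\,F(h,N)\le 3M/S$ and $\E[N_i\,\one_{E_i}]\le 16M\alpha\log S$. The conditional versions stated in the lemma would pick up an extra factor $1/\Pr(E_i)$.

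This costs nothing downstream. The proof of Theorem~\ref{Thm:Nonconvex} only uses $\Pr(E_i)\Pr(E_{i+1}^\complement\mid E_i)$ and $\Pr(E_i)\,\E[N_i\mid E_i]$ inside sums. Your joint bounds therefore give $\Pr(E_T^\complement)\le 3MT/S=\eta$ and the same total trial count. The R\'enyi guarantee comes from Lemma~\ref{Lem:OuterConvergence}, which never uses warmness of the conditioned iterates.
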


The lemma above allows us to induct over the iterations and bound the failure probability of \textbf{In-and-Out} over $T$ iterations.
We use it to prove Theorem~\ref{Thm:Nonconvex} in Appendix~\ref{App:NonconvexProof}.

\section{Discussion and future work}

In this paper, we have provided an efficient algorithm (\textbf{In-and-Out}) to sample from a large family of compact, nonconvex sets.
Our result significantly advances the frontier of provably efficient uniform sampling; such results were previously known for the special convex bodies and star-shaped bodies.
We show it for nonconvex sets satisfying isoperimetry (Poincar\'e inequality) and a volume growth condition; these conditions are preserved under natural set operations, and thereby capture a much wider class beyond convexity.

The \textbf{In-and-Out} algorithm is an implementation of the ideal Proximal Sampler scheme, which can be viewed as an approximate proximal discretization of the Langevin dynamics, and explains why isoperimetry in the form of Poincar\'e inequality is a natural condition for fast mixing.
The difficulty is in controlling the complexity of implementing the Proximal Sampler scheme using rejection sampling.
We showed how to control the failure probability under the volume growth condition.
We conjecture that a Poincar\'e inequality and a bounded outer isoperimetry $\xi(\X) = \area(\partial \X)/\Vol(\X)$ (which is even weaker than volume growth) are sufficient for efficient sampling.
Our result in this paper assumes a warm start initialization.
How to generate a warm start efficiently for a nonconvex set is an open problem.

It would be interesting to study if other algorithms, such as the Ball Walk or the Metropolis Random Walk, also work for sampling from this class of nonconvex sets; all existing proofs still assume convexity/star-shapedness.
Beyond uniform sampling, it would be interesting to study how to efficiently sample more general distributions under isoperimetry and \textit{without} smoothness, e.g.\ extending the works of~\cite{AK91sampling,frieze1999,LV06,kookV2025} beyond logconcavity.

\appendix

\section{Additional discussion}

\subsection{A review of the Proximal Sampler}
\label{Sec:ProxSampler}

The ideal \textit{Proximal Sampler} algorithm is the following (Algorithm~\ref{Alg:ProxSamp}).
Here \textit{ideal} means it assumes we can implement both the forward step \ref{Eq:AlgFwdProx} and backward step \ref{Eq:AlgBwdProx} in Algorithm~\ref{Alg:ProxSamp}.
If we implement the backward step~\ref{Eq:AlgBwdProx} via rejection sampling with a threshold on the number of trials, then we obtain the \textbf{In-and-Out} algorithm (Algorithm~\ref{Alg:In-and-Out}).
Below, $\N(y, h I_{n}) \cdot \one_{\X}$ denotes the Gaussian distribution $\N(y, h I_{n})$ restricted to $\X$.

\begin{algorithm}[H]
\hspace*{\algorithmicindent} 
\begin{algorithmic}[1] 
\caption{: \textbf{Proximal Sampler} for sampling from $\pi \propto \one_\X$ \\
\textbf{Input:} Initial point $x_0 \sim \rho_0 \in \P(\X)$, step size $h > 0$, number of steps $T \in \mathbb{N}$ \\
\textbf{Output:} $x_T \sim \rho_{T} \in \P(\X)$
} 
\label{Alg:ProxSamp}

\FOR{$i=0,\dotsc,T-1$}

\STATE Sample $y_{i}\sim
\N(x_{i},hI_{n})$
\label{Eq:AlgFwdProx}

\STATE Sample $x_{i+1} \sim \N(y_{i}, h I_{n}) \cdot \one_{\X}$
\label{Eq:AlgBwdProx}

\ENDFOR
\end{algorithmic}
\end{algorithm}

\paragraph{Proximal Sampler as Gibbs sampling.}
As described in~\citep{LST21structured}, we can derive the \textit{Proximal Sampler} algorithm as an alternating Gibbs sampling algorithm from a joint target distribution $\pi^{XY} \in \P(\R^{2n})$ with density function:
\begin{align*}
    \pi^{XY}(x,y) \propto \one_\X(x) \cdot \exp\left(-\frac{1}{2h} \|y-x\|^2 \right).
\end{align*}
Note the $X$-marginal of $\pi^{XY}$ is $\pi \propto \one_\X$.
If we can sample from $\pi^{XY}$, then we can return the $X$-component as a sample from $\pi$.
We apply \textit{Gibbs sampling} to sample from $\pi^{XY}$, which results in the following reversible Markov chain $(x_i,y_i) \mapsto (x_{i+1}, y_{i+1})$ via the alternating update:
\begin{align*}
    x_{i+1} \mid y_i \sim \pi^{X \mid Y=y_i} = \N(y_i, hI_n) \cdot \one_\X \\
    y_{i+1} \mid x_{i+1} \sim \pi^{Y \mid X = x_{i+1}} = \N(x_{i+1}, hI_n)
\end{align*}
which is precisely the \textit{Proximal Sampler} (Algorithm~\ref{Alg:ProxSamp}).
See also~\cite{CCSW22improved} for further optimization interpretation of the \textit{Proximal Sampler} as an approximate proximal discretization of the continuous-time Langevin dynamics.

\paragraph{A review on the convergence guarantee of the Proximal Sampler.}
We recall the following result on the convergence guarantee of the \textit{Proximal Sampler} under Poincar\'e inequality.
This result was shown in~\citep[Theorem~4]{CCSW22improved} for target distribution $\pi$ with full support on $\R^n$, and was extended to the case of uniform distribution $\pi \propto \one_\X$ in~\citep[Theorem~2]{kook2026and}.
We refer the reader to~\citep[Theorem~2]{kook2026and} for the proof.

\begin{lemma}[{\citep[Theorem~2]{kook2026and}}]\label{Lem:PSOuterConvergence}
    Assume $\pi \propto \one_\X$ satisfies a Poincar\'e inequality with constant $\CPI \in [1,\infty)$.
    Let $q \in [2,\infty)$.
    Let $\rho_0$ be a probability distribution on $\X$ with $\rho_0 \ll \pi$ and $\sR_q(\rho_0 \,\|\, \pi) < \infty$.
    Along the \textit{Proximal Sampler} algorithm from $x_0 \sim \rho_0$ with step size $h > 0$ for $T \in \mathbb{N}$ iterations, the output $x_T \sim \rho_T$ satisfies:
    $$\sR_q(\rho_T \,\|\, \pi) \le 
    \begin{cases}
        \sR_q(\rho_0 \,\|\, \pi) - \frac{T}{q} \log\left(1 + \frac{h}{\CPI}\right) ~~ & \text{ if } T \le \frac{q \left(\sR_q(\rho_0 \,\|\, \pi)-1\right)}{\log\left(1 + \frac{h}{\CPI}\right)} \\
        \left(1 + \frac{h}{\CPI}\right)^{-\frac{1}{q} (T-T_0)} & \text{ if } T \ge T_0 := \max\left\{0, \left\lceil \frac{q \left(\sR_q(\rho_0 \,\|\, \pi)-1\right)}{\log\left(1 + \frac{h}{\CPI}\right)} \right\rceil \right\}.
    \end{cases}$$
\end{lemma}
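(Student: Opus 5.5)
The plan is to follow the simultaneous heat-flow argument of~\citep{CCSW22improved}, adapted to the uniform target as in~\citep{kook2026and}. One iteration is viewed as a forward channel followed by a backward channel. In the forward step $x_i \mapsto y_i$, the law $\rho_i$ and the target $\pi$ are both convolved with $\N(0,hI_n)$. In the backward step, $\rho_i^Y$ and $\pi_h$ are both pushed through the same kernel $\pi^{X\mid Y}$, which returns $\pi$ from $\pi_h$. The goal is a per-iteration bound of the form
\[
\Phi\big(\sR_q(\rho_{i+1}\,\|\,\pi)\big) \le \Phi\big(\sR_q(\rho_i\,\|\,\pi)\big) - \tfrac{1}{q}\log\big(1+\tfrac{h}{\CPI}\big)
\]
for a suitable potential $\Phi$. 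Here $\Phi$ should be $\Phi(R)=R$ when $R\ge 1$ and $\Phi(R)=\log R$ when $R\le 1$. Iterating this bound gives the two regimes in the statement, with $T_0$ the first time the divergence drops below $1$.

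For the forward step, I would write $\rho_t=\rho_i*\N(0,tI_n)$ and $\pi_t=\pi*\N(0,tI_n)$ for $t\in[0,h]$; both are smooth with full support for $t>0$, so the singularity of $\one_\X$ causes no trouble. Both densities solve the heat equation, and the standard computation gives
\[
\frac{d}{dt}\sR_q(\rho_t\,\|\,\pi_t) = -\frac{q}{2}\,\frac{\E_{\pi_t}\big[(\rho_t/\pi_t)^q\,\|\nabla\log(\rho_t/\pi_t)\|^2\big]}{\E_{\pi_t}\big[(\rho_t/\pi_t)^q\big]}.
\]
Gaussian convolution preserves the Poincaré inequality with $\CPI(\pi_t)\le \CPI+t$. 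I would apply it to $\phi=(\rho_t/\pi_t)^{q/2}$, writing $F_s=\E_{\pi_t}[(\rho_t/\pi_t)^s]$, which lower-bounds the numerator by $\frac{4}{q^2(\CPI+t)}\big(F_q - F_{q/2}^2\big)$. The ratio $F_{q/2}^2/F_q$ is then controlled in terms of $\sR_q$ via monotonicity of Rényi divergence in the order. This yields $\frac{d}{dt}\sR_q \le -\frac{c}{q(\CPI+t)}$ when $\sR_q\ge1$, and $\frac{d}{dt}\log\sR_q \le -\frac{c}{q(\CPI+t)}$ when $\sR_q\le1$. Integrating $\frac{1}{\CPI+t}$ over $[0,h]$ produces the factor $\log(1+h/\CPI)$.

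For the backward step, I would use the time-reversal representation: $\pi^{X\mid Y}$ is the law at time $h$ of the reverse diffusion started from $\pi_h$, which is driven by $\nabla\log\pi_{h-t}$. Running this same reverse SDE from $\rho_i^Y$ gives a second simultaneous flow along which the same Rényi dissipation identity holds. The Poincaré constants along this flow are again at most $\CPI+t$, so the backward step contributes a contraction of the same form. Alternatively, the data-processing inequality alone already shows the backward step does not increase $\sR_q$; the point of the reversal argument is to double the gain so that the per-iteration constant is exactly as stated.

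I expect the main obstacle to be the regime-switching inequality: extracting a clean linear decrement when $\sR_q\ge1$ and a clean geometric decrement when $\sR_q\le1$ from the Poincaré lower bound on the Rényi information, including the factor relating orders $q$ and $q/2$ with the exact $\frac1q$ constant. A secondary technical point is justifying the reverse-SDE step when $\pi$ is a uniform distribution with a discontinuous density; I would handle it by working with the smooth $\pi_t$ for $t>0$ and passing to the limit at the endpoint.
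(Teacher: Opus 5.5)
The paper gives no proof of its own here and defers to \citep[Theorem~2]{kook2026and}; your plan is the same simultaneous heat-flow argument of \citep{CCSW22improved} adapted to $\pi \propto \one_\X$, and it is correct. Two adjustments: first, the forward step alone already gives $\frac{d}{dt}\sR_q \le -\frac{2(1-e^{-\sR_q})}{q(\CPI+t)} \le -\frac{\min\{1,\sR_q\}}{q(\CPI+t)}$, which integrates to exactly the stated per-iteration gain $\frac{1}{q}\log(1+\frac{h}{\CPI})$, so data processing suffices for the backward step and the reverse-SDE argument (which would improve this to $\frac{2}{q}$) is unnecessary; second, take the continuous potential $\Phi(R)=1+\log R$ for $R\le 1$, since with your $\log R$ branch, iterating $\Phi$ loses a factor $e$ against the stated deterministic $T_0$.
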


\subsection{The analogy with optimization}

The results for sampling logconcave densities can be viewed as parallels to classical results for optimization, where again, there are polynomial-time algorithms for convex optimization (when the objective function to be minimized and the feasible region of solutions are both convex)~\citep{GLS93geometric,Vai,lee2018efficient}. Since a logconcave density $\pi$ has a convex potential, i.e., $\pi(x) \propto e^{-f(x)}$ for some convex function $f$, this family is the natural analog of convex functions in optimization. 

\paragraph{Beyond convexity/logconcavity.}
For optimization, $\min_{x \in \X} f(x)$, there are mild deviations from convexity that still allow efficient algorithms; notably, when the objective function satisfies a gradient-domination condition (Polyak-\L{}ojaciewicz/PL inequality), simple algorithms such as the proximal gradient method or gradient descent (under smoothness) have exponential convergence guarantees in objective function to the minimum value~\citep{karimi2016linear}.
However, even under the PL condition, it is crucial that any local minimum is also a global minimum. 
On the other hand, if the function $f$ is allowed to have spurious local (non-global) minima, then the task of finding the global minimizer is NP-hard in the worst case.

For sampling, there are more significant extensions beyond logconcavity that still allow for efficient algorithms. 
We mention two directions of existing results for efficient sampling beyond log-concavity.
First, star-shaped bodies can be efficiently sampled with complexity polynomial in the dimension and the inverse fraction of volume taken up by the convex core (the non-empty subset of the star-shaped body that can ``see'' all points in the body)~\citep{CDV10}; as shown in~\citep{CDV10}, linear optimization over star-shaped bodies is hard, even to solve approximately, and even when the convex core takes up a constant fraction of the star-shaped body. 
Second, distributions satisfying a Poincar\'e or log-Sobolev inequality under \textit{smoothness}, i.e., with a bound on the Lipschitz constant of the logarithm of the target density~\citep{VW23rapid,CCSW22improved}; however, the iteration complexity scales with the smoothness constant, and in any case it does not apply to our setting of uniform on $\X$.
The work of~\cite{abbasi2017hit} studies the Hit-and-Run algorithm for sampling the uniform distribution over a nonconvex set under the assumption that it is the image of a convex set under a bi-Lipschitz measure preserving map, with a stronger oracle assumption beyond membership oracle, and with an iteration complexity that depends on the smoothness and curvature of the set.

Despite only such modest deviations from convexity being known to be tractable, the existing results above demonstrate that local minima being global minima or unimodality is not essential for sampling. In fact, intuition suggests that sampling a substantially more general class of distributions should be tractable. For optimization, the optimal solution can be hidden in a very small part of the feasible region of a slightly nonconvex domain, such a bottleneck appears unlikely for sampling --- small parts of the domain can be effectively ignored as they take up small measure. Moreover, assuming isoperimetry suggests that all parts are ``reachable''. 
This leads us to our first motivating question as described in Section~\ref{Sec:Intro}, and which we answer in this paper via the additional notion of volume growth condition.

\section{Proofs for the Analysis of In-and-Out}
\label{App:ProofDetails}

\subsection{Proof of Lemma~\ref{Lem:OuterConvergence}}
\label{Sec:OuterConvergenceProof}

\begin{replemma}{Lem:OuterConvergence}
    Assume $\pi \propto \one_\X$ satisfies a Poincar\'e inequality with constant $\CPI \in [1,\infty)$.
    Let $q \in [2,\infty)$ and $h > 0$.
    Let $\rho_0$ be a probability distribution on $\X$ with $\rho_0 \ll \pi$ and $\sR_q(\rho_0 \,\|\, \pi) < \infty$.
    Let $T_0 := \max\left\{0, \left\lceil \frac{q \left(\sR_q(\rho_0 \,\|\, \pi)-1\right)}{\log\left(1 + \frac{h}{\CPI}\right)} \right\rceil \right\}$, and let $T \ge T_0$ be the desired number of iterations.
    Let $\Succ$ be the success event that \textbf{In-and-Out} (Algorithm~\ref{Alg:In-and-Out}) runs without failure for $T$ iterations.
    Assume $\Pr(\Succ) \ge 1-\eta$ for some $\eta \in [0,\frac{1}{2}]$.
    Then, conditioned on $\Succ$, the output $x_T \sim \rho_T$ of \textbf{In-and-Out} satisfies:
    $$\sR_q(\rho_T \,\|\, \pi) \le 
    \left(1 + \frac{h}{\CPI}\right)^{-\frac{1}{q} (T-T_0)} + 4\eta.$$
\end{replemma}
\begin{proof}
    Let $\tilde \rho_T$ be the output of the ideal \textit{Proximal Sampler} algorithm with step size $h$ for $T$ iterations, starting from the same initialization $x_0 \sim \rho_0$ as \textbf{In-and-Out}.
    By Lemma~\ref{Lem:PSOuterConvergence}, we have the guarantee 
    \begin{align}\label{Eq:ProxSampConv}
        \sR_q(\tilde \rho_T \,\|\, \pi) \le \left(1 + \frac{h}{\CPI}\right)^{-\frac{1}{q} (T-T_0)}.
    \end{align}
    By~\citep[Lemma~5]{kook2026and}, the output $x_T \sim \rho_T$ of \textbf{In-and-Out} conditioned on the success event $\Succ$ satisfies:
    \begin{align*}
        \sR_q(\rho_T \,\|\, \pi) 
        \,\le\, \sR_q(\tilde \rho_T \,\|\, \pi) + \frac{q}{q-1} \log \frac{1}{1-\eta}
        \,\stackrel{\eqref{Eq:ProxSampConv}}{\le}\, \left(1 + \frac{h}{\CPI}\right)^{-\frac{1}{q} (T-T_0)} + 4\eta
    \end{align*}
    where the last inequality follows since $q \ge 2$ so $\frac{q}{q-1} \le 2$, and $\eta \le \frac{1}{2}$ so $\log \frac{1}{1-\eta} \le 2\eta$.
\end{proof}

\subsection{Proof of Lemma~\ref{Lem:EscProb}}
\label{Sec:LemEscProbProof}

We recall a few definitions.
For $h > 0$, let $\N_h \colon \R^n \to \R$ denote the probability density function of the Gaussian distribution $\N(0,hI_n)$:
$$\N_h(y) = (2\pi h)^{-\frac{n}{2}} \exp\left(-\frac{\|y\|^2}{2h}\right).$$
For $h > 0$, recall $\ell_h \colon \R^n \to [0,1]$ denotes the \textit{local conductance}:
\begin{align*}
    \ell_h(y) = \int_\X \N_h(y-x) \, \dx = \Pr_{Z \sim \N(0, I_n)}\left(y+\sqrt{h}Z \in \X \right).
\end{align*}
For $h > 0$, let
$\pi_h = \pi \ast \N(0,hI_n)$,
so $\pi_h$ has density function at any point $y \in \R^n$:
\begin{align}\label{Eq:FormulaDensity}
    \pi_h(y) = \frac{1}{\Vol(\X)} \int_\X \N_h(y-x) \, \dx = \frac{\ell_h(y)}{\Vol(\X)}.
\end{align}

\begin{replemma}{Lem:EscProb}
Assume $\X \subset \R^n$ satisfies the $(\alpha,\beta)$-volume growth condition for some $\alpha \in [1,\infty)$ and $\beta \in (0,\infty)$.
If $$0 < h \le \frac{1}{2n^3 \beta^2} \enspace ,$$
then for all $r \ge 0$:
$$\pi_h\left(\X_r^\complement \right) \le 
\alpha(n+1) \cdot Q_{2n}\left(\frac{r}{\sqrt{h}}\right).$$
\end{replemma}
\begin{proof}
    Let $\pi_h = \pi \ast \mathcal{N}(0, h I_n)$, so it has density function (see~\eqref{Eq:FormulaDensity}):
    $$\pi_h(y) = \frac{1}{\Vol(\X)} \int_\X \N_h(y-x) \, dx = \frac{\Pr_{Z \sim \N(0,I_n)}(y + \sqrt{h} Z \in \X)}{\Vol(\X)}.$$
    Recall the distance function $u(y) := \dist(y,\X) = \min_{x \in \X} \|x-y\|$.    
    If $y \in \X^\complement$, then $u(y) > 0$ and 
    $B(y,u(y)) \subseteq \X^\complement$, where $B(y,u(y))$ is the $\ell_2$-ball of radius $u(y)$ centered at $y$.
    Therefore:
    \begin{align}\label{Eq:BallContainment}
        \X \subseteq B(y,u(y))^\complement.
    \end{align}
    Then for $y \in \X^\complement$:
    \begin{align*}
        \Pr(y+\sqrt{h}Z \in \X) 
        &\le
        \Pr\left(y+\sqrt{h}Z \in B(y,u(y))^\complement\right) 
        = \Pr\left(\|Z\| \ge \frac{u(y)}{\sqrt{h}}\right) 
        = Q_n\left(\frac{u(y)}{\sqrt{h}}\right).
    \end{align*}
    This means for any $y \in \X^\complement$ we have the bound:
    \begin{align}\label{Eq:BoundProbSmooth}
        \pi_h(y) &= \frac{\Pr(y+\sqrt{h}Z \in \X)}{\Vol(\X)} 
        \le \frac{1}{\Vol(\X)} Q_n\left(\frac{u(y)}{\sqrt{h}}\right).
    \end{align}
    Fix $r_0 \ge 0$ (this is $r$ in the statement of the lemma);
    we want to bound $\pi_h\left(\X_{r_0}^\complement \right)$.
    By the co-area formula, we can write:
    \begin{align}
    \allowdisplaybreaks
        \pi_h\left(\X_{r_0}^\complement \right)
        &= \int_{\X_{r_0}^\complement}  \pi_h(y) \, \dy \notag \\
        &\stackrel{\eqref{Eq:BoundProbSmooth}}{\le} \frac{1}{\Vol(\X)} \int_{\X_{r_0}^\complement} Q_n\left(\frac{u(y)}{\sqrt{h}}\right) \, \dy  \notag \\
        &= \frac{1}{\Vol(\X)} \int_{r_0}^\infty \int_{\{y \in \R^n \colon u(y) = r\}} Q_n\left(\frac{r}{\sqrt{h}}\right) \, \d\mathcal{H}^{n-1}(y) \, \dr \notag \\
        &= \frac{1}{\Vol(\X)} \int_{r_0}^\infty Q_n\left(\frac{r}{\sqrt{h}}\right) \cdot \underbrace{\int_{\{y \in \R^n \colon u(y) = r\}} \, \d\mathcal{H}^{n-1}(y)}_{= A(r)} \, \dr \notag \\
        &= \frac{1}{\Vol(\X)} \int_{r_0}^\infty Q_n\left(\frac{r}{\sqrt{h}}\right) \, A(r) \, \dr. \label{Eq:Calc01}
    \end{align}
    In the third line above, we use the co-area formula to write the integration in terms of the level sets of the distance function $u(y)$.
    We also use the fact that the distance function $u(y) = \dist(y,\X)$ satisfies $\|\nabla u(y)\| = 1$ for almost every $y \in \X^\complement$, which was shown in~\cite[Proof of Lemma~5.4]{BNN13} without assuming convexity; see also Lemma~\ref{Lem:DistanceFunctionLipschitz} for a self-contained proof.
    In the last line above, we define $A(r)$ to be the surface area of $\X_r$ which is equal to the integral of the $(n-1)$-dimensional Hausdorff measure $\d\mathcal{H}^{n-1}$ on the level set of constant distance:
    $$A(r) := \int_{\{y \in \R^n \colon u(y) = r\}} d\mathcal{H}^{n-1}(y) = \area(\partial(\X_r)).$$

    Next, similar to the convex case~\citep{kook2026and}, we do integration by parts.    
    Let $Z \sim \N(0, I_n)$ in $\R^n$.
    Recall $Y = \|Z\|$ has the chi distribution with density function $\gamma_n(r)$ at $r \ge 0$ given by:
    $$\gamma_n(r) = \frac{1}{N_n} r^{n-1} \exp\left(-\frac{r^2}{2}\right)$$
    where $N_n$ is the normalizing constant
    $$N_n := 2^{\frac{n}{2}-1} \Gamma\left(\frac{n}{2}\right)$$
    and recall the Gamma function $\Gamma(m) = \int_0^\infty e^{-t} \, t^{m-1} \dt$ for $m > 0$.
    Then by definition of $Q_n$,
    $$Q_n'(r) = \frac{\d}{\dr}Q_n(r) = -\gamma_n(r).$$
    
    For $h > 0$, define $G_h \colon [0,\infty) \to [0,1]$ by, for all $r \ge 0$:
    $$G_h(r) := \Pr\left(\|Z\| \ge \frac{r}{\sqrt{h}} \right) = Q_n\left(\frac{r}{\sqrt{h}}\right).$$
    We note the property:
    $$G_h'(r) = \frac{\d}{\dr}G_h(r) = \frac{1}{\sqrt{h}} Q_n'\left(\frac{r}{\sqrt{h}}\right)
    = - \frac{1}{\sqrt{h}} \gamma_n\left(\frac{r}{\sqrt{h}}\right) = -\frac{1}{h^{n/2} N_n} r^{n-1} \exp\left(-\frac{r^2}{2h}\right).$$

    Define the volume
    $\V(r) = \Vol(\X_r)$,
    and recall the relation:
    $$\V'(r) = \frac{\d}{\dr} \V(r) = \area(\partial (\X_r)) = A(r).$$

    Then by integration by parts:
    \begin{align}
        \int_{r_0}^\infty Q_n\left(\frac{r}{\sqrt{h}}\right) A(r) \, \dr
        &= \int_{r_0}^\infty G_h(r) \, \V'(r) \, \dr \notag \\
        &= \left(G_h(r) \, \V(r) \right)\Bigg|_{r=r_0}^\infty - \int_{r_0}^\infty G_h'(r) \, \V(r) \, \dr \notag \\
        &= \underbrace{\lim_{r \to \infty} G_h(r) \, \V(r)}_{=0} - \underbrace{G_h(r_0) \V(r_0)}_{\ge 0} 
        + \frac{1}{h^{n/2}N_n} \int_{r_0}^\infty  r^{n-1} \exp\left(-\frac{r^2}{2h}\right) \, \V(r) \, \dr \notag \\
        &\le \frac{1}{h^{n/2}N_n} \int_{r_0}^\infty  r^{n-1} \exp\left(-\frac{r^2}{2h}\right) \, \V(r) \, \dr \notag \\
        &= \frac{1}{N_n} \int_{r_0/\sqrt{h}}^\infty  u^{n-1} \exp\left(-\frac{u^2}{2}\right) \, \V(u \sqrt{h}) \, \du. \label{Eq:Calc02}
    \end{align}
    where the last inequality is because $\lim_{r \to \infty} G_h(r) \, \V(r) = 0$ (since $\V(r)$ has a polynomial growth by the volume growth assumption, and $G_h(r)$ has an exponential decay), and also because trivially $G_h(r_0) \V(r_0) \ge 0$.
    In the last equality we use change of variable
    $u = \frac{r}{\sqrt{h}}$, so $\dr = \sqrt{h} \, \du$ and note the $h^{n/2}$ term disappears in the last line.

    By the volume growth condition, we have for all $u,h > 0$:
    $$\frac{\V(u \sqrt{h})}{\V(0)} 
    \le \alpha \left(1+u\sqrt{h} \beta \right)^n
    = \alpha \sum_{i=0}^n \binom{n}{i} \left(u\sqrt{h} \beta \right)^i
    \le \alpha \sum_{i=0}^n \left(u\sqrt{h} \beta n \right)^i$$
    where in the last step we use the bound
    $\binom{n}{i} = \frac{n \cdot (n-1) \cdots (n-i+1)}{i!} \le \frac{n^i}{i!} \le n^i.$
    
    Plugging this to~\eqref{Eq:Calc02}, we obtain:
    \begin{align*}
        \int_{r_0}^\infty Q_n\left(\frac{r}{\sqrt{h}}\right) A(r) \, \dr 
        &\le \frac{\V(0) \cdot \alpha}{N_n} \int_{r_0/\sqrt{h}}^\infty  u^{n-1} \exp\left(-\frac{u^2}{2}\right) \, \sum_{i=0}^n \left(u\sqrt{h} \beta n \right)^i \, \du \\
        &= \frac{\V(0) \cdot \alpha}{N_n} \cdot \sum_{i=0}^n \int_{r_0/\sqrt{h}}^\infty  u^{n-1} \exp\left(-\frac{u^2}{2}\right) \,  \left(u\sqrt{h} \beta n \right)^i \, \du.
    \end{align*}
    For each $i \in \{0,1,\dots,n\}$, we have:
    \begin{align*}
        \int_{r_0/\sqrt{h}}^\infty  u^{n-1} \exp\left(-\frac{u^2}{2}\right) \,  \left(u\sqrt{h} \beta n \right)^i \, \du 
        &= \left(\sqrt{h} \beta n \right)^i \int_{r_0/\sqrt{h}}^\infty  u^{n+i-1} \exp\left(-\frac{u^2}{2}\right) \, \du \\
        &= N_{n+i} \left(\sqrt{h} \beta n \right)^i \int_{r_0/\sqrt{h}}^\infty  \gamma_{n+i}(u) \, \du \\
        &\le N_n \cdot \int_{r_0/\sqrt{h}}^\infty  \gamma_{n+i}(u) \, \du \\
        &= N_n \cdot  Q_{n+i}\left(\frac{r_0}{\sqrt{h}}\right).
    \end{align*}
    where the inequality above follows from our assumption $h \le 1/(2n^3 \beta^2)$ (see Lemma~\ref{Lem:GrowthConstant}).
    
    Combining the two parts above and plugging in the result back to~\eqref{Eq:Calc01} and~\eqref{Eq:Calc02}, we obtain:    \begin{align*}
        \pi_h\left(\X_{r_0}^\complement \right) 
        &\le \frac{1}{\V(0)} 
        \int_{r_0}^\infty Q_n\left(\frac{r}{\sqrt{h}}\right) A(r) \, \dr \\
        &\le \frac{\alpha}{N_n} \cdot \sum_{i=0}^n \int_{r_0/\sqrt{h}}^\infty  u^{n-1} \exp\left(-\frac{u^2}{2}\right) \,  \left(u\sqrt{h} \beta n \right)^i \, \du \\
        &\le \alpha \cdot \sum_{i=0}^n   Q_{n+i}\left(\frac{r_0}{\sqrt{h}}\right) \\
        &\le \alpha \cdot (n+1) \cdot  Q_{2n}\left(\frac{r_0}{\sqrt{h}}\right)
    \end{align*}
    where in the last inequality we use the trivial bound $Q_{n+i}(r) \le Q_{2n}(r)$ for $i \le n$ and $r \ge 0$.
\end{proof}

\subsubsection{Helper lemma on gradient of distance function}
\label{Sec:DistanceFunctionLipschitz}

\begin{lemma}\label{Lem:DistanceFunctionLipschitz}
    Let $\X \subset \R^n$ be a closed set with non-empty interior.
    Define $u \colon \R^n \to [0,\infty)$ by:
    $$u(y) := \dist(y,\X) = \min_{x \in \X} \|x-y\|.$$
    Then $u$ is differentiable almost everywhere, and if $u$ differentiable at $y \in \X^\complement$, then $\|\nabla u(y)\| = 1$.
\end{lemma}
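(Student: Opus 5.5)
The plan is to show that $u$ is $1$-Lipschitz, invoke Rademacher's theorem for almost-everywhere differentiability, and then prove $\|\nabla u(y)\| = 1$ at points of differentiability outside $\X$. The upper bound $\|\nabla u\| \le 1$ comes from the Lipschitz bound. The lower bound comes from moving $y$ toward a nearest point of $\X$, along which $u$ decreases at unit rate.

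\emph{Lipschitz property and differentiability.} Fix $y, y' \in \R^n$ and let $x' \in \X$ attain $u(y') = \|x'-y'\|$; the minimum is attained because $\X$ is closed and nonempty. Then $u(y) \le \|x'-y\| \le \|x'-y'\| + \|y'-y\| = u(y') + \|y-y'\|$. By symmetry $|u(y)-u(y')| \le \|y-y'\|$, so $u$ is $1$-Lipschitz. Rademacher's theorem then gives differentiability of $u$ almost everywhere. At any point of differentiability, for each unit vector $v$ we have $|\langle \nabla u(y), v\rangle| = \lim_{t\to 0}|u(y+tv)-u(y)|/t \le 1$, and hence $\|\nabla u(y)\| \le 1$.

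\emph{Lower bound.} Let $y \in \X^\complement$ be a point of differentiability, so $u(y) > 0$. Choose a nearest point $x^* \in \X$ with $\|y-x^*\| = u(y)$, and set $v = (x^*-y)/\|x^*-y\|$. For $t \in [0, u(y)]$, the point $y+tv$ lies on the segment from $y$ to $x^*$, so
\begin{align*}
u(y+tv) \le \|x^* - (y+tv)\| = u(y) - t.
\end{align*}
It follows that $\langle \nabla u(y), v\rangle = \lim_{t\downarrow 0}(u(y+tv)-u(y))/t \le -1$. Cauchy--Schwarz then gives $\|\nabla u(y)\| \ge 1$, and together with the upper bound this yields $\|\nabla u(y)\| = 1$.

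\emph{Main obstacle.} None of these steps is difficult. The one point requiring care is the lower bound, which needs a nearest point to exist and does not need it to be unique; closedness of $\X$ supplies existence. The argument never uses convexity, which is the feature needed for the co-area step in the proof of Lemma~\ref{Lem:EscProb}.
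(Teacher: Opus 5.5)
Your proposal is correct and follows essentially the same route as the paper: $1$-Lipschitz via the triangle inequality, Rademacher's theorem, the upper bound $\|\nabla u(y)\|\le 1$ from the Lipschitz property, and the lower bound by moving from $y$ toward a nearest point of $\X$ and applying Cauchy--Schwarz. The only difference is cosmetic: the paper also invokes the Lipschitz bound to get the exact identity $u(y-tv)=u(y)-t$ and hence $\langle \nabla u(y), v\rangle = 1$, whereas your one-sided inequality $\langle \nabla u(y), v\rangle \le -1$ already suffices.
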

\begin{proof}
    This follows the argument in~\cite[Proof of Lemma~5.4]{BNN13}.
    For any $x,y\in \R^n$, by triangle inequality we have:
    $$u(y) = \min_{z \in \X} \|y-z\|
    \le \min_{z \in \X} \left(\|y-x\| + \|x-z\| \right) = \|y-x\| + u(x).$$
    Exchanging $x$ and $y$ gives
    $|u(x) - u(y)| \le \|x-y\|$,
    which shows that $u$ is $1$-Lipschitz on $\R^n$.
    Then by Rademacher's theorem, $u$ is differentiable almost everywhere on $\R^n$.
    
    Now fix $y \in \X^\complement$ such that $u$ is differentiable at $y$. 
    Since $u$ is $1$-Lipschitz, we know $\|\nabla u(y)\| \le 1$.
    We will show $\|\nabla u(y)\| \ge 1$, which implies the claim $\|\nabla u(y)\| = 1$.    
    Since $\X$ is closed, there exists $x \in \X$ such that
    $u(y) = \|y-x\| > 0.$
    Define the unit vector
    $$v := \frac{y-x}{\|y-x\|}.$$
    Since $x \in \X$, for all $t \in (0, u(y))$ we have
    $$u(y-tv) \le \|y-tv-x\| = \left\| (y-x) \left(1-\frac{t}{\|y-x\|}\right) \right\| = \|y-x\| - t = u(y) - t.$$
    On the other hand, since $u$ is $1$-Lipschitz, we also have
    $$u(y-tv) \ge u(y) - \|tv\| = u(y) - t.$$
    Therefore, for all $t \in (0, u(y))$, we have $u(y-tv) = u(y) - t$.
    Hence, 
    $$\langle \nabla u(y), -v \rangle = \lim_{t \downarrow 0}\frac{u(y-tv)-u(y)}{t}
    = -1$$
    or equivalently, $\langle \nabla u(y), v \rangle = 1$.
    By Cauchy-Schwarz inequality, this implies
    $$1 = \langle \nabla u(y), v \rangle \le \|\nabla u(y) \| \cdot \|v\| = \|\nabla u(y)\|.$$
    This completes the proof.
\end{proof}

\subsubsection{Helper lemma on the normalizing constant}
\label{Sec:GrowthConstant}

Recall the normalizing constant $N_n = 2^{\frac{n}{2}-1} \Gamma(\frac{n}{2})$ for the chi distribution, where $\Gamma(m) = \int_0^\infty e^{-t} \, t^{m-1} \, dt$ is the Gamma function.
We have the following estimate.

\begin{lemma}\label{Lem:GrowthConstant}
    Assume $0 \le h \le 1/(2n^3 \beta^2)$ for some $n \ge 1$ and $\beta > 0$.
    Then for all $i \in \{0,1,\dots,n\}$,
    $$\left(\sqrt{h} n \beta \right)^i \cdot \frac{N_{n+i}}{N_n} \le 1.$$
\end{lemma}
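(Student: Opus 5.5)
The plan is to reduce the claim to a bound on ratios of consecutive normalizing constants and then telescope. Since $N_m = 2^{\frac{m}{2}-1}\Gamma(\frac{m}{2})$, we have
$$\frac{N_{m+1}}{N_m} = \sqrt{2}\,\frac{\Gamma(\frac{m+1}{2})}{\Gamma(\frac{m}{2})},$$
and therefore
$$\frac{N_{n+i}}{N_n} = \prod_{j=0}^{i-1} \frac{N_{n+j+1}}{N_{n+j}}.$$

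The key step is the standard Gamma-ratio estimate $\Gamma(x+\frac12)/\Gamma(x) \le \sqrt{x}$ for $x>0$. It follows from log-convexity of $\Gamma$: by Hölder (or Jensen), $\Gamma(x+\frac12) \le \Gamma(x)^{1/2}\Gamma(x+1)^{1/2} = \Gamma(x)\sqrt{x}$. With $x = \frac{m}{2}$ this gives
$$\frac{N_{m+1}}{N_m} \le \sqrt{2}\cdot\sqrt{\tfrac{m}{2}} = \sqrt{m}.$$

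Multiplying these bounds for $m = n, \dots, n+i-1$ and using $i \le n$, each factor is at most $\sqrt{n+i-1} \le \sqrt{2n}$. Hence
$$\frac{N_{n+i}}{N_n} \le (2n)^{i/2}.$$

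For the conclusion, the hypothesis $h \le 1/(2n^3\beta^2)$ gives $\sqrt{h}\,n\beta \le (2n)^{-1/2}$. Therefore
$$\left(\sqrt{h}\,n\beta\right)^i \frac{N_{n+i}}{N_n} \le (2n)^{-i/2}(2n)^{i/2} = 1.$$
The case $i = 0$ is trivial, and so is $h = 0$.

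The only step needing care is the Gamma-ratio inequality. It is classical (Gautschi/Wendel), and log-convexity gives it in one line, so I expect no real obstacle; the remaining work is checking that the crude bound $n+i-1 \le 2n$ matches the factor $2n^3$ in the hypothesis exactly, which it does.
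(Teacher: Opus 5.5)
Your proposal is correct and matches the paper's proof step for step. Both telescope $N_{n+i}/N_n$ into consecutive half-step Gamma ratios, bound each factor by $\sqrt{2n}$ using $i \le n$, and then use $\sqrt{h}\,n\beta \le (2n)^{-1/2}$. The only difference is that you derive $\Gamma(x+\frac{1}{2}) \le \sqrt{x}\,\Gamma(x)$ from log-convexity of $\Gamma$ instead of citing Wendel's inequality, and that log-convexity argument is exactly how Wendel's bound is proved.
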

\begin{proof}
    We recall a classical bound on Gamma function~\citep{wendel48} for any $x > 0$ and $0 < s < 1$:
    $$\Gamma(x+s) \le x^s \cdot \Gamma(x).$$
    Applying this bound with $s = \frac{1}{2}$, for each $i \in \{1,\dots,n\}$ we have:
    \begin{align*}
        \frac{\Gamma(\frac{n+i}{2})}{\Gamma(\frac{n}{2})}
        = \prod_{k=0}^{i-1} \frac{\Gamma(\frac{n+k}{2} + \frac{1}{2})}{\Gamma(\frac{n+k}{2})}
        \le \prod_{k=0}^{i-1} \left(\frac{n+k}{2}\right)^{1/2} \le \left(\frac{n+i-1}{2}\right)^{i/2} \le n^{i/2}.
    \end{align*}
    Therefore, for $i \in \{1,\dots,n\}$ we can bound:
    $$\frac{N_{n+i}}{N_n} = 2^{i/2} \cdot \frac{\Gamma(\frac{n+i}{2})}{\Gamma(\frac{n}{2})} \le (2n)^{i/2}.$$
    Therefore,
    $$(\sqrt{h} n \beta)^i \cdot \frac{N_{n+i}}{N_n} \le (\sqrt{2h} n^{3/2} \beta)^i \le 1$$
    since we assume 
    $h \le 1/(2n^3\beta^2)$.
\end{proof}

\subsection{Bound on expected failure probability under stationarity}
\label{App:SuccessStationary}

We recall the notations introduced in Section~\ref{Sec:SuccessStationary}.

\begin{replemma}{Lem:SuccProbStat}
    Let $\X \subset \R^n$ be a compact body which satisfies the $(\alpha,\beta)$-volume growth condition.
    Assume:
    \begin{align*}
        h &\le \frac{1}{2n^2 \beta^2 \cdot \max\{n, \, \log((n+1) \alpha S) \}} \\
        N &= 8\alpha S \log S
    \end{align*}
    where $S \ge 3$ is arbitrary.
    Then we have: 
    \begin{align*}
        F(h,N) \le \frac{3}{S}.
    \end{align*}
\end{replemma}
\begin{proof}
We will use Lemma~\ref{Lem:SuccProbStatGeneral} below.
Let 
$$r = \sqrt{8h \cdot \max\{n, \, \log((n+1)\alpha S)\}}.$$
Note this choice satisfies the constraint~\eqref{Eq:Constr1} for $r$.
The assumption on $h$ satisfies the constraint~\eqref{Eq:Constt1}.

Combining our choice of $r$ with the assumption on $h$:
$$r \le \sqrt{\frac{8 \cdot \max \{n, \log (\alpha (n+1)S)\}}{2n^2 \beta^2 \cdot \max\{n, \, \log(\alpha(n+1)S)\}}} = \frac{2}{n\beta}.$$
Therefore,
$(1+r\beta)^n \le \exp(rn\beta) \le e^2 < 8$.
Therefore, the choice $N = 8\alpha S \log S$ also satisfies the constraint~\eqref{Eq:ConstN1} for $N$.
With these choices, the bound from Lemma~\ref{Lem:SuccProbStatGeneral} yields:
\begin{align*}
    F(h,N) \le \frac{3}{S}.
\end{align*}
\end{proof}

\subsubsection{Helper lemma on a more general bound}

We can bound the expected failure probability under stationarity.
We specialize this in Lemma~\ref{Lem:SuccProbStat} by choosing specific values for $r$ and $N$.

\begin{lemma}\label{Lem:SuccProbStatGeneral}
    Let $\X \subset \R^n$ be a compact body which satisfies the $(\alpha,\beta)$-volume growth condition.
    Let $h,r > 0$ and $N \in \bN$ satisfy:
    \begin{align}
        h &\le \frac{1}{2n^3 \beta^2} \label{Eq:Constt1} \\
        r &\ge \sqrt{8h \cdot \max\{n, \, \log ((n+1) \alpha S)\}} \label{Eq:Constr1} \\
        N &\ge \alpha S \log S \cdot (1 + r\beta)^n \label{Eq:ConstN1}
    \end{align}
    where $S \ge 3$ is arbitrary.
    Then we have: 
    \begin{align*}
        F(h,N) \le \frac{3}{S}.
    \end{align*}
\end{lemma}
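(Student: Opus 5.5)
We split $F(h,N) = \E_{Y \sim \pi_h}[(1-\ell_h(Y))^N]$ according to where $Y$ lands: outside the enlarged set $\X_r$, or inside $\X_r$ with small or large local conductance. Fix a threshold $\tau := \frac{\log S}{N}$, so that $(1-\tau)^N \le e^{-N\tau} = \frac{1}{S}$. Since $0 \le (1-\ell_h)^N \le 1$ pointwise, we get
\begin{align*}
    F(h,N) \le \pi_h\big(\X_r^\complement\big) + \pi_h\big(\{y \in \X_r \colon \ell_h(y) < \tau\}\big) + \E_{\pi_h}\big[(1-\ell_h)^N \one\{\ell_h \ge \tau\}\big].
\end{align*}
The last term is at most $(1-\tau)^N \le \frac{1}{S}$. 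It therefore suffices to bound each of the first two terms by $\frac{1}{S}$.

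For the first term we apply Lemma~\ref{Lem:EscProb}, which is available by~\eqref{Eq:Constt1}. It gives $\pi_h(\X_r^\complement) \le \alpha(n+1) Q_{2n}(r/\sqrt{h})$. By~\eqref{Eq:Constr1}, we have $r/\sqrt{h} \ge \sqrt{8\max\{n, \log((n+1)\alpha S)\}}$. We then use a standard chi-tail bound, for instance $Q_m(t) \le \exp(-(t-\sqrt{m})^2/2)$ from Gaussian concentration of the $1$-Lipschitz function $\|Z\|$ together with $\E\|Z\| \le \sqrt m$. With $m = 2n$ and $t \ge \sqrt{8n}$ we have $t - \sqrt{2n} \ge t/2$, so $Q_{2n}(t) \le e^{-t^2/8} \le \frac{1}{(n+1)\alpha S}$. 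Hence the first term is at most $\frac{1}{S}$. Checking this constant bookkeeping is the main place where the factor $8$ in~\eqref{Eq:Constr1} matters, and it is the step I expect to need the most care.

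For the second term we use the density formula~\eqref{Eq:FormulaDensity}, $\pi_h(y) = \ell_h(y)/\Vol(\X)$. On the set $\{\ell_h < \tau\}$ the density is less than $\tau/\Vol(\X)$, so
\begin{align*}
    \pi_h\big(\{y \in \X_r \colon \ell_h(y) < \tau\}\big) \le \frac{\tau \, \Vol(\X_r)}{\Vol(\X)} \le \tau \, \alpha (1+r\beta)^n = \frac{\alpha (1+r\beta)^n \log S}{N} \le \frac{1}{S},
\end{align*}
where the second inequality is the volume growth condition~\eqref{Eq:VolGrowth} and the last is~\eqref{Eq:ConstN1}. Summing the three pieces gives $F(h,N) \le \frac{3}{S}$.

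Beyond the tail constant, the only subtlety is that $\ell_h$ can be small even inside $\X$. This causes no difficulty, because the density identity turns small conductance directly into small $\pi_h$-mass, so that part of the argument is straightforward.
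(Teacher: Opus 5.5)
Your proof is correct and follows essentially the same route as the paper. Both split $F(h,N)$ into three pieces and bound each by $\frac{1}{S}$: the mass outside $\X_r$ via Lemma~\ref{Lem:EscProb} and the chi-tail bound $Q_{2n}(t)\le e^{-t^2/8}$; the low-conductance part of $\X_r$ via $\pi_h = \ell_h/\Vol(\X)$ together with volume growth and~\eqref{Eq:ConstN1}; and the high-conductance part via $(1-\ell_h)^N \le e^{-\ell_h N}$.
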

\begin{proof}
    Let $A_{h,N} \colon \R^n \to \R$ be the function:
    $$A_{h,N}(y) = (1-\ell_h(y))^N.$$
    We partition 
    $\R^n = \X_r^\complement \cup \B_1 \cup \B_2$
    where
    \begin{align*}
        \B_1 &:= \X_r \cap \left\{y \in \R^n \colon \ell_h(y) \ge \frac{1}{N} \log S \right\} \\
        \B_2 &:= \X_r \cap \left\{y \in \R^n \colon \ell_h(y) < \frac{1}{N} \log S \right\}.
    \end{align*}
    For simplicity, we suppress the dependence on the argument $y$ in the integrals below.
    We can split:
    \begin{align*}
        F(h,N) 
        &= \int_{\R^n} A_{h,N} \, \d\pi_h 
        = \int_{\X_r^\complement} A_{h,N} \, \d\pi_h 
        + \int_{\B_1} A_{h,N} \, \d\pi_h 
        + \int_{\B_2} A_{h,N} \, \d\pi_h.
    \end{align*}
    We control each part separately:
    
    \begin{enumerate}
        \item \textbf{Integral over $\X_r^\complement$:} 
        By the trivial bound 
        $A_{h,N}(y) = (1-\ell_h(y))^N \le 1$ for all $y \in \R^n$, and using the bound from  Lemma~\ref{Lem:EscProb} (which applies since we assume $h \le 1/(2n^3 \beta^2)$), we have:
        \begin{align*}
            \int_{\X_r^\complement} A_{h,N} \, \d\pi_h 
            \le \int_{\X_r^\complement} \d\pi_h = \pi_h(\X_r^\complement)
            \le \alpha (n+1) \cdot Q_{2n}\left(\frac{r}{\sqrt{h}}\right).
        \end{align*}
        Since $r \ge \sqrt{8hn}$, we have
        $r-\sqrt{2hn} \ge \frac{r}{2}$, so by the Gaussian tail bound~\eqref{Eq:TailBound} from Lemma~\ref{Lem:TailBound}:
        $$Q_{2n}\left(\frac{r}{\sqrt{h}}\right) \le \exp\left(-\frac{1}{2}\left(\frac{r}{\sqrt{h}}-\sqrt{2n}\right)^2\right) 
        = \exp\left(-\frac{\left(r-\sqrt{2hn}\right)^2}{2h}\right)
        \le \exp\left(-\frac{r^2}{8h}\right).$$
        Therefore, since we also assume $\frac{r^2}{8h} \ge \log((n+1)\alpha S)$, we can bound the integral above by:
        \begin{align*}
            \int_{\X_r^\complement} A_{h,N} \, \d\pi_h 
            \le \alpha (n+1) \cdot \exp\left(-\frac{r^2}{8h}\right) \le \frac{1}{S}.
        \end{align*}

        \item \textbf{Integral over $\B_1$:}
        By definition, for $y \in \B_1$ we have $\ell_h(y) \ge \frac{1}{N} \log S$, so 
        $$A_{h,N}(y) = (1 - \ell_h(y))^N \le \exp(-\ell_h(y) N) \le \frac{1}{S}.$$
        Therefore, since we also have $\pi_h(\B_1) \le \pi_h(\R^n) = 1$, we get
        $$\int_{\B_1} A_{h,N} \, d\pi_h \le \int_{\B_1} \frac{1}{S} \, d\pi_h = \frac{1}{S} \, \pi_h(\B_1) \le \frac{1}{S}.$$
    
        \item \textbf{Integral over $\B_2$:}
        We use the trivial bound $A_{h,N}(y) \le 1$ for all $y \in \R^n$, the formula~\eqref{Eq:FormulaDensity} for the density of $\pi_h$, and the bound $\ell_h(y) \le \frac{1}{N} \log S$ for $y \in \B_2$ by the definition of $\B_2$.
        We also use the inclusion $\B_2 \subseteq \X_r$, and the bound on $\Vol(\X_r)$ from the volume growth condition, to get:
        \begin{align*}
            \int_{\B_2} A_{h,N} \, \d\pi_h 
            &\le \int_{\B_2} 1 \cdot \d\pi_h(y) 
            = \int_{\B_2} \frac{\ell_h(y)}{\Vol(\X)}  \, \dy \\
            &\le \frac{\log S}{N} \cdot \int_{\B_2} \frac{1}{\Vol(\X)} \, \dy \\
            &\le \frac{\log S}{N} \cdot \int_{\X_r} \frac{1}{\Vol(\X)} \, \dy 
            = \frac{\log S}{N} \cdot \frac{\Vol(\X_r)}{\Vol(\X)} \\
            &\le \frac{\log S}{N} \cdot \alpha \cdot \left(1 + r \beta \right)^n \\
            &\le \frac{1}{S}
        \end{align*}
        where the last inequality follows from our assumption on $N$.
    \end{enumerate}
    
    Combining the three parts above, we get the desired bound:
    $F(h,N) 
    \le \frac{1}{S}  + \frac{1}{S} + \frac{1}{S} = \frac{3}{S}$.
\end{proof}

\subsubsection{Helper lemma on Gaussian tail bound}

Recall $Q_n(r) = \Pr(\|Z\|\ge r)$ where $Z \sim \N(0,I_n)$ in $\R^n$.

\begin{lemma}\label{Lem:TailBound}
For all $r\ge \sqrt{n}$:
\begin{align}\label{Eq:TailBound}
Q_n(r) \le \exp\left(-\frac{(r-\sqrt n)^2}{2}\right).
\end{align}
\end{lemma}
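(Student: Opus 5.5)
The plan is to use Gaussian concentration of the norm, which is a $1$-Lipschitz function. The map $z \mapsto \|z\|$ is $1$-Lipschitz on $\R^n$, so for $Z \sim \N(0,I_n)$ the Gaussian concentration inequality (e.g.\ via the Tsirelson--Ibragimov--Sudakov inequality, or the Gaussian log-Sobolev inequality plus Herbst's argument) gives
\begin{align*}
\Pr\big(\|Z\| \ge \E\|Z\| + t\big) \le \exp\left(-\frac{t^2}{2}\right) \qquad \text{for all } t \ge 0.
\end{align*}

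Next, I would bound the mean. By Jensen's inequality, $\E\|Z\| \le \sqrt{\E\|Z\|^2} = \sqrt{n}$.

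Now fix $r \ge \sqrt{n}$ and set $t = r - \sqrt{n} \ge 0$. Since $\E\|Z\| \le \sqrt{n}$, the event $\{\|Z\| \ge r\}$ is contained in $\{\|Z\| \ge \E\|Z\| + t\}$. Applying the concentration bound then gives $Q_n(r) \le \exp(-(r-\sqrt n)^2/2)$, which is \eqref{Eq:TailBound}.

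The only nontrivial ingredient is the dimension-free sub-Gaussian concentration for Lipschitz functions, which I would cite as standard. If a self-contained route were preferred, I would instead use a Chernoff bound: $\E e^{\lambda\|Z\|^2/2} = (1-\lambda)^{-n/2}$ for $\lambda < 1$, optimize over $\lambda$, and reduce the resulting chi-square bound to the stated form. That reduction requires an inequality of the form $\frac{n}{2}\big(s-1-\log s\big) \ge \frac{(\sqrt{ns}-\sqrt n)^2}{2}$ with $s = r^2/n$, i.e.\ $s - 1 - \log s \ge (\sqrt s - 1)^2$ for $s \ge 1$. This holds because, writing $s = u^2$, it becomes $2u - 2 - 2\log u \ge 0$, which is true since $\log u \le u - 1$. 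The inequality check is thus the only mild obstacle, and it too is elementary.
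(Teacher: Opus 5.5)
Your proposal is correct and follows the paper's proof essentially verbatim: Gaussian concentration for the $1$-Lipschitz map $z \mapsto \|z\|$, the bound $\E\|Z\| \le \sqrt{n}$, and then the choice $t = r - \sqrt{n}$. The paper instead takes $t = r - \E\|Z\|$ and uses that $a \mapsto e^{-a^2/2}$ is decreasing, which is equivalent to your event containment; your backup Chernoff route is also valid, since the reduction to $2u - 2 - 2\log u \ge 0$ checks out.
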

\begin{proof}
Note $x \mapsto \|x\|$ is $1$-Lipschitz, since by the triangle inequality, for any $x,y \in \R^n$:
$$\big|\|x\|-\|y\|\big|\le \|x-y\|.$$
Let $Z \sim \N(0, I_n)$ in $\R^n$.
By the Gaussian concentration inequality for Lipschitz functions~\cite[Theorem~5.6]{boucheron2013concentration},
for any $t \ge 0$ we have:
$$\Pr\left(\|Z\|-\E[\|Z\|] \ge t \right) \le \exp\left(-\frac{t^2}{2}\right).$$
Let $r\ge \sqrt{n}$ and set $t := r-\E[\|Z\|]\ge 0$. Then
$$Q_n(r) = \Pr(\|Z\|\ge r) 
= \Pr\big(\|Z\|-\E[\|Z\|]\ge r-\E[\|Z\|]\big)
\le \exp\left(-\frac{(r-\E[\|Z\|])^2}{2}\right).$$
By Cauchy-Schwarz inequality, $\E[\|Z\|] \le \sqrt{\E[\|Z\|^2]}=\sqrt{n}$,
so $r-\E[\|Z\|] \ge r-\sqrt{n}$.
Since $a\mapsto e^{-a^2/2}$ is decreasing for $a\ge 0$, we have
$\exp\big(-\frac{(r-\E[\|Z\|])^2}{2}\big) \le \exp\big(-\frac{(r-\sqrt{n})^2}{2}\big),$
which proves the claim.
\end{proof}

\subsection{Bound on the expected number of trials under stationarity}
\label{App:NumTrialsStationary}

We recall the notations introduced in Section~\ref{Sec:NumTrialsStationary}.

\begin{replemma}{Lem:NumTrialsStatCor}
    Let $\X \subset \R^n$ be a compact body that satisfies the $(\alpha,\beta)$-volume growth condition for some $\alpha \in [1,\infty)$ and $\beta \in (0,\infty)$.
    Assume:
    \begin{align*}
        h &\le \frac{1}{2n^2 \beta^2 \cdot \max\{n, \, \log((n+1)\alpha S)\}} \\
        N &= 8\alpha S \log S
    \end{align*}
    where $S \ge 3$ is arbitrary.
    Then:
    $$\E_{\pi_h}[\mu_{h,N}] \le 16 \alpha \log S.$$
\end{replemma}
\begin{proof}
We will use Lemma~\ref{Lem:NumTrialsStat} below.
Let 
$$r = \sqrt{8h \cdot \max\{n, \, \log((n+1)\alpha S)\}}.$$
Note this choice satisfies the constraint~\eqref{Eq:Constr} for $r$.
The assumption on $h$ satisfies the constraint~\eqref{Eq:Constt}.

Combining our choice of $r$ with the assumption on $h$:
$$r \le \sqrt{\frac{8 \cdot \max \{n, \log (\alpha (n+1)S)\}}{2n^2 \beta^2 \cdot \max\{n, \, \log(\alpha(n+1)S)\}}} = \frac{2}{n\beta}.$$
Therefore,
$(1+r\beta)^n \le \exp(rn\beta) \le e^2 < 8$.
Therefore, the choice $N = 8\alpha S \log S$ also satisfies the constraint~\eqref{Eq:ConstN} for $N$.
With these choices, the bound from Lemma~\ref{Lem:NumTrialsStat} yields:
\begin{align*}
    \E_{\pi_h}[\mu_{h,N}] 
    \le \frac{2N}{S}
    = 16 \alpha \log S.
\end{align*}
\end{proof}

\subsubsection{Helper lemma on a more general bound}

We can bound the expected number of trials under stationarity.
We specialize this in Lemma~\ref{Lem:NumTrialsStatCor} by choosing specific values for $r$ and $N$.

\begin{lemma}\label{Lem:NumTrialsStat}
    Let $\X \subset \R^n$ be a compact body that satisfies the $(\alpha,\beta)$-volume growth condition for some $\alpha \in [1,\infty)$ and $\beta \in (0,\infty)$.
    Assume $h,r > 0$ and $N \in \bN$ satisfy:
    \begin{align}
        h &\le \frac{1}{2n^3 \beta^2} \label{Eq:Constt}  \\
        N &\ge \alpha S \log S \cdot (1+r\beta)^n \label{Eq:ConstN} \\
        r &\ge \sqrt{8h \cdot \max\{n, \, \log ((n+1)\alpha S)\}} \label{Eq:Constr} 
    \end{align}
    where $S \ge 3$ is arbitrary.
    Then:
    $$\E_{\pi_h}[\mu_{h,N}] \le \frac{2N}{S}.$$
\end{lemma}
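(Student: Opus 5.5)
We mirror the partition used in the proof of Lemma~\ref{Lem:SuccProbStatGeneral}. Write $\R^n = \X_r^\complement \cup \B_1 \cup \B_2$, where
\[
\B_1 = \X_r \cap \{\ell_h \ge \tfrac{1}{N}\log S\}, \qquad \B_2 = \X_r \cap \{\ell_h < \tfrac{1}{N}\log S\},
\]
and split
\[
\E_{\pi_h}[\mu_{h,N}] = \int_{\X_r^\complement} \mu_{h,N}\,\d\pi_h + \int_{\B_1} \mu_{h,N}\,\d\pi_h + \int_{\B_2} \mu_{h,N}\,\d\pi_h .
\]
Each region gets a different side of the pointwise bound~\eqref{Eq:PtwiseBound}, $\mu_{h,N} \le \min\{1/\ell_h, N\}$.

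\textbf{Region $\X_r^\complement$.} Use $\mu_{h,N} \le N$. Lemma~\ref{Lem:EscProb} applies because of~\eqref{Eq:Constt}. Together with the Gaussian tail bound of Lemma~\ref{Lem:TailBound} and the constraint~\eqref{Eq:Constr}, exactly as in the proof of Lemma~\ref{Lem:SuccProbStatGeneral}, this gives
\[
\pi_h(\X_r^\complement) \le \alpha(n+1)\, e^{-r^2/(8h)} \le \frac{1}{S}.
\]
Hence this region contributes at most $N/S$.

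\textbf{Regions $\B_1$ and $\B_2$ together.} Use $\mu_{h,N} \le 1/\ell_h$ together with the density formula~\eqref{Eq:FormulaDensity}, $\pi_h(y) = \ell_h(y)/\Vol(\X)$. The factor $\ell_h$ cancels:
\[
\int_{\X_r} \mu_{h,N}\,\d\pi_h \le \int_{\X_r} \frac{1}{\ell_h(y)}\cdot\frac{\ell_h(y)}{\Vol(\X)}\,\dy = \frac{\Vol(\X_r)}{\Vol(\X)} \le \alpha(1+r\beta)^n,
\]
where the last step is the volume growth condition~\eqref{Eq:VolGrowth}. By~\eqref{Eq:ConstN}, $\alpha(1+r\beta)^n \le N/(S\log S) \le N/S$, since $\log S \ge 1$ for $S \ge 3$. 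So these two regions together also contribute at most $N/S$. The split into $\B_1$ and $\B_2$ is not actually needed here, since the $1/\ell_h$ bound handles all of $\X_r$ at once.

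Adding the two pieces gives $\E_{\pi_h}[\mu_{h,N}] \le 2N/S$, as claimed.

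The main point to check is the cancellation of $\ell_h$ on $\X_r$. It is valid because $\pi_h$ has density proportional to $\ell_h$, and $\ell_h > 0$ wherever that density is nonzero, so points with $\ell_h = 0$ carry no mass. The rest is bookkeeping already done in Lemma~\ref{Lem:SuccProbStatGeneral}. I expect no real obstacle beyond confirming that $\log S \ge 1$, so that~\eqref{Eq:ConstN} gives the needed slack.
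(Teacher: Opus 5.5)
Your proof is correct and matches the paper's: split $\R^n$ into $\X_r$ and $\X_r^\complement$, bound $\mu_{h,N} \le 1/\ell_h$ on $\X_r$ so that it cancels against $\pi_h = \ell_h/\Vol(\X)$ and the volume growth condition applies, and bound $\mu_{h,N} \le N$ on $\X_r^\complement$ together with Lemma~\ref{Lem:EscProb} and the Gaussian tail bound. As you note, the $\B_1$/$\B_2$ refinement is unnecessary, and the paper uses only this two-piece split.
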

\begin{proof}
We split the integral into two parts:
\begin{align*}
\E_{\pi_h}[\mu_{h,N}] 
&= \int_{\R^n} \mu_{h,N} \, \d\pi_h 
= \int_{\X_r} \mu_{h,N} \, \d\pi_h + \int_{\X_r^\complement} \mu_{h,N} \, \d\pi_h.
\end{align*}
We can bound each integral above as follows:
\begin{enumerate}
    \item \textbf{Integral over $\X_r$:}
    Using the bound $\mu_{h,N}(y) \le \frac{1}{\ell_h(y)}$ from~\eqref{Eq:PtwiseBound}, the formula~\eqref{Eq:FormulaDensity} for the density of $\pi_h$, and the volume growth bound, we get:
    \begin{align*}
        \int_{\X_r} \mu_{h,N} \, d\pi_h 
        &\le \int_{\X_r} \frac{1}{\ell_h(y)} \cdot \pi_h(y) \, dy 
        = \int_{\X_r} \frac{1}{\Vol(\X)} \, dy 
        = \frac{\Vol(\X_r)}{\Vol(\X)}
        \le \alpha \cdot (1+r\beta)^n 
        \le \frac{N}{S \log S}.
    \end{align*}
    The last inequality follows from our assumption on $N$.
    Since we assume $S \ge 3$, $\log S \ge 1$, so we can further bound $\frac{N}{S \log S} \le \frac{N}{S}$.
    \item \textbf{Integral over $\X_r^\complement$:}
    Using the bound $\mu_{h,N}(y) \le N$ from~\eqref{Eq:PtwiseBound}, and the bound from Lemma~\ref{Lem:EscProb} (which applies since we assume $h \le 1/(2n^3 \beta^2)$), we can bound:
    \begin{align*}
        \int_{\X_r^\complement} \mu_{h,N} \, d\pi_h 
        &\le 
        N \cdot \pi_h(\X_r^\complement) 
        \le  N \cdot \alpha (n+1) \cdot Q_{2n}\left(\frac{r}{\sqrt{h}}\right)
        \le \frac{N}{S}.
    \end{align*}
    The last inequality follows since we assume $r \ge \sqrt{8hn}$, so
    $\frac{r}{\sqrt{h}}-\sqrt{2n} \ge \frac{r}{2\sqrt{h}}$; and by the Gaussian tail bound (see Lemma~\ref{Lem:TailBound}):
    $Q_{2n}\left(\frac{r}{\sqrt{h}}\right) \le \exp\left(-\frac{1}{2}\left(\frac{r}{\sqrt{h}}-\sqrt{2n}\right)^2\right) 
    \le \exp\left(-\frac{r^2}{8h}\right)$; and since we assume $r \ge \sqrt{8h \cdot \log ((n+1)\alpha S)}$, so that $\exp\left(-\frac{r^2}{8h}\right) \le \frac{1}{(n+1)\alpha S}$.
\end{enumerate}

Combining the two estimates above, we have the bound:
\begin{align*}
    \E_{\pi_h}[\mu_{h,N}] 
    \le \frac{N}{S} + \frac{N}{S} = \frac{2N}{S}.
\end{align*}
\end{proof}

\subsection{Proof of Lemma~\ref{Lem:SuccProbInAndOut}}
\label{Sec:SuccProbInAndOutProof}

\begin{replemma}{Lem:SuccProbInAndOut}
    Let $\pi \propto \one_\X$ where $\X \subset \R^n$ satisfies the $(\alpha,\beta)$-volume growth condition for some $\alpha \in [1,\infty)$ and $\beta \in (0,\infty)$. 
    Let $x_0 \sim \rho_0$ where $\rho_0$ is $M$-warm with respect to $\pi$ for some $M \in [1,\infty)$.
    Assume:
    \begin{align*}
        h &\le \frac{1}{2\beta^2 n^3 \cdot \max\{1, \, \frac{1}{n} \log((n+1)\alpha S)\}} \\
        N &= 8\alpha S \log S
    \end{align*}
    where $S \ge 3$ is arbitrary.
    Assume \textbf{In-and-Out} succeeds for $i \ge 0$ iterations (i.e., conditioned on the event ``reaches iteration $i$''), and is currently at $x_i \sim \rho_i$ which is $M$-warm with respect to $\pi$.
    Then the following hold:
    \begin{enumerate}
        \item The next iteration (to compute $x_{i+1}$) succeeds with probability:
        $$\Pr_{x_i \sim \rho_i} \left(\text{next iteration succeeds} \mid \text{reach iteration $i$}\right) \ge 1-\frac{3M}{S}.$$
        \item The expected number of trials in the next iteration is
        $$\E_{x_i \sim \rho_i} \left[\text{$\#$ trials until success in next iteration} \mid \text{reach iteration $i$} \right] \le 16 M\alpha  \log S.$$
        \item Upon accepting $x_{i+1} \sim \rho_{i+1}$, the next distribution $\rho_{i+1}$ is $M$-warm with respect to $\pi$.
    \end{enumerate}
\end{replemma}
\begin{proof}
Suppose we are at $x_i \sim \rho_i$ in the $i$-th iteration.
When $i = 0$, we know $\rho_0$ is $M$-warm by assumption.
For $i > 0$, we assume $\rho_i$ is $M$-warm as an inductive hypothesis.

In the forward step of the algorithm, we draw $y_{i} \sim \N(x_i, hI_n)$ to obtain $y_{i}$ with marginal distribution $\rho_{i}^Y := \rho_i \ast \N(0, hI_n)$.
Note $\rho_{i}^Y$ is $M$-warm with respect to $\pi_h = \pi \ast \N(0,hI_n)$.
Indeed, letting $\N_h(y)$ denote the density of $\N(0,hI_n)$ at $y \in \R^n$, we have by definition:
$$\rho_{i}^Y(y) = \int_{\X} \rho_i(x) \, \N_h(y-x) \, \dx \le M \cdot \int_{\X} \pi(x) \, \N_h(y-x) \, \dx = M \cdot \pi_h(y).$$

In the backward step, we try to draw $x_{i+1}' \sim \N(y_{i}, hI_n)$ and accept when $x_{i+1}' \in \X$ (in which case we set $x_{i+1} = x_{i+1}'$), and repeat up to $N$ times.
We analyze this as follows:

\begin{enumerate}
    \item \textbf{Failure probability:}
    Conditioning on $y_{i}$, the failure probability of one trial in the next iteration is
    $$\Pr(x_{i+1}' \notin \X \mid y_{i}) = \Pr(y_{i} + \sqrt{h} Z \notin \X \mid y_{i}) = 1-\ell_h(y_{i})$$
    where $Z \sim \N(0, I_n)$ in $\R^n$, and recall $\ell_h$ is the local conductance defined in~\eqref{Eq:LocalConductance}.
    If the trial fails, then we repeat it (conditioning on the same $y_{i}$) for $\le N$ trials.
    Then the failure probability over $N$ trials (conditioned on $y_{i}$) is $( 1-\ell_h(y_{i}))^N$.
    Taking expectation over $y_{i} \sim \rho_{i}^Y$, the failure probability over $N$ trials is 
    $\E_{\rho_{i}^Y}[(1-\ell_h)^N]$; this is the probability that the next iteration fails.
    Since $\rho_{i}^Y$ is $M$-warm with respect to $\pi_h$, we can bound this by the expectation under stationary:
    \begin{align*}
        \E_{\rho_{i}^Y}[(1-\ell_h)^N]
        &\le M \cdot \E_{\pi_h}\left[(1-\ell_h)^N \right] 
        \le \frac{3M}{S}
    \end{align*}
    where the last inequality follows from Lemma~\ref{Lem:SuccProbStat} (which holds for our choices of $h$ and $N$).
    \item \textbf{Controlling the expected number of trials:}
    Conditioning on $y_{i}$,
    let $M_{h,N}(y_{i})$ denote the number of trials in the backward step, where each trial is independent and has success probability $\ell_h(y_{i})$, and we run at most $N$ trials.
    Then $M_{h,N}(y_{i}) \stackrel{d}{=} \min\{G_h(y_{i}), N\}$ where $G_h(y_{i})$ is a geometric random variable with success probability $\ell_h(y_{i})$.
    The expected number of trials (conditioned on $y_i$, so the expectation is over the geometric random variable $G_h(y_{i})$) is
    $\mu_{h,N}(y_{i}) = \E[\min\{G_h(y_{i}), N\}]$, which we also defined in~\eqref{Eq:NumTrialsStat}.
    Taking expectation over $y_{i} \sim \rho_{i}^Y$, the expected number of trials in the backward step is $\E_{\rho_{i}^Y}[\mu_{h,N}]$.
    Since $\rho_{i}^Y$ is $M$-warm with respect to $\pi_h$, we can bound this by:
    \begin{align*}
        \E_{\rho_{i}^Y}[\mu_{h,N}]
        &\le M \cdot \E_{\pi_h}[\mu_{h,N}(y)] 
        \le 16M \alpha \log S
    \end{align*}
    where the last inequality follows from Lemma~\ref{Lem:NumTrialsStatCor} (which holds for our choices of $h$ and $N$).

    \item \textbf{Warmness:} 
    Upon acceptance, the next random variable $x_{i+1} \sim \rho_{i+1}$ has distribution:
    $$\rho_{i+1}(x) = \int_{\R^{n}} \rho_i^Y(y) \cdot \pi^{X \mid Y}(x \mid y) \, \dy$$
    where $\pi^{X \mid Y}(\cdot\mid y) \propto \N(y, h I_{n}) \cdot \one_\X$.
    Since $\rho_i^Y$ is $M$-warm with respect to $\pi_h = \pi^Y$, we get:
    $$\rho_{i+1}(x) \le M \cdot \int_{\R^{n}} \pi^Y(y) \cdot \pi^{X \mid Y}(x \mid y) \, \dy = M \cdot \pi^X(x)$$
    which shows $\rho_{i+1}$ is $M$-warm with respect to $\pi^X$.
\end{enumerate}
\end{proof}

\subsection{Proof of Theorem~\ref{Thm:Nonconvex}}
\label{App:NonconvexProof}

We are now ready to prove Theorem~\ref{Thm:Nonconvex}.

\begin{proof}[Proof of Theorem~\ref{Thm:Nonconvex}]
    Given the target error $\varepsilon \in (0,\frac{1}{2})$, we define $\varepsilon' := \frac{1}{2} \varepsilon$ and $\eta := \frac{1}{8} \varepsilon$, so $\varepsilon' < \frac{1}{4}$ and $\eta < \frac{1}{16}$.
    By assumption, $x_0 \sim \rho_0$ is $M$-warm with respect to $\pi \propto \one_\X$.
    In particular, $\sR_q(\rho_0 \,\|\, \pi) \le \log M$.
    We choose the number of iterations $T$ to be:
    \begin{align}
        T &:= 8q \CPI \beta^2 n^2 \left(n + \log \frac{3(n+1) \alpha M}{\eta} \right) \cdot \log \left(\frac{M}{\varepsilon'}\right) \times \cdots \notag \\
        &\qquad \log \left( 4q \CPI \beta^2 n^2 \left(n + \log \frac{3(n+1) \alpha M}{\eta} \right) \cdot \log \left(\frac{M}{\varepsilon'}\right) \right). \label{Eq:TPIDef}
    \end{align}
    We define an auxiliary parameter $S$ to be:
    \begin{align}\label{Eq:Sdefeq}
        S :=  \frac{3TM}{\eta}.
    \end{align}
    We also define the step size $h$ and the threshold on the number of trials $N$ in each iteration to be:
    \begin{align}
        h &:= \frac{1}{2\beta^2 \, n^3 \, (1 + \frac{1}{n}\log((n+1) \alpha S))}  
        \label{Eq:hDef} \\
        N &:= 8\alpha S \log S. \label{Eq:NDef}
    \end{align}
    We note $S \ge 3$, $h \le \frac{1}{2n} \le \frac{1}{2}$, and our choices of $h$ and $N$ above satisfy the assumptions in Lemma~\ref{Lem:SuccProbInAndOut}.

    By Lemma~\ref{Lem:SuccProbInAndOut} inductively for $i \ge 0$, conditioned on the event that \textbf{In-and-Out} reaches iteration $i$, the iterate $x_i \sim \rho_i$ of \textbf{In-and-Out} remains $M$-warm.
    Then we can do the following analysis.

    \paragraph{(1) Failure probability:}
    For each $i \ge 0$, let $E_i$ denote the event that \textbf{In-and-Out} reaches iteration $i$ (i.e., the first $i$ iterations succeed) starting from $x_0 \sim \rho_0$.
    Then $\Pr(E_0) = 1$, and we want to bound $\Pr(E_T)$.
    By Lemma~\ref{Lem:SuccProbInAndOut} and the warmness guarantee, for each $i \in \{0,1,\dots,T-1\}$ we have
    \begin{align*}
        \Pr\left(E_{i+1}^\complement \mid E_i \right)
        \le \frac{3M}{S} \stackrel{\eqref{Eq:Sdefeq}}{=} \frac{\eta}{T}.
    \end{align*}
    Since the events $E_i$ are nested ($E_{i+1} \subseteq E_i$ for $i \ge 0$), we can decompose:
    \begin{align*}
        \Pr\left(E_T^\complement \right)
        &= \sum_{i=0}^{T-1} \Pr\left(E_i \cap E_{i+1}^\complement \right)
        = \sum_{i=0}^{T-1} \Pr\left(E_i \right) \cdot \Pr\left(E_{i+1}^\complement \mid E_i \right) 
        \le \sum_{i=0}^{T-1} 1 \cdot \frac{\eta}{T} 
        = \eta.
    \end{align*}
    Therefore, the probability that \textbf{In-and-Out} runs successfully for $T$ iterations is:
    $$\Pr\left(E_T \right) = 1 - \Pr\left(E_T^\complement \right) \ge 1-\eta = 1 - \frac{\varepsilon}{8}.$$

    \paragraph{(2) Error guarantee:}
    In view of Lemma~\ref{Lem:OuterConvergence},
    we define the initial duration when R\'enyi divergence decreases slowly along \textbf{In-and-Out}:
    \begin{align}\label{Eq:TPIDef0}
        T_0 
        &:= \max\left\{ 0, \; \left\lceil \frac{q \left(\sR_q(\rho_0 \,\|\, \pi)-1\right)}{\log\left(1 + \frac{h}{\CPI}\right)} \right\rceil \right\}.
    \end{align}
    After $T_0$ iterations, the R\'enyi divergence decreases exponentially fast.
    We define:
    \begin{align}\label{Eq:TPIDef2}
        \tilde T &:= T_0 + \frac{ q \log \frac{1}{\varepsilon'}}{\log\left(1 + \frac{h}{\CPI} \right)}.
    \end{align}
    We claim that our choice of $T$ in~\eqref{Eq:TPIDef} satisfies $T \ge \tilde T$; we show this below.
    Assuming this claim, if \textbf{In-and-Out} runs successfully for $T \ge \tilde T$ iterations (which holds with probability at least $1-\eta$ by part (1)), then by Lemma~\ref{Lem:OuterConvergence}, the output $x_T \sim \rho_T$ satisfies: 
    $$\sR_q(\rho_T \,\|\, \pi) 
    \,\le\, \left(1 + \frac{h}{\CPI}\right)^{-\frac{1}{q} (T-T_0)} + 4\eta
    \,\le\, \left(1 + \frac{h}{\CPI}\right)^{-\frac{1}{q} (\tilde T-T_0)} + 4\eta
    \,\stackrel{\eqref{Eq:TPIDef2}}{\le}\, \varepsilon' + 4\eta \,=\, \varepsilon.$$
    Thus, if \textbf{In-and-Out} succeeds for $T$ iterations, then its output satisfies the desired error guarantee.

    We now show $T \ge \tilde T$.
    Since $h \le \frac{1}{2}$ and $\CPI \ge 1$, we have $\frac{h}{\CPI} \le \frac{1}{2}$.
    Using the inequality $\log(1+t) \ge \frac{t}{2}$ for $0 < t = \frac{h}{\CPI} \le \frac{1}{2}$, we have $\log(1 + \frac{h}{\CPI}) \ge \frac{h}{2\CPI}$.
    Using $\sR_q(\rho_0 \,\|\, \pi) \le \log M$, we can estimate~\eqref{Eq:TPIDef0} by:
    $T_0 \le \frac{2q \CPI}{h}  \log M.$
    Furthermore, we can estimate~\eqref{Eq:TPIDef2} by:
    $$\tilde T \le \frac{2q \CPI}{h}  \log M + \frac{2q \CPI}{h} \log \frac{1}{\varepsilon'} \le \frac{2q \CPI}{h}  \log \frac{M}{\varepsilon'}.$$
    It remains to show that with our choices of $T$ in~\eqref{Eq:TPIDef} and $h$ in~\eqref{Eq:hDef}, we have $T \ge \frac{2q \CPI}{h}  \log \frac{M}{\varepsilon'}$.
    This follows by a routine computation, which we provide in Lemma~\ref{Lem:TBound}.

    \paragraph{(3) Expected number of trials:}
    For each $i \ge 0$, let $N_i$ denote the number of trials in iteration $i$ of \textbf{In-and-Out} to compute $x_{i+1} \sim \rho_{i+1}$.
    Recall from part (1) the event $E_i$ that \textbf{In-and-Out} reaches iteration $i$.
    If \textbf{In-and-Out} fails before reaching iteration $i$ (i.e.\ on the event $E_i^{\complement}$), then $N_i = 0$.
    By Lemma~\ref{Lem:SuccProbInAndOut}, if \textbf{In-and-Out} reaches iteration $i$ (i.e.\ on the event $E_i$), then the expected number of trials in the next iteration is:
    \begin{align}\label{Eq:PerIterPIEst}
        \E\left[N_i \mid E_i \right] 
        \,\le\, 16 M\alpha \log S 
        \,\stackrel{\eqref{Eq:Sdefeq}}{=}\, 16 M \alpha \log \frac{3MT}{\eta}.
    \end{align}
    Therefore, the expected number of trials over $T$ iterations of \textbf{In-and-Out} is:
    \begin{align}
        \E\big[\# & \text{ of trials in $T$ iterations of \textbf{In-and-Out}} \big] \notag \\
        &= \sum_{i = 0}^{T-1} \E\left[N_i \right] \notag \\
        &= \sum_{i = 0}^{T-1} \left(\Pr\left(E_i^\complement\right) \times \E\left[N_i \mid E_i^\complement \right] + \Pr\left(E_i \right) \times \E\left[N_i \mid E_i \right] \right) \notag \\
        &= \sum_{i = 0}^{T-1} \Pr\left(E_i\right) \times \E\left[N_i \mid E_i \right] \notag \\
        &\stackrel{\eqref{Eq:PerIterPIEst}}{\le} 16 M\alpha T \log \frac{3MT}{\eta}.
    \end{align}    
    
    To simplify this, we plug in our choice of $T$ from~\eqref{Eq:TPIDef} which is of the form $T = 2z \log z$, where
    \begin{align}\label{Eq:TdefXHelper}
        z := 4q \CPI \beta^2 n^2 \left(n + \log \frac{3(n+1) \alpha M}{\eta} \right) \cdot \log \left(\frac{M}{\varepsilon'}\right).
    \end{align}
    Using $\log z \le z-1 \le z$, we have
    \begin{align}\label{Eq:NumTrialsBd0}
        \log \frac{3MT}{\eta} = \log \frac{6Mz}{\eta} + \log (\log z) \le \log \frac{6Mz}{\eta} + \log z \le 2\log \frac{6Mz}{\eta}.
    \end{align}
    Therefore, the expected number of trials over $T$ iterations of \textbf{In-and-Out} is bounded by:
    \begin{align}
        \E\big[\# \text{ of trials in $T$ iterations of \textbf{In-and-Out}} \big]
        &\stackrel{\eqref{Eq:NumTrialsBd0}}{\le} 64 M \alpha z (\log z) \cdot \log \frac{6Mz}{\eta} \notag \\
        &\le 64 M \alpha z \cdot \left( \log \frac{6Mz}{\eta} \right)^2. \label{Eq:NumTrialsBd1}
    \end{align}
    The expression~\eqref{Eq:NumTrialsBd1} is a precise bound on the expected number of total trials of \textbf{In-and-Out}, where $z$ is defined in~\eqref{Eq:TdefXHelper}.
    To simplify this further, we use the $\tilde O$ notation to hide constants and logarithmic dependencies on the parameters, so for example, $M \log M = \tilde O(M)$ and $z \log z = \tilde O(z)$.
    Recalling $\varepsilon' = \frac{1}{2} \varepsilon$ and $\eta = \frac{1}{8} \varepsilon$, we can simplify the expression in~\eqref{Eq:NumTrialsBd1} as:
    \begin{align*}
        64 M \alpha z \cdot \left( \log \frac{6Mz}{\eta} \right)^2
        &= \tilde O\left(M \alpha z \left( \log \frac{1}{\eta} \right)^2 \right) \\
        &\stackrel{\eqref{Eq:TdefXHelper}}{=} \tilde O\left(M \alpha \cdot q \CPI \beta^2 n^3 \left(1 + \frac{1}{n} \log \frac{3(n+1) \alpha M}{\eta} \right) \cdot \left(\log \frac{M}{\varepsilon'}\right) \cdot \left( \log \frac{1}{\eta} \right)^2 \right) \\
        &= \tilde O\left(q \CPI \, \alpha \beta^2 \, M \, n^3 \left(1 + \frac{1}{n} \log \frac{1}{\eta} \right) \cdot \left(\log \frac{1}{\varepsilon'} \right) \cdot \left( \log \frac{1}{\eta} \right)^2 \right) \\
        &= \tilde O\left(q \CPI \, \alpha \beta^2 \, M \, n^3 \cdot \left( \log \frac{1}{\varepsilon} \right)^4 \right)
    \end{align*}
    as claimed in the theorem. 

    Finally, inspecting the algorithm, we see that each trial requires one call to the membership oracle $\one_\X$ and $O(n)$ arithmetic operations.     
\end{proof}

\subsubsection{Helper lemma on bound on number of iterations}
\label{Sec:TBound}

\begin{lemma}\label{Lem:TBound}
    Under the same assumptions as in Theorem~\ref{Thm:Nonconvex}, and with the definitions of $T$ in~\eqref{Eq:TPIDef}, $h$ in~\eqref{Eq:hDef}, and $S$ in~\eqref{Eq:Sdefeq}, we have
    \begin{align}\label{Eq:TBoundClaim}
        T \ge \frac{2q \CPI}{h}  \log \frac{M}{\varepsilon'}.
    \end{align}
\end{lemma}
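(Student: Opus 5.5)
The plan is to expand the right-hand side of~\eqref{Eq:TBoundClaim} by substituting the definitions of $h$ and $S$. The resulting bound depends on $T$ only through $\log T$. Then $T = 2z\log z$, with $z$ as in~\eqref{Eq:TdefXHelper}, dominates it. The only real subtlety is this self-referential dependence ($S$ depends on $T$, and $h$ depends on $S$). It is handled by the slack between the factor $2$ in $T = 2z\log z$ and the $\log T$ term.

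\textbf{Step 1 (substitute $h$).} By~\eqref{Eq:hDef},
$$\frac{1}{h} = 2\beta^2 n^2\bigl(n + \log((n+1)\alpha S)\bigr).$$
Using $S = 3TM/\eta$ from~\eqref{Eq:Sdefeq}, we get
$$\log((n+1)\alpha S) = \log\frac{3(n+1)\alpha M}{\eta} + \log T.$$
Set $a := 4q\CPI\beta^2 n^2 \log\frac{M}{\varepsilon'}$. Then the right-hand side of~\eqref{Eq:TBoundClaim} equals exactly $z + a\log T$, with $z$ as in~\eqref{Eq:TdefXHelper}.

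\textbf{Step 2 (bound $a$ and $\log T$ in terms of $z$).} Since $\log\frac{3(n+1)\alpha M}{\eta} \ge 0$, we have $z \ge a n$, so $a \le z/n \le z/2$ because $n \ge 2$. Next, $T = 2z\log z$ from~\eqref{Eq:TPIDef}, so $\log T = \log 2 + \log z + \log\log z$. Using $\log\log z \le \log z$ (valid for $z > 1$), this gives $\log T \le \log 2 + 2\log z$.

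\textbf{Step 3 (combine).} From Steps 1 and 2,
$$z + a\log T \le z + \tfrac{z}{2}(\log 2 + 2\log z) = z\bigl(1 + \tfrac12\log 2\bigr) + z\log z.$$
This is at most $2z\log z = T$ provided $\log z \ge 1 + \frac12 \log 2 \approx 1.35$, i.e.\ $z \ge 4$ suffices. To check this lower bound on $z$, use the theorem's assumptions: $q \ge 2$, $\CPI \ge 1$, and $\beta \ge \frac{1}{n}$ (so $\beta^2 n^2 \ge 1$). Also $n \ge 2$, and $\log\frac{M}{\varepsilon'} \ge \log 4 > 1$ since $M \ge 1$ and $\varepsilon' < \frac14$. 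Hence $z \ge 4\cdot 2\cdot 1\cdot 1\cdot n \cdot 1 \ge 16$, which also justifies $z > 1$ in Step 2.

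The main point of care is Step 1: one must correctly unwind the circular dependence through $S$, and verify that the $\log T$ contribution carries the small coefficient $a \le z/n$ rather than $z$. The remaining steps are elementary inequalities.
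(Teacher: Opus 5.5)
Your proof is correct and follows essentially the same route as the paper: both unwind $h$ and $S$ to write the right-hand side as $z + a\log T$ with $a = 4q\CPI\beta^2 n^2\log\frac{M}{\varepsilon'}$, and then absorb the $a\log T$ term using the slack in $T = 2z\log z$. The only difference is minor: the paper splits $T \ge \frac{T}{2} + z$ and invokes Lemma~\ref{Lem:LogBound} to get $\frac{T}{\log T} \ge z \ge 2a$, whereas you bound $\log T \le \log 2 + 2\log z$ and $a \le z/2$ directly, which avoids the helper lemma.
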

\begin{proof}
    Our choice~\eqref{Eq:TPIDef} of $T$ is of the form $T = 2z\log z$, where 
    $$z := 4q \CPI \beta^2 n^2 \left(n + \log \frac{3(n+1) \alpha M}{\eta} \right) \cdot \log \left(\frac{M}{\varepsilon'}\right) 
    \ge 32 \beta^2 n^2 \ge 32$$ 
    since $q \ge 2$, $\CPI \ge 1$, $n \ge 2$, $\log \frac{3(n+1) \alpha M}{\eta} \ge 2$, $\log \left(\frac{M}{\varepsilon'}\right) \ge 1$, and we assume $\beta \ge \frac{1}{n}$.
    We recall (see Lemma~\ref{Lem:LogBound}) that for $z \ge 2$, $y \ge 2z \log z$ implies $y/ \log y \ge z$.
    Then we observe that our choice of $T$ in~\eqref{Eq:TPIDef} implies:
    \begin{align}
        \frac{T}{\log T} 
        &\ge 4q \CPI \beta^2 n^2 \left(n + \log \frac{3(n+1) \alpha M}{\eta} \right) \cdot \log \frac{M}{\varepsilon'} \notag \\
        &\ge 16q \CPI \beta^2 n^2 \cdot \log \frac{M}{\varepsilon'} \label{Eq:TBoundObserved}
    \end{align}
    where the last inequality holds since $n \ge 2$ and $\log \frac{3(n+1) \alpha M}{\eta} \ge \log 9 > 2$.
    
    The right-hand side of the claim~\eqref{Eq:TBoundClaim} is, using the definitions of $h$ from~\eqref{Eq:hDef} and $S$ from~\eqref{Eq:Sdefeq}:
    \begin{align*}
        \mathsf{RHS} &:= \frac{2q \CPI}{h}  \log \frac{M}{\varepsilon'} \\
        &\stackrel{\eqref{Eq:hDef}}{=} 4q \beta^2 \CPI \, n^2 \left(n + \log ((n+1) \alpha S)\right) \cdot \log \frac{M}{\varepsilon'} \\
        &\stackrel{\eqref{Eq:Sdefeq}}{=} 4q \beta^2 \CPI \, n^2 \left(n + \log \left(\frac{3(n+1) \alpha M}{\eta}\right) + \log T \right) \cdot \log \frac{M}{\varepsilon'}.
    \end{align*}
    On the other hand, from our choice of $T = 2z \log z$ in~\eqref{Eq:TPIDef} with $\log z \ge 1$, we have:
    \begin{align*}
        T 
        &\ge \frac{T}{2} + z \\
        &= \frac{T}{2} + 4q \CPI \beta^2 n^2 \left(n + \log \frac{3(n+1) \alpha M}{\eta} \right) \cdot \log \left(\frac{M}{\varepsilon'}\right) \\
        &= \frac{T}{2} + \mathsf{RHS} - 4q \beta^2 \CPI \, n^2 \cdot \log \left(\frac{M}{\varepsilon'}\right) \cdot \log T.
    \end{align*}
    Therefore, to show $T \ge \mathsf{RHS}$, it suffices to show that:
    \begin{align*}
        \frac{T}{\log T} \ge 8 q \CPI \beta^2 n^2 \cdot \log \left(\frac{M}{\varepsilon'}\right)
    \end{align*}
    which we observed in~\eqref{Eq:TBoundObserved} for our choice of $T$.
    This completes the proof.
\end{proof}

\subsubsection{Helper lemma on logarithm}
\label{Sec:LogBound}

\begin{lemma}\label{Lem:LogBound}
    If $z \ge 2$ and $y \ge 2 z \log z$, then $y / \log y \ge z$.
\end{lemma}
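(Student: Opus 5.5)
The plan is to prove the contrapositive-style statement directly through monotonicity of $y \mapsto y/\log y$. First, note that $\phi(y) := y/\log y$ is increasing on $[e,\infty)$, since $\phi'(y) = (\log y - 1)/(\log y)^2 \ge 0$ there. Because $z \ge 2$, we have $y_0 := 2z\log z \ge 4\log 2 > e$, so every $y \ge y_0$ also lies in $[e,\infty)$. This gives $\phi(y) \ge \phi(y_0)$, and the claim reduces to the single inequality $\phi(2z\log z) \ge z$.

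That inequality is
\[
\frac{2z\log z}{\log(2z\log z)} \ge z,
\]
which is equivalent to $2\log z \ge \log 2 + \log z + \log\log z$, that is, to $\log z \ge \log(2\log z)$, or $z \ge 2\log z$. The last inequality holds for all $z>0$: the function $g(z) = z - 2\log z$ has its minimum at $z=2$, where $g(2) = 2 - 2\log 2 > 0$. One should check that $\log\log z$ is well defined, and it is, since $z\ge 2$ gives $\log z \ge \log 2 > 0$.

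The main point requiring care is the monotonicity domain: we must ensure both $y_0$ and $y$ exceed $e$, so that the comparison $\phi(y)\ge\phi(y_0)$ is valid. This follows from the numerical bound $4\log 2 \approx 2.77 > e$. Everything else is a routine one-variable inequality.
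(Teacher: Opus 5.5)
Your proposal is correct and follows the paper's proof essentially step for step: monotonicity of $y \mapsto y/\log y$ on $[e,\infty)$, the check $2z\log z \ge 4\log 2 > e$, and the reduction of $\frac{2z\log z}{\log(2z\log z)} \ge z$ to $z \ge 2\log z$. Your justification of that last inequality via the minimum of $z - 2\log z$ at $z=2$ fills in a step the paper only asserts. Calling the argument ``contrapositive-style'' is a misnomer, since it is a direct proof.
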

\begin{proof}
    Define the function $\phi \colon (0,\infty) \to \R$ by $h(y) = \frac{y}{\log y}$.
    Its derivative is $\phi'(y) = \frac{\log y - 1}{(\log y)^2}$, so $\phi(y)$ is increasing for $y \ge e$.
    Since $y \ge 2z \log z \ge 4 \log 2 > e$, we have 
    $$\frac{y}{\log y} \ge \frac{2z \log z}{\log(2z \log z)}.$$
    Then to show $y / \log y \ge z$, it suffices to show $2 \log z \ge \log(2z \log z)$.
    Exponentiating and simplifying, this is equivalent to showing $z \ge 2 \log z$, which is true for $z \ge 2$.
\end{proof}

\section{Details for the volume growth condition}

\subsection{Volume growth condition for convex bodies}
\label{Sec:ReviewConvex}

\begin{replemma}{Lem:VolGrowthConvex}
If $\X \subset \R^n$ is convex, then it satisfies the $(1,\frac{1}{n} \xi(\X))$-volume growth condition.
\end{replemma}
\begin{proof}
    Let $\phi(t) = \log \Vol(\X_t)$ so $\phi'(t) = \xi(\X_t)$.
    By the Brunn-Minkowski theorem, we know $t \mapsto \Vol(\X_t)^{1/n} = \exp\left(\frac{1}{n} \phi(t)\right)$ is a concave function.
    This means
    \begin{align*}
        0 \ge \frac{\d^2}{\dt^2} \Vol(\X_t)^{1/n}
        &= \frac{\d^2}{\dt^2} \exp\left(\frac{1}{n} \phi(t)\right) \\
        &= \frac{\d}{\dt} \left(\frac{1}{n} \phi'(t) \exp\left(\frac{1}{n} \phi(t)\right) \right) \\
        &= \frac{1}{n^2} \left(n\phi''(t)  + (\phi'(t))^2 \right) \exp\left(\frac{1}{n} \phi(t)\right).
    \end{align*}
    Therefore, $n\phi''(t)  + (\phi'(t))^2 \le 0$.
    Equivalently, for all $t > 0$:
    $-\frac{\phi''(t)}{(\phi'(t))^2} \ge \frac{1}{n}$.
    We can write this as:
    $\frac{\d}{\dt} \frac{1}{\phi'(t)} = -\frac{\phi''(t)}{(\phi'(t))^2} \ge \frac{1}{n}$.
    Therefore,
    $\frac{1}{\phi'(t)} - \frac{1}{\phi'(0)} \ge \frac{t}{n}$.
    Equivalently,
    \begin{align*}
        \xi(\X_t) = \phi'(t) \le \frac{1}{\frac{1}{\xi(\X)} + \frac{t}{n}} = \frac{n}{t + \frac{n}{\xi(\X)}}
        = n \cdot \frac{\d}{\dt} \log\left(t + \frac{n}{\xi(\X)}\right).
    \end{align*}
    Therefore, $\int_0^t \xi(\X_s) \, \ds \le \int_0^t \frac{n}{s + \frac{n}{\xi(\X)}} \ds 
        = n \log \left( \frac{t + \frac{n}{\xi(\X)}}{\frac{n}{\xi(\X)}} \right)
        = n \log \left(1 + t \cdot \frac{\xi(\X)}{n} \right)$.
    This shows the claim:
    \begin{align*}
        \frac{\Vol(\X_t)}{\Vol(\X)} = \exp\left( \int_0^t \xi(\X_s) \, \ds \right) \le \left(1 + t \cdot \frac{\xi(\X)}{n} \right)^n.
    \end{align*}
\end{proof}

\subsection{Volume growth condition for star-shaped bodies}
\label{Sec:VolGrowthStarShaped}

\begin{replemma}{Lem:VolGrowthStarShaped}
    Let $\X \subset \R^n$ be a star-shaped body, so $\X = \bigcup_{i \in \I} \X^i$ where $\X^i$ is a convex body for each $i \in \I$ in a finite index set $\I$, and they share a common intersection $\Y = \X^i \cap \X^j \neq \emptyset$ for all $i \neq j$.
    Assume $\Y$ contains a ball of radius $r > 0$ centered at $0$, i.e., $B_r \subseteq \Y$.
    Then $\X$ satisfies the $(1, \frac{1}{r})$-volume growth condition.
\end{replemma}
\begin{proof}
    Fix $t > 0$.
    Since $\X = \bigcup_{i \in \I} \X^i$, we observe that $\X_t \subseteq \bigcup_{i \in \I} (\X^i)_t$;
    this is because any $x \in \X_t = \X \oplus B_t$ is of the form $x = y + z$ where $y \in \X^i$ for some $i \in \I$ and $z \in B_t$, so $x \in \X^i \oplus B_t = (\X^i)_t$.
    Next, since $B_r \subseteq \Y \subseteq \X^i$, we have $B_t = \frac{t}{r} B_r \subseteq \frac{t}{r} \X^i$.
    Then we have $(\X^i)_t = \X^i \oplus B_t \subseteq \X^i \oplus \frac{t}{r} \X^i = \left(1 + \frac{t}{r} \right) \X^i$, where the last equality holds since $\X^i$ is convex and contains $0$.
    Furthermore, since $\X^i \subseteq \X$, we also have $\left(1 + \frac{t}{r} \right) \X^i \subseteq \left(1 + \frac{t}{r} \right) \X$.
    Combining the relations above, we obtain
    $$\X_t 
    \,\subseteq\, \bigcup_{i \in \I} (\X^i)_t 
    \,\subseteq\, \bigcup_{i \in \I} \left(1 + \frac{t}{r} \right) \X^i 
    \,\subseteq\, \left(1 + \frac{t}{r} \right) \X.$$
    Taking volume on both sides, we conclude 
    $\Vol(\X_t) \le \Vol\left(\left(1 + \frac{t}{r} \right) \X\right) = \left(1 + \frac{t}{r} \right)^n \cdot \Vol(\X).$
    This shows that $\X$ satisfies the volume growth condition with $\alpha = 1$ and $\beta = 1/r$.
\end{proof}

\subsection{Volume growth condition under set union}
\label{Sec:GrowthUnion}

\begin{replemma}{Lem:VolGrowthUnion}
    Suppose $\X^i \subset \R^n$ is a compact body that satisfies the $(\alpha_i, \beta_i)$-volume growth condition for some $\alpha_i \in [1,\infty)$ and $\beta_i \in (0,\infty)$, for each $i \in \I$ in some finite index set $\I$.
    Let $q_\I$ be the probability distribution supported on $\I$ with density $q_\I(i) = \frac{\Vol(\X^i)}{\sum_{j \in \I} \Vol(\X^j)}$, for $i \in \I$.
    Then the union $\X = \bigcup_{i \in \I} \X^i$ satisfies the $(A,B)$-volume growth condition, where:
    \begin{align*}
        A &= \left(\max_{i \in \I} \alpha_i \right) \cdot \frac{\sum_{i \in \I} \Vol(\X^i)}{\Vol(\X)} \\
        B &= \E_{I \sim q_\I}\left[ \, \beta_I^n \, \right]^{1/n} = \left( \frac{\sum_{i \in \I} \Vol(\X^i) \cdot \beta_i^n}{\sum_{j \in \I} \Vol(\X^j)} \right)^{1/n} \le \max_{i \in \I} \beta_i.
    \end{align*}
\end{replemma}
\begin{proof}
As in the proof of Lemma~\ref{Lem:VolGrowthStarShaped}, we have $\X_t \subseteq \bigcup_{i \in \I} \X^i_t$ where $\X^i_t \equiv (\X^i)_t = \X^i \oplus B_t$.
Since each $\X^i$ satisfies the $(\alpha_i, \beta_i)$-volume growth condition, we have 
$\Vol(\X^i_t) \le \alpha_i \cdot (1 + t \beta_i)^n \cdot \Vol(\X^i)$ for all $i \in \I$.
Introducing the random variable $I \sim q_\I$ with density $q_\I(i) = \frac{\Vol(\X^i)}{\sum_{j \in \I} \Vol(\X^j)}$, we have:
\begin{align*}
    \frac{\Vol(\X_t)}{\Vol(\X)}
    &\le \frac{1}{\Vol(\X)} \cdot \sum_{i \in \I} \Vol(\X^i_t) \\
    &\le \frac{1}{\Vol(\X)} \cdot \sum_{i \in \I} \alpha_i \cdot (1 + t \beta_i)^n \cdot \Vol(\X^i) \\
    &= \frac{\sum_{j \in \I} \Vol(\X^j)}{\Vol(\X)} \cdot \E_{I \sim q_\I}\left[\alpha_I \cdot (1 + t \beta_I)^n \right] \\
    &\le \frac{\sum_{j \in \I} \Vol(\X^j)}{\Vol(\X)} \cdot \left(\max_{i \in \I} \alpha_i \right) \cdot \E_{I \sim q_\I}\left[(1 + t \beta_I)^n \right].
\end{align*}
For a random variable $U \in \R$ we denote its $L^p$-norm, $p \ge 1$, by
$\|U\|_{p} := \E[|U|^p]^{1/p}$.
Note that $B = \|\beta_I\|_n$ where $I \sim q_\I$.
Then we can write, using this notation and using triangle inequality:
\begin{align*}
    \E_{q_\I}[(1+t \beta_I)^n] 
    &= \|1+t \beta_I\|_n^n
    \le \left(\|1\|_n + t \|\beta_I\|_n \right)^n
    = (1 + t B)^n.
\end{align*}
Therefore, continuing the bound above, we obtain:
\begin{align*}
    \frac{\Vol(\X_t)}{\Vol(\X)}
    &\le \frac{\sum_{j \in \I} \Vol(\X^j)}{\Vol(\X)} \cdot \left(\max_{i \in \I} \alpha_i \right) \cdot \E_{I \sim q_\I}\left[(1 + t \beta_I)^n \right] \\
    &\le \frac{\sum_{j \in \I} \Vol(\X^j)}{\Vol(\X)} \cdot \left(\max_{i \in \I} \alpha_i \right) \cdot (1 + t B)^n.
\end{align*}
This shows $\X$ satisfies the $(A,B)$-volume growth condition where $A = \frac{\sum_{j \in \I} \Vol(\X^j)}{\Vol(\X)} \cdot \left(\max_{i \in \I} \alpha_i \right)$ and $B = \|\beta_I\|_n$.
In particular, $B \le \max_{i \in \I} \beta_i$.
\end{proof}

\subsection{Volume growth condition under set exclusion}
\label{Sec:VolGrowthExclusion}

\begin{replemma}{Lem:VolGrowthExclusion}
Let $\Y \subset \R^n$ be a compact body that satisfies the $(\alpha,\beta)$-volume growth condition for some $\alpha \in [1,\infty)$ and $\beta \in (0,\infty)$. 
Let $\X = \Y \setminus \Z$, where $\Z \subset \Y$ is an open set with $\Vol(\Z) < \Vol(\Y)$, and assume $\X$ is compact.
Then $\X$ satisfies the $(A, \beta)$-volume growth condition where $A = \alpha \cdot \frac{\Vol(\Y)}{\Vol(\X)}$.
\end{replemma}
\begin{proof}
Since $\X \subseteq \Y$, we have $\X_t \subseteq \Y_t$ for all $t \ge 0$ where $\X_t = \X \oplus B_t$ and $\Y_t = \Y \oplus B_t$.
Thus, since $\Y$ satisfies the $(\alpha,\beta)$-volume growth condition,
\begin{align*}
    \Vol(\X_t) \le \Vol(\Y_t) \le \Vol(\Y) \cdot \alpha \cdot (1 + t\beta)^n.
\end{align*}
Dividing by $\Vol(\X)$ yields
\begin{align*}
    \frac{\Vol(\X_t)}{\Vol(\X)} \le \frac{\Vol(\Y)}{\Vol(\X)} \cdot \alpha \cdot (1+t\beta)^n.
\end{align*}
This shows that $\X$ satisfies the $(A,\beta)$-volume growth condition, where $A = \alpha \cdot \frac{\Vol(\Y)}{\Vol(\X)}$.
\end{proof}

\paragraph{Acknowledgment.} 
The authors thank Yunbum Kook for helpful discussions.

\newpage
\addcontentsline{toc}{section}{References}
\bibliography{uniform.bbl}

\end{document}